\theoremstyle{plain}
\newtheorem{theorem}{Theorem}
\newtheorem{lemma}[theorem]{Lemma}
\newtheorem{proposition}[theorem]{Proposition}
\newtheorem{corollary}[theorem]{Corollary}
\newtheorem{problem}[theorem]{Problem}
\theoremstyle{definition}
\newtheorem{definition}[theorem]{Definition}
\DeclareMathOperator{\id}{\mathbbm{1}}
\newcommand{\beq}{\begin{equation}}
\newcommand{\eeq}{\end{equation}}
\newcommand{\beqs}{\begin{equation*}}
\newcommand{\eeqs}{\end{equation*}}
\newcommand{\GPEa}{\operatorname{GPE}_\pm}
\newcommand{\RGPEa}{\operatorname{RGPE}_\pm}
\newcommand{\GPEpa}{\left|\operatorname{GPE}\right|_{\pm}^2}
\newcommand{\RGPEpa}{\left|\operatorname{RGPE}\right|_{\pm}^2}
\newcommand{\FBPP}{\mathsf{FBPP}}
\newcommand{\NP}{\mathsf{NP}}
\newcommand{\BPP}{\mathsf{BPP}}
\newcommand{\PH}{\mathsf{PH}}
\renewcommand{\P}{\mathsf{P}}
\newcommand{\poly}{\mathsf{poly}}
\newcommand{\G}{\mathcal{G}}
\newcommand{\Ord}{\textrm{O}}
\newcommand{\norm}{\mathcal{Z}}
\DeclareMathOperator\arctanh{arctanh}
\DeclareMathOperator\Per{Per}
\DeclareMathOperator\Haf{Haf}
\DeclareMathOperator\Tr{Tr}
\DeclareMathOperator\Var{Var}
\newcommand{\seq}[1]{\begin{subequations}#1\end{subequations}}
\newcommand{\eq}[1]{\begin{align}#1\end{align}}
\begin{document}
\title{The Complexity of Bipartite Gaussian Boson Sampling}
\author{Daniel Grier}
\affiliation{Institute for Quantum Computing, University of Waterloo, Canada}
\affiliation{Department of Computer Science and Engineering and Department of Mathematics, University of California, San Diego, US}
\author{Daniel J. Brod}
\affiliation{Instituto de F\'isica, Universidade Federal Fluminense,
 Niter\'oi, RJ, 24210-340, Brazil}
\author{Juan Miguel Arrazola}
\affiliation{Xanadu, Toronto, ON, M5G 2C8, Canada}
\author{Marcos Benicio de Andrade Alonso}
\affiliation{Instituto de F\'isica, Universidade Federal Fluminense,
 Niter\'oi, RJ, 24210-340, Brazil}
\author{Nicol\'as Quesada}
\affiliation{Department of Engineering Physics, \'Ecole Polytechnique de Montr\'eal, Montr\'eal, QC, H3T 1JK, Canada}

\begin{abstract}
Gaussian boson sampling is a model of photonic quantum computing that has attracted attention as a platform for building quantum devices capable of performing tasks that are out of reach for classical devices. There is therefore significant interest, from the perspective of computational complexity theory, in solidifying the mathematical foundation for the hardness of simulating these devices. We show that, under the standard Anti-Concentration and Permanent-of-Gaussians conjectures, there is no efficient classical algorithm to sample from ideal Gaussian boson sampling distributions (even approximately) unless the polynomial hierarchy collapses. The hardness proof holds in the regime where the number of modes scales quadratically with the number of photons, a setting in which hardness was widely believed to hold but that nevertheless had no definitive proof.
    
Crucial to the proof is a new method for programming a Gaussian boson sampling device so that the output probabilities are proportional to the permanents of submatrices of an arbitrary matrix.  This technique is a generalization of Scattershot BosonSampling that we call BipartiteGBS.  We also make progress towards the goal of proving hardness in the regime where there are fewer than quadratically more modes than photons (i.e., the high-collision regime) by showing that the ability to approximate permanents of matrices with repeated rows/columns confers the ability to approximate permanents of matrices with no repetitions.  The reduction suffices to prove that GBS is hard in the constant-collision regime.  
\end{abstract}
\maketitle


\section{Introduction}\label{sec:intro}

The quest for quantum computational advantage has given rise to a surprisingly fruitful relationship between computer science and physics: theorems provide a foundation for experiments, while practical considerations set challenges for new mathematics. Consider for instance the role of BosonSampling~\cite{aaronson2013}. It gave strong complexity-theoretic evidence that even a weak photonic device could perform a task that is classically intractable. This work motivated future experimental demonstrations~\cite{tillmann2013, spring2013boson, crespi2013integrated, broome2013photonic} that inspired new theoretical models~\cite{lund2014, hamilton2017}, which in turn resulted in further experiments~\cite{bentivegna2015experimental,wang2017high,zhong2019experimental}.

Gaussian boson sampling (GBS) is a paradigm where a Gaussian state of light is prepared, then measured in the photon-number basis~\cite{hamilton2017, kruse2019detailed,bromley2020applications}. This approach offers several benefits. First, Gaussian states can be prepared by a combination of squeezing, displacement, and linear interferometers, which can in principle be applied deterministically and implemented with nanophotonic integrated circuits~\cite{arrazola2021quantum}. This means they can potentially be mass-produced and scaled rapidly~\cite{wang2019integrated, vernon2018scalable}. Moreover, the inclusion of squeezing and displacements allows more versatility in programming GBS devices, a property that is leveraged in several GBS-based algorithms~\cite{huh2015boson, arrazola2018using, banchi2020molecular, jahangiri2020point, schuld2020measuring, jahangiri2020quantum, banchi2020training}.

GBS has already been used to claim quantum computational advantage~\cite{madsen2022quantum,zhong2020quantum, zhong2021phase}, and there are several more proposals for hard-to-simulate GBS experiments~\cite{lund2014, hamilton2017, deshpande2021quantum}.  That said, we believe these experiments reveal that significant progress is still required to bridge the gap between our theoretical hardness arguments and what is currently achievable in the lab.  For example, state-of-art GBS experiments consisted of 216 modes with up to 125 photons~\cite{madsen2022quantum} and  144 modes and at most 113 clicks \cite{zhong2021phase}, but all hardness arguments currently assume that the number of modes is at least quadratic in the number of photons, and sometimes worse.\footnote{This mismatch between theory and experiment occurs because of photon loss in the interferometer~\cite{garcia2019simulating,qi2020regimes}. For experiments requiring Haar random interferometers, each mode must be able to exchange light with every other mode since with probability one every entry of a random Haar unitary matrix is nonzero. In particular, since random $m$-mode interformeters are built from $\Theta(m^2)$ 2-mode beamsplitters \cite{reck1994,clements2016,de2018simple,bell2021further}, this implies that the depth of the circuit implementing the interferometer is proportional to the width. If there is a fixed transmission per beamsplitter layer $\eta\leq 1$ the total transmission of the interferometer will decay exponentially as $\eta^{m}$. It is then clear that it is more desirable to have $m$ scaling with $n$ and not $n^2$ to have a significant fraction of the photons arriving into the detectors.}

Furthermore, the underlying physics of GBS is such that the probability of a given output state is described by the hafnian, a matrix function similar, but not identical to, the permanent. Because of this, new conjectures tailored to this paradigm are sometimes required---see, for example, the Hafnian-of-Gaussians conjecture in \cite{hamilton2017} which parallels the Permanent-of-Gaussians conjecture in \cite{aaronson2013}.  While the goal of this paper is not to compare the merits of the individual conjectures, we do feel that having fewer standard conjectures is generally preferable.

This paper introduces \emph{Bipartite Gaussian Boson Sampling}\footnote{``Bipartite'' refers to the Husimi covariance matrix of our output states, which is characterized by a bipartite adjacency matrix.} (BipartiteGBS) as a new method for programming a GBS device which will begin to address some of these challenges.  The key property of BipartiteGBS experiments is that the output probabilities are proportional to the permanents of submatrices of \emph{arbitrary} matrices.  Contrasted with traditional BosonSampling, where the output probabilities are dictated by permanents of submatrices of a unitary matrix, BipartiteGBS can be seen as a powerful new tool on which to build hardness-of-simulation arguments. In particular, this paper will focus on the hardness of approximately sampling from BipartiteGBS distributions with a classical device.

As it turns out, our construction is a strict generalization of Scattershot BosonSampling~\cite{lund2014}, a different GBS setup for which the output probabilities are given by permanents of unitary matrices.  Because of this, the computational hardness of Scattershot BosonSampling can be rooted in the same conjectures on which the hardness of BosonSampling is based---namely, that Gaussian permanent estimation is $\#\P$-hard and that Gaussian permanents anti-concentrate. However, this also means that Scattershot BosonSampling inherits the same technical caveats of BosonSampling.  In particular, to guarantee that the $n \times n$ submatrices of an $m \times m$ unitary appear Gaussian, Aaronson and Arkhipov require that $m = \omega(n^5)$.  Therefore, all their hardness proofs are technically within this regime.  To be clear, it is widely assumed that $m = \omega(n^2)$ is sufficient, but to the authors' knowledge, there is currently no definitive proof.\footnote{Perhaps the closest result is that of Jiang \cite{jiang2006many} who shows that the $n \times n$ submatrices of $m \times m$ real orthogonal matrices converge in total variation to real Gaussian matrices whenever $m = \omega(n^2)$.  Unfortunately, Jiang does not bound the \emph{rate} of this convergence, which is required by the BosonSampling hardness arguments. Moreover (though perhaps less importantly), the BosonSampling arguments are based on the submatrices of complex unitary matrices rather than real orthogonal ones.  Finally, Jiang proves that $m = O(n^2)$ does not suffice, but we note that the results of this paper will hold even when $m = \Theta(n^2)$.}  Even more general Gaussian boson sampling protocols suffer from this problem, and recent approximate average-case hardness proofs for GBS must conjecture directly that quadratically-many modes suffice~\cite{deshpande2021quantum}.

The main technical contribution of this paper is to show that BipartiteGBS can be used to close this loophole.  That is, the hardness of GBS \emph{can} be based on the exact same set of conjectures as BosonSampling, while also working in the regime where the number of modes $m$ is quadratic in the expected number of photons $\braket{n}$.\footnote{Unlike BosonSampling where the number of photons is fixed, the number of photons $n$ in a GBS experiment is itself a random variable that is based on the squeezing parameters of the system.  Therefore, we cannot guarantee that the number of modes $m$ is actually quadratic in the number of photons $n$ since there may be some small probability for which there are many more/fewer photons.  Instead, what we can say is that the number of modes is quadratic in the \emph{expected} number of photons $\braket{n}$.  Coupled with bounds on the variance of $n$, we can conclude that the output distribution is dominated by output states which obey the $m = \Theta(n^2)$ relationship.}  Formally, we prove the following theorem:
\begin{theorem}
\label{thm:intro_main}
Suppose there is a classical oracle $\mathcal O$ which approximately samples from a BipartiteGBS distribution with $m = \Theta(\braket{n}^2)$.  Then, $\#\P \subseteq \FBPP^{\NP^\mathcal{O}}$ assuming the Permanent-of-Gaussians Conjecture and the Permanent Anti-Concentration Conjecture.
\end{theorem}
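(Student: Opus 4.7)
My plan follows the Aaronson--Arkhipov hardness template for BosonSampling, but leverages the new BipartiteGBS programmability to bypass the $m = \omega(n^5)$ bottleneck that plagues Haar-unitary constructions. The argument proceeds in four steps, of which the third is the novel technical heart.

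First, I would use the BipartiteGBS construction to set up a device whose output probability on a photon pattern $S$ of total photon number $n$ is proportional to $|\Per(A_S)|^2$ divided by a normalization $\norm$, where $A$ is an \emph{arbitrary} matrix that can be freely programmed into the device (up to some scaling required for physical realizability of the squeezing parameters) and $A_S$ is the $n\times n$ submatrix selected by $S$. This step is exactly what the paper advertises as the key enabling property of BipartiteGBS.

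Second, given the approximate sampler $\mathcal{O}$, I would invoke the standard Stockmeyer-style approximate counting argument in $\FBPP^{\NP^{\mathcal{O}}}$ to obtain, for any fixed $S$, a multiplicative estimate of $\Pr[S]$, and hence of $|\Per(A_S)|^2$, up to a $1+1/\poly$ factor on most outcomes $S$ under the sampler's distribution. Third---and this is the crux---I would carry out the ``hiding'' step: given a Gaussian target $X \sim \mathcal{N}(0,1)_{\mathbb C}^{n\times n}$ from the Permanent-of-Gaussians Conjecture, I would construct a random bipartite program $A$ on $m = \Theta(\braket{n}^2)$ modes whose entries are i.i.d.\ complex Gaussian with an appropriate variance, so that when a random outcome $S$ of typical weight $n \approx \braket{n}$ is drawn, the selected submatrix $A_S$ is statistically indistinguishable (in total variation) from $X$ up to a known rescaling. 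Because $A$ is arbitrary rather than Haar-unitary, a random $n \times n$ submatrix is literally Gaussian, and no $\omega(n^5)$ slack is needed; one only needs enough modes to guarantee that, with high probability over $S$, no row or column of $A$ is repeated, which is already true at $m = \Theta(n^2)$ by a birthday-style calculation.

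Fourth, I would close the loop: the multiplicative estimator for $|\Per(A_S)|^2$, combined with the Permanent Anti-Concentration Conjecture and a Markov/union bound over the $\ell_1$ error of $\mathcal{O}$, yields an additive-error estimator for $|\Per(X)|^2$ correct on a noticeable fraction of Gaussian $X$. The Permanent-of-Gaussians Conjecture then upgrades this to $\#\P$-hardness, proving $\#\P \subseteq \FBPP^{\NP^\mathcal{O}}$. The main obstacle I expect is step three: one must simultaneously (i) argue that the free choice of $A$ really permits embedding $X$ as a uniformly random $n\times n$ submatrix at $m = \Theta(\braket{n}^2)$, (ii) handle the fact that in GBS the total photon number $n$ is a random variable rather than fixed, which forces one to prove concentration of $n$ about $\braket{n}$ and to condition on a slice of typical total photon number without destroying the Gaussian distribution of $A_S$, and (iii) ensure that the normalization $\norm$ can be computed or cancelled within $\FBPP^{\NP^\mathcal{O}}$ so as not to contaminate the multiplicative estimate transmitted to the permanent.
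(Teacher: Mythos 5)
Your outline follows the same route as the paper: Stockmeyer counting on the sampler, hiding a Gaussian target $X$ as a random submatrix of a larger Gaussian transition matrix (which, as you note, trivializes the Hiding Lemma), and closing with the Permanent-of-Gaussians and Anti-Concentration conjectures via $\GPEpa$. But the step you defer as ``obstacle (iii)'' is not a loose end --- it is the entire technical content of the theorem, and without it the argument does not close. The Stockmeyer step delivers an \emph{additive} estimate of $\Pr(S)$ to error $\varepsilon/|\mathcal H_S|$, where $\mathcal H_S$ is the orbit of the hidden no-collision outcome under row/column permutations; since $\Pr(S) = |\Per X|^2/(\norm\, m^n \alpha^{2n})$, this translates into an additive estimate of $|\Per X|^2$ with error $\varepsilon\, \norm\, m^n\alpha^{2n}/|\mathcal H_S|$. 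For $\GPEpa$ you need this to be $O(\epsilon\, n!\cdot\poly)$, i.e.\ you must show that the ratio $\mathcal I = \norm\,\alpha^{2n} m^n/(|\mathcal H_S|\, n!)$ is polynomially bounded. With $|\mathcal H_S| = \binom{m}{n}^2$ and $\alpha^2 = m/n$ (the dilute-limit scaling), Stirling gives $\alpha^{2n} m^n/(\binom{m}{n}^2 n!) = \Theta(\sqrt n\, e^{-n})$, so everything hinges on showing $\norm \le \poly\cdot e^{n}$ with high probability over the Gaussian ensemble. The trivial bound from the support of the quarter-circle law, $\norm \le (1-4/\alpha^2)^{-m} \approx e^{4m/\alpha^2} = e^{4n}$, is \emph{not} good enough --- it leaves an $e^{3n}$ blowup in the error. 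The paper's \Cref{lem:boundZ} obtains the tight $e^{m/\alpha^2}$ scaling by writing $\norm = \exp\bigl(\sum_{k\ge 1} \Tr(A^k)/k\bigr)$ for the Wishart matrix $A = CC^\dagger$, controlling the $k\ge 2$ terms via the largest-eigenvalue tail bound, and bounding $\mathbb E[\exp(\Tr A)]$ exactly via the Wishart moment generating function. This random-matrix input, and the exact cancellation of $e^{n}$ against $e^{-n}$, is what your proposal is missing.

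Two smaller points. Your worry (ii) about conditioning on a typical photon-number slice is a red herring: in the paper's proof the outcome $S$ is \emph{fixed} to be the $2n$-photon no-collision pattern selecting the embedded copy of $X$, hidden only by permutations within each half of the modes; the concentration of $n$ around $\braket{n}$ (which the paper does prove separately) is never invoked in the reduction. Likewise, your birthday-style argument that random outcomes are collision-free is not what is needed --- the proof never requires the no-collision subspace to carry large probability mass (that is a \emph{consequence} of $\mathcal I = \poly$ together with anti-concentration, not a premise). Finally, be careful with ``multiplicative estimate of $\Pr[S]$'': Stockmeyer gives a multiplicative estimate only of the \emph{sampler's} probability $q_S$; the guarantee you actually obtain for the true $p_S$ is additive at scale $1/|\mathcal H_S|$, which is why the bound on $\mathcal I$, rather than anti-concentration of individual output probabilities, is the load-bearing step.
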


In other words, assuming the BosonSampling conjectures, there is no efficient classical algorithm for BipartiteGBS in this regime unless there is a collapse in the polynomial hierarchy that complexity theorists consider to be extremely unlikely.  Unsurprisingly, the proof of this theorem will leverage the fact that the output probabilities of BipartiteGBS experiments are based on permanents of arbitrary matrices.  Since the hardness of BosonSampling is based on Gaussian permanents anyway, an obvious-in-retrospect idea is to simply \emph{start} with those Gaussian matrices.  Note that, as in all photonic experiments, the probabilities are actually governed by submatrices of the transition matrix.  Clearly, however, the submatrices of a Gaussian matrix---i.e., a matrix for which each entry is an i.i.d.\ complex Gaussian number---are also Gaussian.  This trivializes the ``Hiding Lemma'' often required in other hardness arguments.\footnote{A notable exception is the proposal of Kruse et al.\ \cite{kruse2019detailed} which also circumvents traditional hiding arguments. Their approach is similar---program a GBS device with outputs proportional to hafnians of arbitrary symmetric matrices, which trivializes hiding symmetric matrices. While they conjecture a hardness argument in the regime where the number of photons is linear in the number of modes, our paper shows that the types of challenges that will be required to carry out the full complexity-theoretic argument.}

That said, the proof of \Cref{thm:intro_main} is not itself trivial.  In particular, we must prove that the BipartiteGBS experiments that we ask the classical oracle to simulate have sufficient probability mass on those outcomes which correspond to $\#\P$-hard permanents.  To do so, we bound important normalization factors in the output distribution using results from random matrix theory on ensembles of random Gaussian matrices.  While our proof unsurprisingly borrows many ideas from the original BosonSampling hardness argument~\cite{aaronson2013}, it can be entirely understood without direct reference to it, and we suspect that many will find our rigorous hardness proofs of GBS to be preferable to some in the existing literature.

To complement our formal analysis of the normalization factors (which are sufficient to obtain \Cref{thm:intro_main}), we present a more heuristic analysis of other aspects of our experiment that might be relevant to experimentalists hoping to claim quantum advantage (most likely with a more speculative set of conjectures).  In particular, we give formulas for the expectation and variance of the number of ``clicks'' in the output distribution---that is, the number of modes that contain at least one photon when measured.  Intuitively, this is an important quantity for hardness arguments since the probability of any particular outcome is proportional to a permanent of a matrix whose rank is equal to the number of clicks in that outcome.  Since classical intractability is tied to the complexity of these permanents, and we know that permanents of low-rank matrices have efficient classical algorithms \cite{barvinok1996two}, we would like to avoid distributions with low click rates.  Thankfully, we prove that this is generally not the case by showing that the expected number of clicks in a BipartiteGBS experiment with Gaussian transition matrices is the harmonic mean of the number of modes and expected number of photons:
\beq\label{eq:harmonic_mean}
\frac{2}{\frac{1}{\mathbb E[\braket{n}]} + \frac{1}{m}}.
\eeq
So, for example, even when we expect that there are only linearly-many more modes than photons, we have that the expected number of clicks is linear.  This formula is obtained by assuming that the singular values of a Gaussian matrix are drawn independently and exactly from the quarter-circle law.\footnote{It is only known that the entire distribution of singular values approaches the quarter-circle law.  See \Cref{subsec:gaussian} for more discussion.}  We present numerical evidence (see \Cref{fig:comparison}) showing that these formulas accurately predict the click distributions of random BipartiteGBS experiments.

Finally, we make preliminary progress towards a GBS hardness result in the regime where there are fewer than quadratically many more modes than photons.\footnote{It is worth noting that for the task of \emph{exact} sampling, it was already known that BosonSampling is hard in the $m = o(n^2)$ regime, and in fact, even if $m=n$. Specifically, Grier and Schaeffer give a BosonSampling sampling experiment in the $m=n$ regime for which a particular output probability is $\#\mathsf{P}$-hard to approximate \cite{gs:unitary_permanent}. Combined with the Stockmeyer counting arguments of \cite{aaronson2013}, this shows classical intractability of the exact sampling task predicated on the non-collapse of the polynomial hierarchy.}  In this regime, we can no longer guarantee that there is at most one photon per mode.  These photon collisions imply that the output probabilities are no longer described by permanents of simple submatrices, but rather by submatrices which have some rows and columns repeated.  To this end, we define a new ``Permanent-of-Repeated-Gaussians'' problem for which the goal is to approximate such permanents.  We provide numerics suggesting that a classical simulation of GBS in the high-collision regime can be leveraged to solve the Permanent-of-Repeated-Gaussians problem based on a Stockmeyer counting argument similar to that in \Cref{thm:intro_main}.  In other words, we could show approximate average-case hardness for GBS in the $m = o(\braket{n}^2)$ regime if we made the following assumptions: the $\#\P$-hardness of the Permanent-of-Repeated-Gaussians problem; a plausible conjecture in random matrix theory.  We caution that these assumptions remain relatively unexplored.

As a first step towards understanding the Permanent-of-Repeated-Gaussians, we show that there is some sense in which we can reduce arbitrary matrix permanents to permanents of matrices with repeated rows and columns:

\begin{theorem}
\label{thm:reduction_intro}
Given an oracle $\mathcal O$ that can approximate $\Per(B)$ for any matrix $B \in \mathbb C^{n \times n}$ that has $k$ row/column repetitions to additive error $\epsilon$, there is an $\FBPP^{\mathcal O}$ algorithm that can approximate $\Per(A)$ for arbitrary matrices $A \in \mathbb C^{(n-k) \times (n-k)}$ to additive accuracy $O(\epsilon / 1.498^k)$.
\end{theorem}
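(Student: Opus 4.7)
The plan is to build, from an arbitrary input $A \in \mathbb{C}^{(n-k) \times (n-k)}$, an explicit $n \times n$ matrix $B$ with exactly $k$ row and column repetitions such that $\Per(A)$ can be recovered from one (or a few) oracle calls on $B$, with a controlled exponential-in-$k$ loss in additive precision. My approach is iterative: I would add one repetition at a time. For a single step, given an $m \times m$ matrix $\tilde{A}$, I construct an $(m+1) \times (m+1)$ matrix $\tilde{B}$ by adjoining one extra row equal to an existing row of $\tilde{B}$ and one extra column equal to an existing column, parameterized by the entries of the added column and the scalar at the intersection of the repeated row and column.

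Expanding $\Per(\tilde{B})$ along the duplicated row yields a cofactor identity that expresses it as a weighted sum of $\Per(\tilde{A})$ and of permanents of smaller cofactor submatrices of $\tilde{A}$. Choosing the padding parameters so that the coefficient of $\Per(\tilde{A})$ is as large as possible in absolute value, this identity can be inverted to solve for $\Per(\tilde{A})$ in terms of $\Per(\tilde{B})$ and the cofactor corrections, at the price of a constant-factor additive-error loss $\gamma \leq 1.498$ per step.

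Composing $k$ such steps yields an explicit $n \times n$ matrix $B$ with exactly $k$ row and column repetitions, together with a linear formula for $\Per(A)$ in terms of $\Per(B)$ and a collection of auxiliary permanents of strictly smaller submatrices of $A$. An oracle call on $B$ handles the first term; the auxiliary permanents can be handled recursively (they correspond to matrices of strictly smaller dimension than $A$) or absorbed into the inversion coefficients. Because the inversion at each step can become ill-conditioned for adversarial inputs, the padding parameters have to be perturbed randomly, placing the final algorithm in $\FBPP^{\mathcal O}$. Composing the single-step losses multiplicatively gives the claimed $O(\epsilon / 1.498^k)$ precision on $\Per(A)$.

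The main obstacle is pinning down the single-step constant $\gamma$ to the specific value $\approx 1.498$. This reduces to an extremal problem over the choice of padding at one step: maximize the coefficient of $\Per(\tilde{A})$ relative to the condition number of the inversion, while keeping the cofactor correction terms controlled. I expect the optimum to be close to $3/2$ by a balance-of-scales argument, with the precise $1.498$ emerging from a careful extremization. A secondary difficulty is bookkeeping: the intermediate matrices across the $k$ iterations have increasing numbers of repetitions, and one must verify that each padding step produces \emph{exactly} the intended repetition count rather than accidentally creating additional unintended duplications that would invalidate the oracle's guarantee at the final step.
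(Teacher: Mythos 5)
Your single-step cofactor identity cannot be inverted with the resources you allow yourself. Expanding $\Per(\tilde B)$ along the duplicated row expresses it as $c_0\Per(\tilde A)$ plus a weighted sum of permanents of $(m-1)\times(m-1)$ cofactor submatrices of $\tilde A$; those correction terms are unknown quantities depending on $A$, so they can neither be ``absorbed into the inversion coefficients'' nor handled recursively --- each recursion level spawns $\Theta(m)$ subproblems and the base cases are still $\#\P$-hard permanents, so the recursion is both exponentially branching and circular. The paper supplies exactly the cancellation mechanism you are missing, and it is a single global construction rather than an iterative one (\Cref{thm:PerAsPoly}): the $k_S$ appended columns and $k_T$ appended rows carry a formal variable $z$ placed so that \emph{every} monomial of $\Per(B_{S,T})$ that does not reproduce a monomial of $\Per(A)$ acquires at least one factor of $z$. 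This yields $\Per(B_{S,T}) = \xi\Per(A) + \gamma_1 z + \cdots + \gamma_k z^k$, and averaging $k+1$ oracle calls at the $(k+1)$th roots of unity annihilates all unwanted coefficients exactly, with no error amplification whatsoever (\Cref{lemma:poly_interpolation1}); one then divides by $\xi$. (Setting $z=0$ would also kill the corrections in one shot, but the resulting query matrix is far from Gaussian, which defeats the intended use against an $\RGPEa$ oracle --- the paper makes this point explicitly.)

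Relatedly, the constant $1.498$ is not the output of an extremization over padding choices, and ``choose the padding to maximize the coefficient'' is at odds with the reduction's real constraint: the padding entries must be i.i.d.\ standard complex Gaussians so that $B$ stays in the Gaussian ensemble the oracle is promised to handle. With that constraint, $\xi = \prod_i s_i!\,t_i!\,\prod|x|\,\prod|y|$ is a random variable: the factorials contribute at least $2^k$, each of the $k$ Gaussian magnitudes has geometric mean $e^{-\gamma_{\mathrm{EM}}/2}\approx 0.7493$ (Euler--Mascheroni constant), and a Berry--Esseen argument (\Cref{thm:xi_bound}) gives $|\xi|\ge (2\cdot 0.749)^k = 1.498^k$ with probability $1/2 - O(1/\sqrt{k})$ (boostable by resampling the padding). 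This probabilistic lower bound on $|\xi|$ --- not the conditioning of a linear inversion --- is simultaneously the source of the exponential \emph{gain} in accuracy $\epsilon\mapsto\epsilon/|\xi|$ (note the theorem asserts an improvement, not a loss composed over steps) and the reason the algorithm lands in $\FBPP^{\mathcal O}$ rather than being deterministic.
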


The reduction in this theorem has the additional nice property that if the matrix $A$ is Gaussian, then the oracle is only queried on matrices $B$ that are also Gaussian.  This makes it an ideal candidate for use in a hardness reduction because we can only require a classical simulator to sample from distributions that we can sample from quantumly.  Unfortunately, while we are getting an exponential improvement in the accuracy to $\Per(A)$, we show that it is still insufficient to conclude the $\#\P$-hardness of Permanent-of-Repeated-Gaussians from the usual Permanent-of-Gaussians conjecture.  That said, if we assume that there are only constantly-many collisions we \emph{can} show such a reduction.  Furthermore, in this regime we can work directly with the magnitude of the permanent, avoiding the need for an additional anti-concentration conjecture.
\begin{theorem}
\label{thm:reduction_constant_collisions}
Given an oracle $\mathcal O$ that can approximate $|\Per(B)|^2$ for any matrix $B \in \mathbb C^{n \times n}$ that has $k$ row/column repetitions to additive error $\epsilon$, there is an $\FBPP^{\mathcal O}$ algorithm that can approximate $|\Per(A)|^2$ for arbitrary matrices $A \in \mathbb C^{(n-k) \times (n-k)}$ to additive accuracy which is exponential in $k$, but polynomial in $n$ and $\epsilon$.
\end{theorem}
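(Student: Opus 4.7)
The plan is to give a direct reduction that embeds $A$ as a diagonal block of a larger, redundantly padded matrix. Given $A \in \mathbb{C}^{(n-k) \times (n-k)}$, I would form $B \in \mathbb{C}^{n \times n}$ as the block-diagonal matrix $B = A \oplus J$, where $J$ is the $k \times k$ all-ones matrix (up to the minor dimensional bookkeeping in the exact definition of ``$k$ row/column repetitions''). Every row of $J$ is identical and every column of $J$ is identical, so the last $k$ rows of $B$ coincide and so do the last $k$ columns; $B$ therefore fits the oracle's repetition-structure requirement.

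Because the permanent is multiplicative on direct sums, we have
\[
\Per(B) \;=\; \Per(A)\cdot\Per(J) \;=\; k!\,\Per(A),
\]
and hence the key identity $|\Per(B)|^2 = (k!)^2\,|\Per(A)|^2$. The reduction then simply queries the oracle $\mathcal{O}$ on input $B$ to obtain $\widehat{P}$, an additive $\epsilon$-approximation of $|\Per(B)|^2$, and outputs $\widehat{P}/(k!)^2$ as the estimate of $|\Per(A)|^2$. The resulting additive error is $\epsilon/(k!)^2 \leq \epsilon\cdot 2^{-\Omega(k\log k)}$, which is (super-)exponentially small in $k$, linear in $\epsilon$, and independent of $n$---comfortably inside the accuracy window promised by the theorem. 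Crucially, and unlike the analogous reduction for $\Per$ in the previous theorem, no anti-concentration argument is needed anywhere: $|\Per(A)|^2$ is already a non-negative real scalar, so the relative-versus-absolute-error tensions that normally force the Permanent Anti-Concentration Conjecture into the picture simply do not arise.

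The only real subtlety---inessential to the theorem as stated, but central to the downstream hardness applications---is that $B = A \oplus J$ is not Gaussian even when $A$ is, so the reduction as written would not parallel the ``Gaussian-in, Gaussian-out'' property highlighted for the previous theorem. The standard workaround is polynomial interpolation: build a one-parameter family $B(t)$ that has $k$ repetitions for every $t$, is Gaussian-with-repetitions distributed for generic $t$, and collapses to $A \oplus J$ in some limit. Since $t$ enters only through the off-diagonal blocks, and any permutation touches those at most $O(k)$ times, $|\Per(B(t))|^2$ is a polynomial in $t$ of degree $O(k)$, which can be recovered from $O(k)$ oracle queries by Lagrange interpolation. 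The main technical obstacle is bounding the extrapolation error: for a degree-$O(k)$ polynomial sampled on a bounded interval and extrapolated a bounded distance, the Lebesgue constant gives an amplification of $\poly(n)\cdot 2^{O(k)}$, which is exactly the ``exponential in $k$, polynomial in $n$ and $\epsilon$'' accuracy claimed in the theorem statement.
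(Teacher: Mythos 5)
Your block-diagonal reduction is a correct proof of the theorem as literally worded: if the oracle genuinely works on \emph{any} matrix with $k$ repetitions, then $B = A \oplus J$ with $\Per(B) = k!\,\Per(A)$ settles the matter in one query. But this observation mostly reveals that the informal statement undersells the intended content. The theorem is a paraphrase of \Cref{thm:GPEpa_to_RGPEpa}, a reduction from $\GPEpa$ to $\RGPEpa$, in which the oracle is only guaranteed to succeed with probability $1-\delta$ over \emph{Gaussian} matrices with the prescribed repetition pattern (it arises from Stockmeyer counting on a GBS sampler). Against such an oracle the input $A \oplus J$ is useless---the oracle may simply fail on the measure-zero set of block-diagonal inputs---so the whole difficulty, as you correctly note at the end, is to make the reduction ``Gaussian-in, Gaussian-out.''

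Your interpolation sketch is the right strategy and is essentially what the paper does (via \Cref{thm:PerAsPoly} together with the lossy-BosonSampling machinery of Aaronson--Brod), but it glosses over precisely the steps that produce the exponential-in-$k$ loss. First, the one-parameter family $B[z;x,y]$ is Gaussian only at $|z|=1$, not ``for generic $t$''; for real $z \neq 1$ the distribution is deformed, and to keep it within total variation distance $O(\delta)$ of the Gaussian ensemble (so that the oracle still answers reliably) one must confine $z$ to an interval of half-width $\gamma = O\bigl(\delta/\sqrt{(c-1)k + k_S k_T}\bigr)$. Second, $|\Per(B_{S,T})|^2$ is a degree-$2k$ polynomial in real $z$ whose value at $z=0$ must be recovered by extrapolating from that shrinking interval; the resulting amplification is of order $\gamma^{-2k} = (nk/\delta)^{O(k)}$, i.e.\ $n^{O(k)}$---polynomial in $n$ only because $k = O(1)$---and not the $\poly(n)\cdot 2^{O(k)}$ your Lebesgue-constant heuristic suggests. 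Third, the constant coefficient is $|\xi|^2\,|\Per(A)|^2$ where $\xi$ is a product of $k$ independent Gaussian magnitudes (a deterministic collapse to $A \oplus J$ is incompatible with keeping the family Gaussian), so one must additionally lower-bound $|\xi|$ with probability $1-\delta$, contributing a further $k^k/\delta^k$ factor. Your remark that no anti-concentration conjecture is needed because one works directly with $|\Per|^2$ is correct and is exactly the paper's stated motivation for preferring this second interpolation route over the roots-of-unity one.
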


This theorem follows almost immediately by combining \Cref{thm:reduction_intro} with the polynomial interpolation techniques used to prove the classical hardness of BosonSampling with constantly-many lost photons \cite{AaronsonBrod}.  

The rest of this paper is organized as follows: \Cref{sec:GBS} provides a brief introduction to GBS as well the BipartiteGBS protocol for programming GBS devices according to arbitrary transition matrices. It then states important properties of BipartiteGBS with Gaussian transition matrices: two lemmas concerning photon-number statistics and normalization constants (proofs in \Cref{app:bounds}), and analytical formulas for the distribution of click statistics backed up by numerics (proofs in \Cref{app:clicks}). \Cref{sec:hardness_proof} contains the proof that classical simulation of BipartiteGBS in the no-collision regime is hard (\Cref{thm:intro_main}).  \Cref{sec:collision_reduction} deals with BipartiteGBS in the collision regime and contains proofs of theorems~\ref{thm:reduction_intro} and~\ref{thm:reduction_constant_collisions} (though proofs of some important lemmas in \Cref{app:lemma_proofs}).  Numerical evidence for extending our arguments beyond the dilute limit are given in \Cref{sec:beyond_dilute}.


\section{BipartiteGBS: Gaussian boson sampling with arbitrary transition matrices}\label{sec:GBS}

\subsection{Gaussian boson sampling introduction} \label{subsec:model}

\begin{figure}
\centering
    \includegraphics[width=0.75\columnwidth]{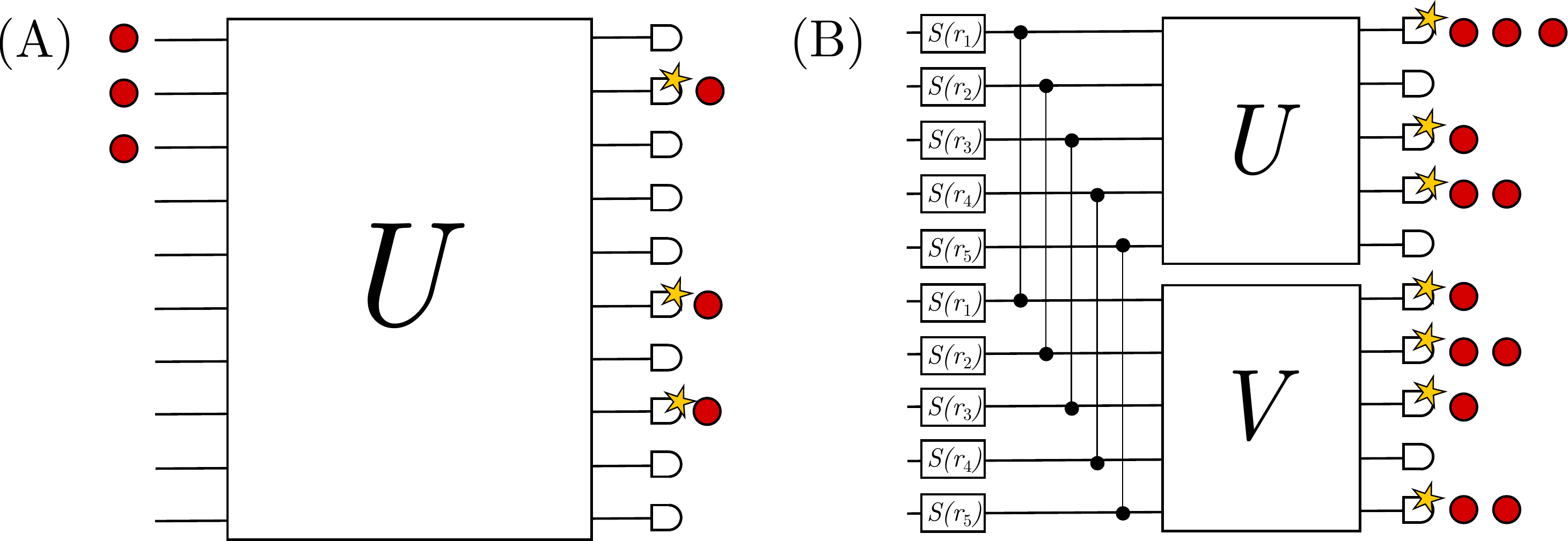}
    \caption{(A) Standard BosonSampling. The input to the interferometer are single photon states, which are detected at the output. (B) BipartiteGBS setup to encode arbitrary matrices. For a matrix $C$ with singular value decomposition $C = U \text{diag}(\sigma_i)V^T$, two-mode squeezed states with squeezing parameter $r_i = \tanh^{-1}(\sigma_i)$ are generated. These are sent to independent interferometers applying the unitaries $U$ and $V$, respectively.}
    \label{Fig: GBS}
\end{figure}

Gaussian boson sampling is a model of photonic quantum computation where a multi-mode Gaussian state is prepared and then measured in the photon-number basis~\cite{hamilton2017}. Gaussian states receive this name because their Wigner function---a quasi-probability representation of quantum states of light---is a Gaussian distribution~\cite{gqi2012}. We consider pure Gaussian states without displacements, which can be prepared from the vacuum by a sequence of single-mode squeezing gates followed by linear interferometry.

In contrast to BosonSampling, which uses single photons, GBS employs squeezed states as the input to the linear interferometer. In terms of the creation and annihilation operators $a_i$ and $a_i^\dagger$ on mode $i$, a squeezing gate is given by $S(r_i)=\exp[r_i(a_i^{\dagger 2}-a_i^2)/2]$, where $r_i$ is a squeezing parameter. A squeezed state can be prepared by applying a squeezing gate to the vacuum. A linear interferometer transforms the operators as
\beq
a_i \rightarrow a_i' = \sum_{j} U_{i j} a_j,
\eeq
where $U$ is a unitary matrix.

Let $S=(s_1, s_2, \ldots, s_{2m})$ encode a measurement outcome on $2m$ modes, where $s_i$ is the number of photons in mode $i$. The probability of observing sample $S$ when measuring a pure Gaussian state in the photon-number basis is given by \cite{hamilton2017}:
\beq \label{Eq: GBS_dbn}
\Pr(S) = \frac{1}{\norm}\frac{\left|\Haf(A_S)\right|^2}{s_1!s_2!\cdots s_{2m}!}.
\eeq
Here
\begin{align}
    \norm &= \prod_{i=1}^{2m} \cosh(r_i),\\
    A &= U\,\text{diag}(\sigma_1, \sigma_2, \ldots, \sigma_{2m})U^T,\label{Eq: A}\\
    \sigma_i &= \tanh(r_i), \hspace{0.3cm} 0\leq \sigma_i <1
\end{align}
where $r_i$ is the input squeezing parameter in the $i$th mode, $U$ is the unitary describing the interferometer, and $\Haf(\cdot)$ is the hafnian.\footnote{For a $2n \times 2n$ symmetric matrix $A$, $\Haf(A) = \frac{1}{n! 2^n} \sum_{\sigma \in \mathrm{S}_{2n}} \prod_{i=1}^n A_{\sigma(2i-1),\sigma(2i)}$.  See references~\cite{caianiello1953quantum, barvinok2016combinatorics, bjorklund2018faster} for more detailed discussion of the hafnian and its complexity.} The matrix $A_S$ is constructed from $A$ by repeating the $i$th row and column of $A$ $s_i$-many times (e.g., if $s_i = 0$, both the corresponding row and column are removed entirely). Notice that \Cref{Eq: A} implies that $A$ is symmetric.

The total mean photon number in the distribution is given by~\cite{jahangiri2020point}
\beq \label{Eq: mean_n}
 \sum_{i=1}^{2m}\frac{\sigma_i^2}{1-\sigma_i^2}.
\eeq
For future convenience we introduce $n$ as the random variable describing the total number of \emph{photon pairs} in a given sample. Its quantum-mechanical expectation is simply given by
\beq
\braket{n} = \frac12 \sum_{i=1}^{2m}\frac{\sigma_i^2}{1-\sigma_i^2}.
\eeq

Notice that it is possible to choose a parameter $\lambda>0$ and perform a rescaling $A_{\lambda} := \lambda A$ so that the singular values of $A_\lambda$ become $\lambda \sigma_i$.  One can check that the mean number of photon pairs $\braket{n_\lambda}$ is continuous for $\lambda \in [0, 1/\sigma_{\mathrm{max}})$ and grows arbitrary large, and so it is possible to set $\lambda$ such that $\braket{n_\lambda}$ is any desired non-negative number.

\subsection{BipartiteGBS}
We now introduce BipartiteGBS, a specific strategy for programming a GBS device such that the resulting distribution depends on an arbitrary transition matrix, not just a symmetric one. This scheme is illustrated in \Cref{Fig: GBS}. First, we construct a device with $2m$ modes and generate photons by applying two-mode squeezing gates to modes $i$ and $i+m$ for $i=1,2,\ldots,m$. The two-mode squeezing gate is defined as $S_2(r_i)=\exp[r_i(a_i^{\dagger}a_{i+m}^{\dagger}-a_i a_{i+m})]$. It can be decomposed as two single-mode squeezing gates with identical parameters (i.e., $r_i = r_{i+m}$) followed by a 50:50 beamsplitter. The subsequent interferometer is configured by applying a unitary $U$ to the first $m$ modes and a separate unitary $V$ to the second half of the modes. In this sense, this construction is a generalization of Scattershot BosonSampling~\cite{lund2014} and Twofold Scattershot BosonSampling~\cite{chakhmakhchyan2017boson}. In the former the second interferometer is fixed to $V=\id$ and $r_i = r$ for all the two-mode squeezers, while in the latter only the squeezing parameters are fixed.

In this setup, the GBS distribution is also given by \Cref{Eq: GBS_dbn}, but in this case the $A$ matrix satisfies
\begin{align}
    A &= \begin{pmatrix}
    0 & C \\
    C^T & 0
\end{pmatrix},\label{Eq:A=A(C)}    \\
C &= U \, \text{diag}(\tanh r_i ) V^T.
\end{align}

The expression for $C$ is equivalent to the singular value decomposition $C=U \Sigma W^\dagger$ of an arbitrary complex matrix with singular values $\sigma_i=\tanh(r_i)\in[0,1)$, where we simply set $V^T=W^\dagger$. Thus, it is possible to choose $C$ to be an arbitrary complex matrix, up to a rescaling $C\rightarrow \lambda C$ for some appropriate $\lambda>0$ such that the singular values satisfy $\sigma_i\in[0,1)$.

\begin{figure}[t!]
\centering
    \includegraphics[width=0.75\columnwidth]{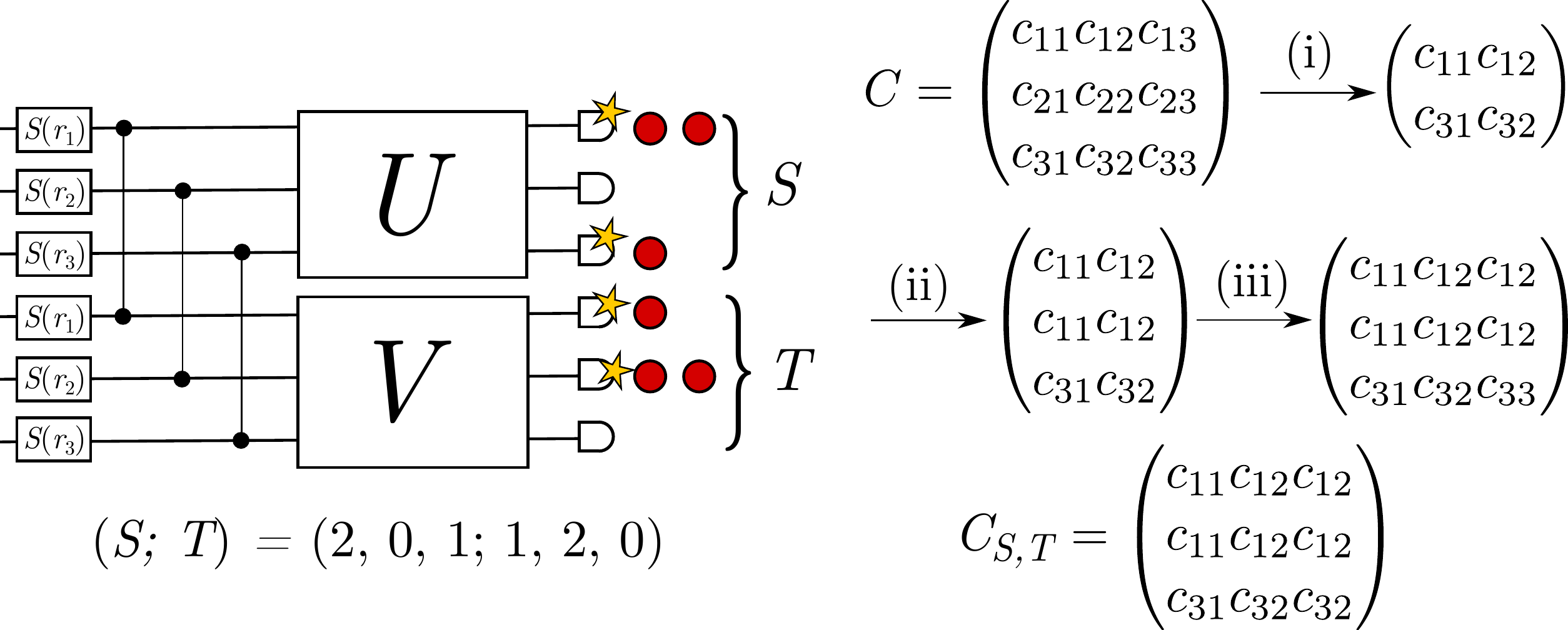}
    \caption{Constructing submatrices $C_{S,T}$ of the GBS distribution in \Cref{Eq: main_dbn}. In this example, a device with $2m=6$ modes is configured according to an $m$-dimensional matrix $C= U \, \text{diag}(\tanh r_i ) V^T$. The output is the photon pattern $(S;T)=(2,0,1;1,2,0)$, where $S$ determines the rows of the submatrix and $T$ determines the columns. In step (i), since $s_2=0$ and $t_3=0$, we remove the second row and the third column. In step (ii), because $s_1=2$, we repeat the first row twice and finally in step (iii) since $t_2=2$, we repeat the second column twice to arrive at the submatrix $C_{S,T}$. }\label{fig:C_ST_example}
\end{figure}

The resulting output distribution can be more elegantly expressed directly in terms of the matrix $C$, which we refer to as the \emph{transition matrix}. We introduce the notation $(S;T)=(s_1,\ldots,s_m;t_1,\ldots, t_m)$ to denote a sample. Here $s_i$ is the number of photons in mode $i$ and $t_i$ is the number of photons in mode $i+m$, for $i=1,2,\ldots,m$. From \Cref{Eq:A=A(C)}, the outcomes $S$ determine the rows of $C$ and the columns of $C^T$ that are kept or repeated when defining the submatrix $A_S$. Similarly, the outcomes $T$ determine the columns of $C$ and the rows of $C^T$. With this in mind, we employ the identity
\begin{align*}
\Haf \left[ \begin{pmatrix}
    0 & C \\
    C^T & 0
\end{pmatrix} \right] = \Per(C),
\end{align*}
to express the GBS distribution as:
\beq \label{Eq: main_dbn}
\Pr(S;T) = \frac{1}{\norm} \frac{\left\lvert \Per(C_{S,T})\right\lvert ^2}{\prod_i s_i! \prod_j t_j!}.
\eeq
The notation $C_{S,T}$ corresponds to a submatrix obtained as follows: if $s_i=0$, the $i$th row of $C$ is removed. If $s_i>0$, it is instead repeated $s_i$ times. Similarly, if $t_i=0$, the $i$th column of $C$ is removed and if $t_i>0$, it is repeated $t_i$ times.  See \Cref{fig:C_ST_example} for an example.

Since the permanent is only defined for square matrices, the number of photons detected in the first half of the modes should be equal to the total number of photons detected in the second set of modes, i.e., $\sum_{i=1}^m s_i = \sum_{i=1}^m t_i = n$. Physically, this corresponds to the action of the two-mode squeezing gate, which generates pairs of photons such that every photon in the first $m$ modes has a twin photon in the remaining $m$ modes. This observation, or the fact that $r_i = r_{i+m}$, allows us to write the expectation of the number of pairs as $\braket{n} = \sum_{i=1}^{m}\frac{\sigma_i^2}{1-\sigma_i^2}$, where the sum only extends to $m$.

\Cref{Eq: main_dbn} is the starting point for the computational problem we consider. In this formulation, our GBS construction is almost identical to a standard BosonSampling setup~\cite{aaronson2013}. In both cases, probabilities are given in terms of the permanents of submatrices constructed in the same manner. The main difference is that, in our case, we employ $2m$ modes to encode an arbitrary $m\times m$ complex matrix, whereas the corresponding matrix in BosonSampling must be unitary, namely equal to the matrix that describes the interferometer. Another crucial difference is the normalization factor $\norm$. It is necessary to account for the fact that the space of outcomes includes events with different total photon numbers, and it will influence the behaviour of errors in our final result in \Cref{sec:hardness_proof}.

\subsection{BipartiteGBS with Gaussian matrices} \label{subsec:gaussian}

As discussed above, it is possible to encode an arbitrary matrix in the GBS output distribution. In this section, we specialize this to the case of Gaussian random matrices. Let $\mathcal N(\mu, \Sigma^2)^{m \times m}_\mathbb C$ be the distribution over $m \times m$ matrices whose entries are independent complex Gaussians variables with mean $\mu$ and variance $\Sigma^2$. We choose $C \sim \mathcal N(0, \Sigma^2)^{m \times m}_\mathbb C$, for $\Sigma$ to be specified shortly.

By choosing $C$ to be Gaussian, this mirrors the case of BosonSampling \cite{aaronson2013}, where sufficiently small submatrices of uniformly Haar random unitaries are approximately also Gaussian. Therefore, this will allow us to support our hardness-of-simulation result on the same set of conjectures. However, in our construction, any $n \times n$ submatrix of $C$ is also Gaussian, for any scaling between $m$ and $n$. Contrast this with BosonSampling, where requiring submatrices to be approximately Gaussian formally constrains the number of modes to be much larger than the number of photons---rigorously, $m = \omega(n^5)$. We now prove a few important facts about GBS with Gaussian matrices.

First, several important quantities, such as the squeezing parameters, the normalization constant $\norm$, and the mean photon number $\braket{n}$, depend on the list of singular values of $C$, which we denote by $\{\sigma_i\}_{i=1\ldots m}$. For $m$-dimensional random complex matrices with mean $0$ and variance $1/m$, in the asymptotic limit $m \rightarrow \infty$, the distribution $p(\sigma)$ of singular values converges to
\beq
p(\sigma) = \frac{1}{\pi} \sqrt{4-\sigma^2}.
\eeq
This is the quarter-circle law for random matrices~\cite{shen2001singular}. Importantly, this result states that singular values are constrained within a finite interval, in this case $\sigma\in[0,2]$, with high probability. The probability that the largest singular value is greater than $2+\epsilon$ decays exponentially as $m \exp(-m \epsilon^2/8)$ \cite{haagerup_thorbjrnsen:2003}. See also \Cref{lem:max_eigen_wishart} in \Cref{subapp:boundN2}. For now, we simply assume that the singular values are in fact within this range, though we return to this issue in the next section.  The above equation holds as the limit for the empirical distribution over singular values of $C$, and therefore it also corresponds to the limit marginal distribution satisfied by any single $\sigma_i$. However, we cannot assume in general that the $\sigma_i$ are drawn \emph{independently} from it.

Recall from \Cref{Eq: mean_n} that for $C$ to be amenable to encoding in the GBS device, its largest eigenvalue must lie in the range $[0,1)$. Furthermore, it will be useful to tune the relation between $m$ and $\braket{n}$. To address both of these issues, we rescale the matrix by a further factor of $1/\alpha$, for some $\alpha > 2$. Alternatively, we can choose $C$ from $\G:=\mathcal N(0, 1/(\alpha^2 m))^{m \times m}_\mathbb C$. By doing this, the limiting distribution for the singular values in the interval $[0,2/\alpha]$ is
\beq \label{eq:quartercircle}
p_\alpha(\sigma) = \frac{\alpha}{\pi} \sqrt{4-\alpha^2 \sigma^2}.
\eeq
We now consider the typical behaviour of two quantities that will be important for our main result. The first is the number of photon pairs $n$. The size of the matrix permanent associated with an output probability---and hence its complexity---is directly determined by the number of photons observed in a given experimental run. However, in GBS the total photon number is not fixed, and the fluctuations in photon number depend on the matrix $C$ via its singular values. Therefore, it will be important to prove that fluctuations around the mean photon number are small enough so that our main argument is stable.

The other important quantity is the normalization constant $\norm$, which appears as a multiplicative factor between the matrix permanent and the output probabilities. For this reason, it will directly affect the error bounds of our main results, and we need to prove that it is typically not too large.

In the remainder of this section we give an intuitive analysis of these quantities, together with suitably formal bounds that are proven in \Cref{app:bounds}.

\subsubsection{Fluctuations of the total photon number} \label{subsubsec:photonumber}

From Eq.~\eqref{eq:quartercircle}, we can compute the expected mean number of photon pairs as we vary\footnote{Because it can sometimes be confusing, let us reiterate here that we use $\braket{\cdot}$ to denote the ``quantum-mechanical average'', i.e.\ average over the photon number distribution for given transition matrix, and $\mathbb E[\cdot]$ to denote the expected value over the Gaussian ensemble.  We use $\Delta^2[\cdot]$ and $\Var[\cdot]$ for the ``quantum-mechanical variance'' and variance due to the Gaussian ensemble, respectively.} $C$ over the Gaussian ensemble:
\beq \label{mean_scaling}
\mathbb E[\braket{n}] =  \sum_{i=1}^m \mathbb E\left[\frac{\sigma_i^2}{1-\sigma_i^2} \right] = \frac{m}{2} \left(\alpha^2\left(1-\sqrt{1-4/\alpha^2}\right)-2\right).
\eeq
Throughout our main argument, we usually consider a regime where $m$ scales faster than $\Omega(n)$, i.e., as $\Theta(n^2)$. This corresponds to a regime where $\alpha$ is large  and we can write $\alpha^2(1-\sqrt{1-4/\alpha^2})-2\approx\alpha^2(1-1+2/\alpha^2+2/\alpha^4)-2=2/\alpha^2$. Therefore in this regime we have
\beq
\mathbb E[\braket{n}] \approx  m/\alpha^2.
\eeq
For instance, if we want the GBS device to operate in a regime where $m = c \mathbb E[\braket{n}]^2$ for some $c>0$, it suffices to choose $\alpha=(cm)^{1/4}= \Theta(m^{1/4})$. We will refer to this regime from now on as the \emph{dilute limit}.

Even assuming the singular values follow the quarter-circle distribution exactly, computing the expectation of $\braket{n}$ is not enough. The complexity implied in Eq.~\eqref{Eq: main_dbn} depends on the observed number of photons, not on $\braket{n}$. Therefore we must prove that, with high probability, $n$ is not so far from its expectation as to invalidate our conclusions. We show the following formal bound, which follows from \Cref{thm:<n>_variance} and \Cref{thm:n_variance} in \Cref{app:bounds}:
\begin{lemma}\label{lem:boundn}
For any $\delta > 0$, we have
\begin{align*}
&\Pr\left[\left|\langle n \rangle - \frac{m}{\alpha^2}\right| \ge \frac{512 m}{\alpha^4} + \frac{1}{\alpha^2}\sqrt{\frac{2}{\delta}}\right] \le \delta
\\
&\Pr\left[ \left|n - \frac{m}{\alpha^2}\right| \ge  \frac{2 \sqrt{m}}{\alpha \sqrt{\delta}} +  \frac{3}{\alpha \delta^{3/4}} + \frac{84 \sqrt{m}}{\alpha^2\sqrt{\delta}} + \frac{512 m}{\alpha^4} \right] \le \delta
\end{align*}
whenever $\alpha \ge 6$, and $m \ge \ln(1/\delta)$. The first probability is over the choice of Gaussian matrix $C$, whereas the second probability is over both the choice of $C$ and over the photon number distribution.
\end{lemma}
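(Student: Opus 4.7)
The plan is to prove both inequalities by Chebyshev-type concentration after controlling the first two moments of $\langle n\rangle$ and of $n$. The key structural input is the series expansion
\begin{equation*}
\langle n \rangle \;=\; \sum_{i=1}^m \frac{\sigma_i^2}{1-\sigma_i^2} \;=\; \sum_{k=1}^\infty \Tr\!\bigl[(C^\dagger C)^k\bigr],
\end{equation*}
which translates every question about the singular values into a question about moments of Wishart traces. Since $\alpha \ge 6$, the Haagerup--Thorbj\o{}rnsen bound quoted in the text gives $\sigma_{\max} \le (2 + o(1))/\alpha \le 1/3$ on a sub-exponentially high-probability event $\mathcal{E}$; I would condition on $\mathcal{E}$ and absorb its complement into the final failure probability, so that each $1 - \sigma_i^2 \ge 8/9$ and the series above converges geometrically with ratio at most $1/9$.

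For the first inequality, split
\begin{equation*}
\langle n\rangle - \tfrac{m}{\alpha^2} \;=\; \bigl(\Tr[C^\dagger C] - \tfrac{m}{\alpha^2}\bigr) \;+\; \sum_{k \ge 2}\Tr\!\bigl[(C^\dagger C)^k\bigr].
\end{equation*}
The entries $|c_{ij}|^2$ are i.i.d.\ exponential with mean $1/(\alpha^2 m)$, so $\Var[\Tr C^\dagger C] = 1/\alpha^4$ exactly and a single Chebyshev supplies the $(1/\alpha^2)\sqrt{2/\delta}$ term. The remainder $\sum_{k\ge 2} \Tr[(C^\dagger C)^k]$ is deterministically bounded on $\mathcal{E}$ by $m \sum_{k\ge 2}(4/\alpha^2)^k = O(m/\alpha^4)$; tracking the explicit high-probability envelope on $\sigma_{\max}$ (which lies slightly above $2/\alpha$) yields the claimed $512\,m/\alpha^4$ constant.

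For the second inequality, I would apply the law of total variance $\Var[n] = \mathbb E_C[\Delta^2[n\mid C]] + \Var_C[\langle n\rangle]$. Conditional on $C$, each pair of modes contains an independent two-mode squeezed vacuum, so the pair count $n_i$ in pair $i$ is geometric with variance $\sigma_i^2/(1-\sigma_i^2)^2$ and hence $\Delta^2[n\mid C] = \sum_i \sigma_i^2/(1-\sigma_i^2)^2 \le (1+O(1/\alpha^2))\Tr[C^\dagger C]$ on $\mathcal{E}$. Applying a first Chebyshev to $\Tr[C^\dagger C]$ at confidence $\delta/4$ and a second conditional Chebyshev to $n - \langle n\rangle$ at confidence $\delta/4$, then combining via $\sqrt{a+b}\le\sqrt a + \sqrt b$, yields the dominant shot-noise term $2\sqrt m/(\alpha\sqrt\delta)$ and the subleading cross term $3/(\alpha\delta^{3/4})$. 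The remaining contributions $84\sqrt m/(\alpha^2\sqrt\delta)$ and $512 m/\alpha^4$ are inherited from the bias and fluctuation estimates of the first inequality via the triangle inequality $|n - m/\alpha^2| \le |n - \langle n\rangle| + |\langle n\rangle - m/\alpha^2|$, refined by the next-order Wishart moment $\mathbb E[\Tr(C^\dagger C)^2] = 2m/\alpha^4$ needed to sharpen $\mathbb E_C[\Delta^2[n\mid C]]$.

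The main obstacle is the careful bookkeeping across the three nested layers of randomness---individual matrix entries, singular values, and photon measurements---needed to recover the explicit numerical constants. The genuinely delicate part is controlling the sensitivity of $1/(1-\sigma_i^2)^p$ to fluctuations of $\sigma_{\max}$ near the spectral edge, since tails in that regime would blow up every series remainder; this is precisely what forces the hypothesis $\alpha \ge 6$, and why each step must be carried out on the event $\mathcal{E}$ rather than by taking naive expectations.
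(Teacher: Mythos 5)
Your proposal follows essentially the same route as the paper's proof in \Cref{app:bounds}: expand $\langle n\rangle$ into Wishart traces, use the Haagerup--Thorbj{\o}rnsen edge bound to control the $k\ge 2$ remainder deterministically on a high-probability event, apply Chebyshev to $\Tr(C^\dagger C)$ (whose variance is indeed exactly $1/\alpha^4$), and for the second inequality combine the geometric/thermal variance identity $\Delta^2[n\mid C]=\sum_i\sigma_i^2/(1-\sigma_i^2)^2$ with a second Chebyshev, the triangle inequality, and a union bound. The only quantitative caveat is that under the hypothesis $m\ge\ln(1/\delta)$ the spectral envelope must be taken at $\sigma_{\max}\le 4/\alpha$ (not ``slightly above $2/\alpha$'') so that the tail probability $m e^{-m\alpha^4\epsilon^2/8}$ is actually below $\delta/2$, and it is this factor-of-two slack that produces the constant $2m\sigma_{\max}^4 = 512m/\alpha^4$.
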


Notice that in the dilute limit this statement implies that with high probability over the choices of Gaussian matrix and over the photon number distribution, the observed photon number is linear in $m/\alpha^2$ to leading order.

\subsubsection{The normalization factor} \label{subsubsec:normalization}

Let us now consider the typical behaviour of the normalization factor $\norm$. Recall that we can write
\beq
    \norm = \prod_{i=1}^{m} \cosh(\arctanh \sigma_i)^2 = \prod_{i=1}^{m} \frac{1}{1-\sigma_i^2}.
\eeq
Assuming $\sigma_i \in [0, 2/\alpha]$, it holds that $\norm \in [1,1/(1-4/\alpha^2)^m]$. Asymptotically, the upper bound can be written as
\beq \label{eq:norm_max}
\norm_\textrm{max} = 1/(1-4/\alpha^2)^m  \approx e^{4m/\alpha^2}.
\eeq
We want $\norm$ to be as small as possible. As will become clear in the next sections, the bound in \Cref{eq:norm_max} is not sufficiently tight for our purposes. On the other hand, if each singular value $\sigma_i$ is drawn independently from the quarter-circle distribution (\Cref{eq:quartercircle}), then the expectation of $\norm$ would scale more favourably as $e^{m/\alpha^2}$. Although the singular values are not independent, we prove that this heuristic argument in fact provides the right scaling (see \Cref{subapp:boundN2}). More specifically, we give the following bound:
\begin{lemma} \label{lem:boundZ}
For any $\delta > 0$
\beq
\Pr \left[\norm \ge  \frac{2}{\delta} e^{m/\alpha^2} e^{272 m / \alpha^4} \right] \le \delta,
\eeq
whenever $\alpha \ge 6$, and $m \ge \ln(1/\delta)$.
\end{lemma}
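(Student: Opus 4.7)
The form of the bound --- with its prefactor $2/\delta$ and exponent $m/\alpha^2 + 272\,m/\alpha^4$ --- strongly suggests a Markov-style tail bound after truncating $\norm$ to a high-probability event. My plan is to establish (i) $\Pr[\mathcal E^c] \le \delta/2$ for a suitable event $\mathcal E$, and (ii) $\mathbb E[\norm \cdot \mathbb 1_\mathcal E] \le e^{m/\alpha^2 + 272\,m/\alpha^4}$. Combining these via $\Pr[\norm \ge y] \le \Pr[\mathcal E^c] + \mathbb E[\norm \mathbb 1_\mathcal E]/y$ with the choice $y = (2/\delta)\,e^{m/\alpha^2 + 272\,m/\alpha^4}$ then yields $\Pr[\norm \ge y] \le \delta$. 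Note that truncation is essential: when any $\sigma_i \ge 1$ the factor $1/(1-\sigma_i^2)$ becomes negative or infinite, and $\mathbb E[\norm]$ is generally infinite over the untruncated Gaussian ensemble.

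For the event, I would take $\mathcal E = \{\sigma_{\max}(C) \le c/\alpha\}$ for a constant $c$ slightly larger than $2$ (the quarter-circle edge). By the standard large-deviation bound for the largest singular value of a complex Gaussian matrix (the \Cref{lem:max_eigen_wishart} cited in \Cref{subapp:boundN2}), one obtains $\Pr[\mathcal E^c] \le \delta/2$ in the stated regime $\alpha \ge 6$ and $m \ge \ln(1/\delta)$; the hypothesis $m \ge \ln(1/\delta)$ is exactly what pushes the exponentially small tail of $\sigma_{\max}$ below $\delta/2$.

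For the expectation on $\mathcal E$, decompose $\log \norm = -\sum_i \log(1-\sigma_i^2) = F + \Phi$, where $F = \|C\|_F^2 = \sum_i \sigma_i^2$ is the linear-in-eigenvalue statistic and $\Phi = \sum_i \bigl[-\log(1-\sigma_i^2) - \sigma_i^2\bigr]$ is the quadratic-and-higher correction. The key observation is that on $\mathcal E$ the bound $\Phi \le \frac{m\gamma^2}{2(1-\gamma)}$ holds \emph{deterministically}, where $\gamma = c^2/\alpha^2 < 1$: this uses the pointwise inequality $-\log(1-x) - x \le \frac{x^2}{2(1-x)}$ for $x \in [0,1)$, together with the uniform bound $\sigma_i^2 \le \gamma$ and summing over $m$ terms. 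This decouples $\Phi$ from $F$. Meanwhile, $F$ is a sum of $m^2$ independent exponentials with rate $\alpha^2 m$ (one per matrix entry), so its MGF is known exactly: $\mathbb E[e^F] = (1 - 1/(\alpha^2 m))^{-m^2} \le \exp\!\bigl(m/\alpha^2 + O(1/\alpha^4)\bigr)$ whenever $\alpha^2 m \ge 36$, giving the leading $m/\alpha^2$ term with coefficient exactly $1$. Combining,
\[
\mathbb E[\norm \cdot \mathbb 1_\mathcal E] \;\le\; \exp\!\Bigl(\tfrac{m\gamma^2}{2(1-\gamma)}\Bigr)\,\mathbb E[e^F] \;\le\; \exp\!\Bigl(\tfrac{m}{\alpha^2} + \tfrac{272\,m}{\alpha^4}\Bigr),
\]
where the constant $272$ absorbs the deterministic $\Phi$ bound evaluated at the chosen $c$ plus the $O(m/\alpha^4)$ correction from the MGF.

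The main obstacle I anticipate is calibrating $c$ so that both halves of the argument close simultaneously: $c$ must be large enough that $\Pr[\sigma_{\max} > c/\alpha] \le \delta/2$ under the mild hypothesis $m \ge \ln(1/\delta)$, yet not so large that $m\gamma^2/(2(1-\gamma))$ exceeds the target $272\,m/\alpha^4$ uniformly in $\alpha \ge 6$. The specific numerical constants $6$ and $272$ reflect this trade-off. The key technical insight is that replacing the naive bound $\sum_i \sigma_i^4 \le \gamma F$ (which would inflate the leading coefficient of $m/\alpha^2$ above $1$) by the deterministic bound $\sum_i \sigma_i^4 \le m\gamma^2$ on $\mathcal E$ cleanly separates $\Phi$ from $F$, thereby preserving coefficient $1$ on $m/\alpha^2$ and pushing all slack into the harmless $m/\alpha^4$ correction.
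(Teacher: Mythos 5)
Your proposal is correct and follows essentially the same route as the paper's proof: truncate on the largest singular value via the Haagerup--Thorbj{\o}rnsen tail bound (this is where $m\ge\ln(1/\delta)$ and $\alpha\ge 6$ enter), bound the quadratic-and-higher Taylor terms of $-\sum_i\log(1-\sigma_i^2)$ deterministically on that event, and control the remaining linear term $\Tr(CC^\dagger)=\lVert C\rVert_F^2$ through its exact moment generating function followed by Markov's inequality and a union bound. The only cosmetic differences are that the paper cites Goodman's Wishart formula for $\mathbb E[e^{\Tr(CC^\dagger)}]$ where you rederive it from the entrywise exponentials, and that the paper's truncation radius is $\sigma_{\max}\le 4/\alpha$ rather than ``slightly above $2/\alpha$''---your anticipated calibration issue is real, and $c=4$ is the value that closes both halves.
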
    

Recall that $\alpha=\Theta(m^{1/4})$ in the dilute limit, so \Cref{lem:boundZ} implies that $\norm$ is bounded asymptotically by $e^{m/\alpha^2}$.

\subsubsection{Collision statistics beyond the dilute limit}\label{sec:collisions}
A BipartiteGBS sample is specified by $M=2m$ non-negative integers giving the number of photons measured in each of the modes:
\eq{
    (S,T) = (s_1,\ldots,s_m;t_1,\ldots,t_m)   .
}
As mentioned before, the total number of photons detected in the first $m$ modes should be equal to the total number of detected in the second $m$ modes---that is, $n=\sum_{i=1}^m s_i = \sum_{j=1}^m t_j$.

Another useful variable to consider is the number of clicks. A click sample is obtained from a photon number sample by ``thresholding'' the events, mapping any event with more than zero photons into outcome 1 while mapping vacuum events to 0. We write these thresholded samples as
\beq
(D,E) =(d_1,\ldots,d_m;e_1,\ldots,e_m)
\eeq
where $d_i,e_j = 1$ if $s_i,t_j \ge 1$ and $0$ otherwise.
It is also useful to write the total number of clicks in either half
\eq{
    d &= \sum_{i=1}^m d_i \leq n, \quad e = \sum_{j=1}^m e_j \leq n,
}
where we also state the obvious fact that the total number of clicks is always smaller or equal to the total number of photons detected. Note that, \emph{unlike} the photon number, the number of clicks in both  halves of the modes need not be the same, thus in general, $d \neq e$. Whenever the number of clicks is less than the number of photons, there must be collisions (at least one mode with more than one photon).

To understand why the number of clicks is an important random variable consider the expression for the probabilities  (recall~Eq.~\eqref{Eq: main_dbn}) depending on
\eq{
    \text{Per}(C_{S,T}).
}
A priori, while the $n \times n$ matrix $C_{S,T}$ may be large, its rank ($\min(d,e)$) may be small.  Because matrices of small rank have efficient algorithms \cite{barvinok1996two}, it is useful to understand the statistics of the clicks in each of the two halves of the modes. In the dilute limit no-collision events dominate and thus $d=e=n$.

\begin{figure}
\centering
    \includegraphics[width=0.31\textwidth]{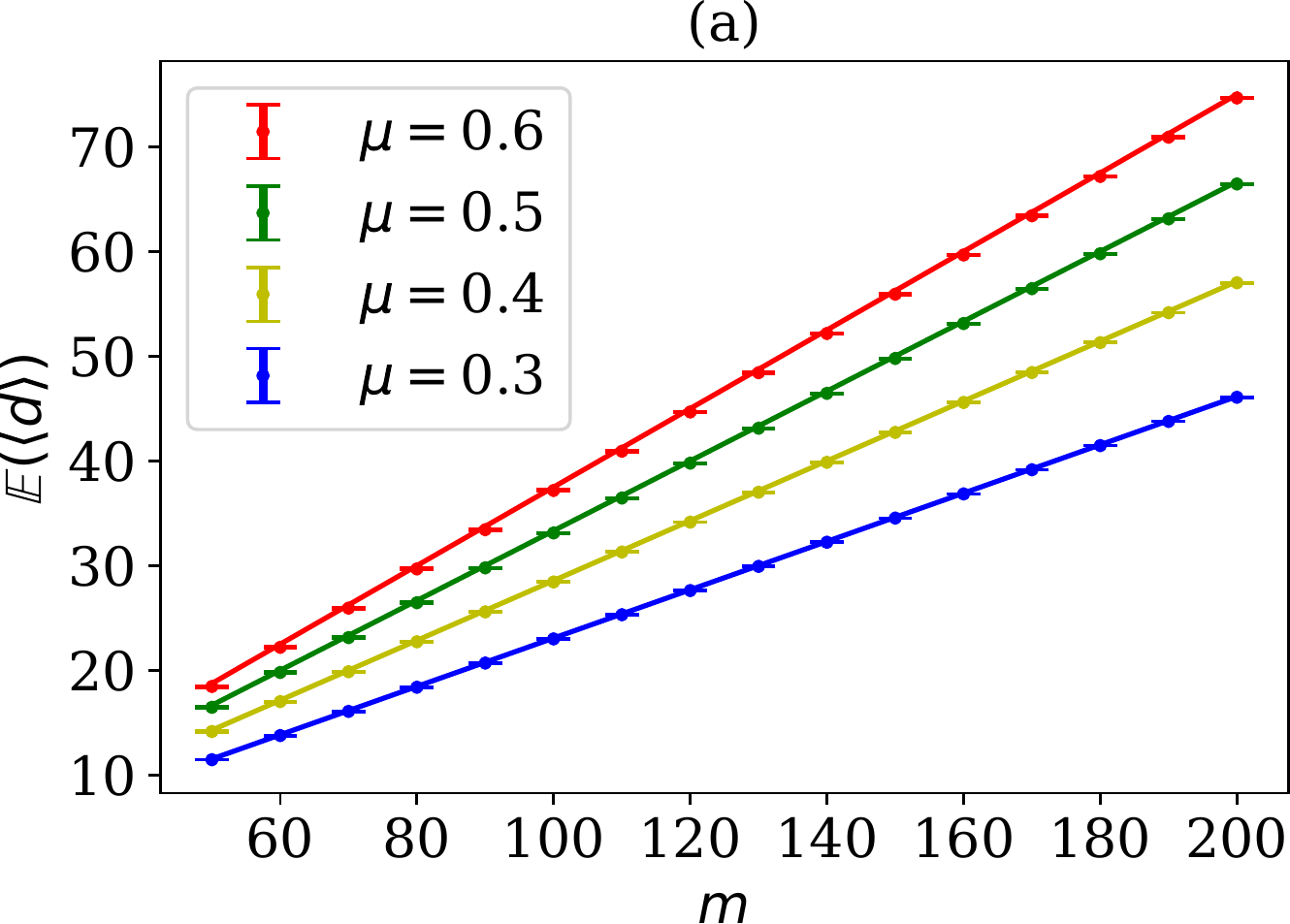}
    \includegraphics[width=0.31\textwidth]{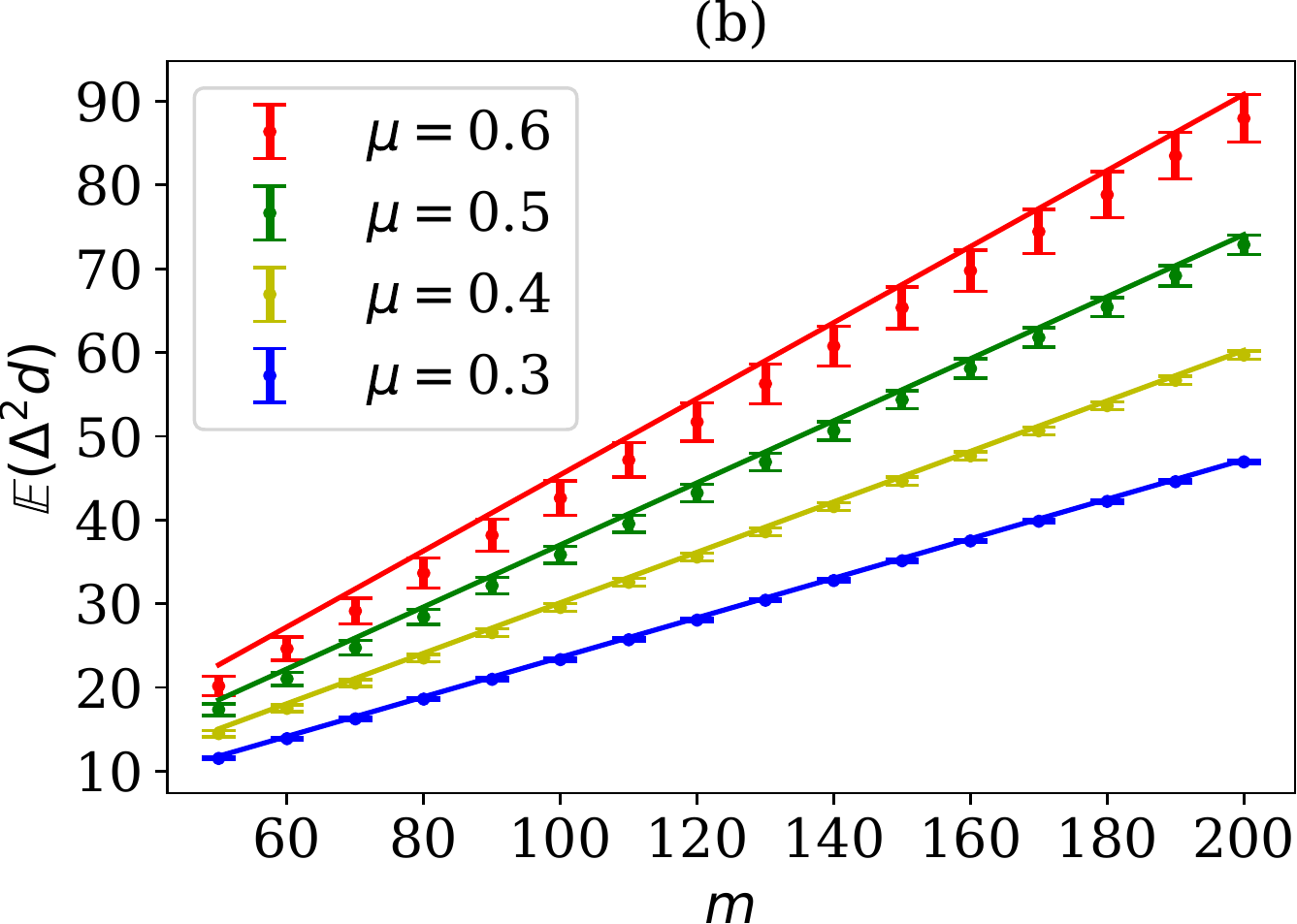}
    \includegraphics[width=0.32\textwidth]{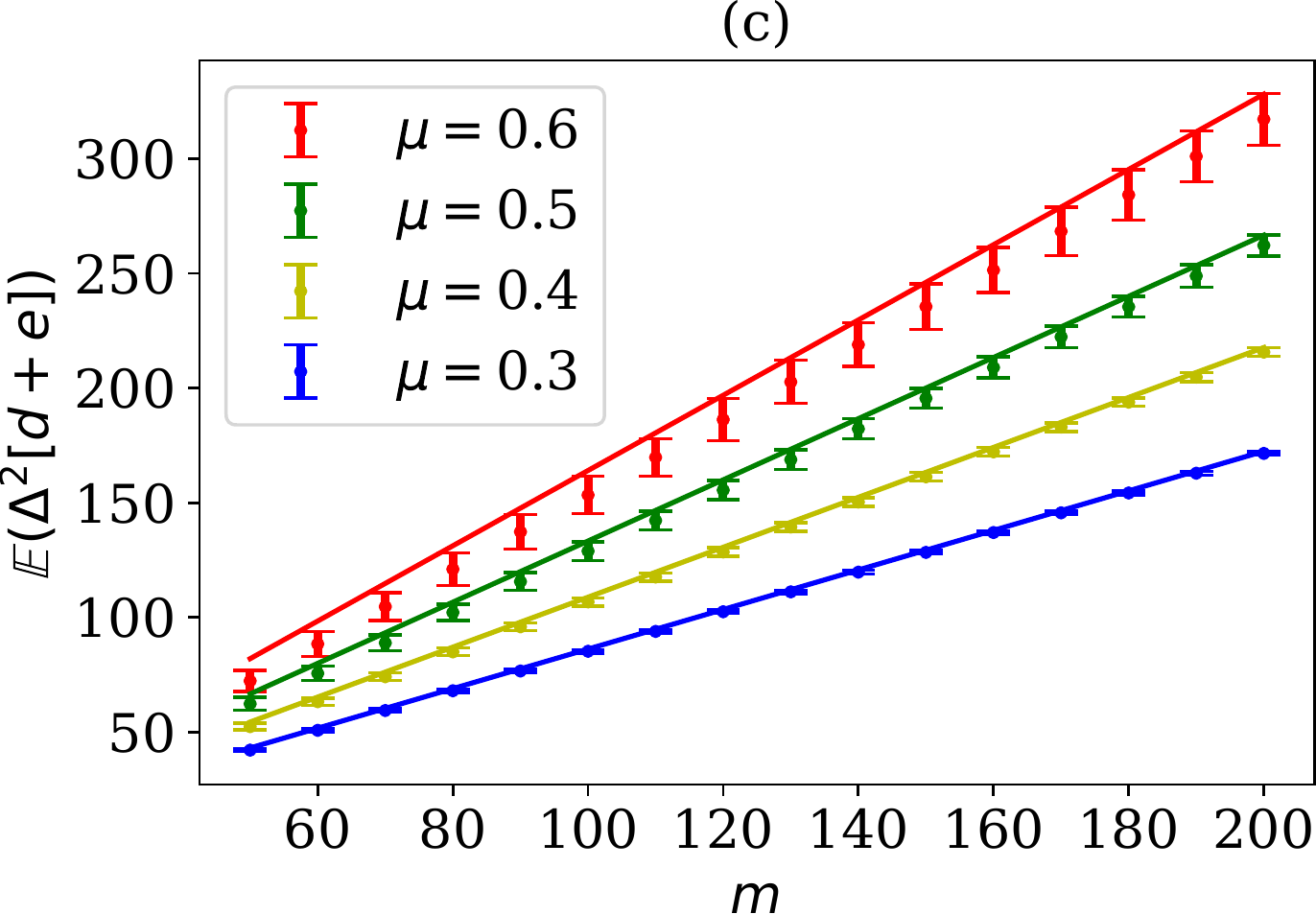}
    \caption{\label{fig:comparison}We compare the analytical results in Eqs.~\eqref{app:means}, \eqref{equations} with numerics. For each value of $m$ we generate 2500 random matrices where each of the $m^2$ elements is drawn i.i.d. from the standard complex-normal distribution and then each matrix is scaled so as to fix the total mean photon number to be $2\braket{n} = \mu \times 2m$.
        We then calculate (a) the average of the mean number of clicks in the first half of the modes $\mathbb{E}(\braket{d})$, (b) its variance $\mathbb{E}\left[ \Delta^2 d \right]$ and (c) the variance in the total number of clicks $\mathbb{E}\left[ \Delta^2(d+e) \right]$. The error bars in (a), (b) and (c) are given by  $\sqrt{\text{Var}(\braket{d})}$, $\sqrt{\text{Var}\left[ \Delta^2 d \right]}$ and $\sqrt{\text{Var}\left[ \Delta^2 [d+e] \right]}$ respectively and are obtained by using the 2500 Monte Carlo samples for each value of $\mu$ and $m$. The lines are the predictions from theory. We do not show the mean $\mathbb{E}(\braket{e})$ and variance $\mathbb{E}\left[ \Delta^2 e \right]$ of the clicks in the second half of the modes since they are indistinguishable from the corresponding results of the first half.
        Note that for densities $\mu < 0.5$ the theory agrees very well with the numerics.
        For larger values of the density the first-order Taylor expansions used to derive Eqs.~\eqref{equations} in Appendix~\ref{sec:coup} no longer hold, giving a significant deviation from the numerically obtained values as seen above for the variances in Figs. 3 (b) and 3 (c).
        The numerical calculation was performed using The Walrus~\cite{walrus}.
    }
\end{figure}

Beyond the dilute limit, we can find simple expressions for the first and second moments of the total number of clicks in either half of the $2m$ modes. The detailed derivation of these results is provided in \Cref{app:clicks}. These expressions are written in terms of the photon number density
\eq{\label{eq:density}
    \mu =  \mathbb{E}( \braket{a_i^\dagger a_i}) = \frac{\mathbb{E}(\braket{n})}{m}.
}
For the first order moments (means) we simply invoke the fact that the
 reduced states of two-mode squeezed states are thermal states, and scrambling from the random interferometers
leads also to locally thermal states to find,
\eq{
\mathbb{E}(\braket{d}) =\mathbb{E}(\braket{e})&=  m \frac{\mu}{1+\mu}\label{app:means}.
}
By rewriting the density $\mu$ in terms of the mean number of pairs and the number of modes one easily derives Eq.~\eqref{eq:harmonic_mean} for $\mathbb{E}(\braket{d} + \braket{e})$.
For the second order moments we need to invoke the quarter circle law and use a Taylor expansion to obtain
\seq{\label{equations}
    \eq{
        \mathbb{E}(\Delta^2 d) =\mathbb{E}(\Delta^2 e) &= m \frac{\mu  \left(1-\mu ^2+\mu\right) }{(1-\mu ) (\mu +1)^3}  = \left(1-\mu ^2+\mu\right) \mathbb{E}(\text{Cov}(d,e)) \label{app:single_cov},\\
        \mathbb{E} (\Delta^2(d+e)) &= 2m \frac{(2-\mu) \mu}{(1-\mu) (1+\mu)^2}.
    }
}
In the dilute limit $\mu \sim 1/\sqrt{m} \ll 1$ we have  $\mathbb{E}(\Delta^2 d) =\mathbb{E}(\Delta^2 e) \approx (1+\mu) \mathbb{E}(\text{Cov}(d,e))$ which tells us that $d$ and $e$ are very strongly correlated, as one would expect since in this limit clicks reduce to photon numbers, which should be equal in the two halves of the modes. Even beyond the dilute limit, the equations above predict strong correlations between the number of clicks in either half of the modes. We can for example consider the linear correlation ratio
\begin{align}
\text{Corr}(e,d) = \frac{\mathbb{E}(\text{Cov}(d,e)) }{\sqrt{\mathbb{E}(\Delta^2(d)) \mathbb{E}(\Delta^2(e))}} = \frac{1}{1+\mu - \mu^2}
\end{align}
which for a non-negligible density of $\mu=0.3$ gives $\text{Corr}(e,d) \approx 0.83$.
We find excellent agreement between the results in the equations above against exact numerical calculations for varying photon-number densities as the number of modes increased in \Cref{fig:comparison}.

We conclude this section with a discussion of the bosonic birthday paradox, a bound on how likely we are to observe collision events in a BosonSampling experiment, which will also turn out to govern BipartiteGBS~\cite{Arkhipov2012}. Specifically, for a BosonSampling experiment with $m$ modes and $n \sim c\, m^{(j-1)/j}$ photons, the bosonic birthday paradox says that the number of output modes where we expect to observe exactly $j$ photons converges to a Poisson random variable with mean $c^j$. Collision outcomes with more than $j$ photons are suppressed.

The key idea in the proof of the bosonic birthday paradox is that, upon applying a Haar-random matrix to an $n$-photon Fock state, the output state is the maximally mixed state over the $n$-photon, $m$-mode Hilbert space. While BipartiteGBS experiments do not in general have Fock input states (in fact, the total photon number is a random variable), one can show that the bosonic birthday paradox holds when we restrict to a fixed subspace of $n$ photons since that restricted input state \emph{is} a Fock state. Moreover, in the singular value decomposition of a Gaussian matrix, the two unitary matrices $U$ and $V$ corresponding to the interferometers in BipartiteGBS are Haar-random~\cite{tulino2004random}. Therefore, it will still be the case that the output state, when averaged over the Gaussian ensemble, can be seen as a copy of the maximally mixed state on each set of modes, as per \Cref{Fig: GBS}.  

For example, because \Cref{lem:boundn} shows that $n$ is highly concentrated around $m/\alpha^2$ in the dilute limit, we can apply the bosonic birthday paradox in this regime (at $j=2$).  In particular, this implies that some constant fraction of the output distribution of a BipartiteGBS experiment has no collisions, a fact which is somewhat implicit\footnote{\Cref{thm:main} implies the following weaker statement: supposing the Permanent Anti-Concentration Conjecture, a polynomially large fraction of the BipartiteGBS output distribution has no collisions. \textit{Proof.}~Suppose that the no-collision subspace $\mathcal H$ does not comprise at least an inverse-polynomial fraction of the BipartiteGBS distribution---that is, $\sum_{S \in \mathcal H} p_S = o(1/\poly(m))$ with overwhelming probability over Gaussian matrices. Then, a randomly chosen probability $p_S$ in the no-collision subspace must be $o(1/(|\mathcal H| \poly(m)))$ with high probability. Expanding out the expression for the probability and rearranging, we get
$$
\frac{|\mathcal H|}{\mathcal Z \alpha^{2n} m^n} |\Per(C_S)|^2 = o(1/\poly(m))
$$
where $C \sim \mathcal N(0,1)_{\mathbb C}^{m \times m}$ and $S$ is chosen uniformly at random from $\mathcal H$. We show in \Cref{thm:main} (Eq.~\eqref{eq:I_bound} and Eq.~\eqref{eq:Z_bound}) that 
$\mathcal Z \alpha^{2n} m^n/|\mathcal H| n! = \poly(m)$
with overwhelming probability. Plugging in this bound, we get $|\Per(C_S)|^2 = o(n!/\poly(m)).$
However, this contradicts the Gaussian Permanent Anti-Concentration Conjecture which says that $|\Per(C_S)|^2 = \Omega(n!/\poly(m))$ with high probability.
}
in the proof of \Cref{thm:main}.


\section{Approximate average-case hardness for GBS in the dilute limit}\label{sec:hardness_proof}

In this section we prove our main result, namely that GBS is hard to simulate efficiently on a classical computer in the dilute limit and up to the same complexity conjectures as in BosonSampling~\cite{aaronson2013}. To that end, we start with the following computational problem:

\begin{problem}
\label{prob:GPEpa}
\emph{($\GPEpa$ \cite{aaronson2013})} Given as input a Gaussian matrix $X \sim \mathcal N (0,1)^{n \times n}_{\mathbb C}$ together with error bounds $\epsilon, \delta > 0$, output a complex number $\tilde P$ such that $\left|\tilde P - \left|\Per X\right|^2\right|\leq \epsilon n!$ with probability at least $1-\delta$.
\end{problem}

Our goal is to prove that, if there exists an efficient classical algorithm to simulate the output of a GBS experiment to high precision in total variation distance, then $\GPEpa$ can be solved in the complexity class $\FBPP^\NP$. It is conjectured\footnote{In \cite{aaronson2013}, the conjecture is decomposed into two parts:  the Permanent-of-Gaussians Conjecture asserts that the multiplicative version of the Gaussian permanent estimation problem is hard, while the Permanent Anti-Concentration Conjecture implies that this multiplicative version is poly-time equivalent to $\GPEpa$.}  that $\GPEpa$ is $\#\mathsf{P}$-hard, and so we obtain that the polynomial hierarchy ($\PH$) collapses to its third level by the usual chain of inclusions:
\beq
\P^\PH = \P^{\#\P} \subseteq \BPP^{\GPEpa} \subseteq \BPP^{\FBPP^\NP} \subseteq \Sigma_3^p.
\eeq
Since the polynomial hierarchy is widely conjectured to be infinite, such an efficient classical algorithm is unlikely to exist \cite{aaronson2013,bremner2010iqp}.

We will break the proof of our main result in two parts.  In \Cref{subsec:Stockemeyer}, we apply a Stockmeyer counting argument to leverage an efficient classical algorithm that samples from the GBS output distribution into a $\FBPP^\NP$ algorithm that produces an estimate of any probability within this distribution. This follows the corresponding argument in \cite{aaronson2013}, though we emphasize some aspects of the proof that are particular to our scenario. In \Cref{subsec:probtoperm}, we then use \Cref{Eq: main_dbn} to connect the output probability of an event to the permanent of a Gaussian matrix, and discuss how this affects the corresponding error bounds.

\subsection{From distributions to probabilities} \label{subsec:Stockemeyer}

Consider the probability distribution $\mathcal D_C$ at the output of a GBS experiment which implements an arbitrary\footnote{If the singular values of $C$ are greater than $1$, then we cannot implement the corresponding GBS experiment and $\mathcal D_C$ is undefined. For such matrices, let us just assume that $\mathcal D_C$ is the distribution that always outputs 0 photons in every mode.} transition matrix $C$, as described in \Cref{subsec:model}. Let $\mathcal O$ be a deterministic classical algorithm that takes as input a description of $C$, together with an error bound $\beta$ and a uniformly-random number $r$, and outputs a sample drawn according to distribution $\mathcal D_C'$. We write this as
\beq
\mathcal O(C, 0^{1/\beta}, r) \sim \mathcal D_C',
\eeq
as $r \in \{0,1\}^{\poly(m)}$ is sampled from the uniform distribution. Suppose also that
\beq
\left\lVert \mathcal D_C - \mathcal D_C' \right\rVert \le \beta.
\eeq
Our goal is to use Stockmeyer's theorem \cite{stockmeyer1983complexity} to show that a $\BPP^{\NP}$ machine, with access to $\mathcal{O}$, can produce a good approximation to the probability of some particular outcome.

At this point, for generality, we do not yet impose the restrictions that we are in the dilute limit, nor that $C$ is a Gaussian matrix. We only use the fact that the Gaussian ensemble is \emph{invariant} under permutations of its rows and columns. At a high level, this allows us to randomly permute the rows and columns of $C$ in order to hide the outcome $S$ that we care about within $\mathcal D_C$. The classical algorithm $\mathcal{O}$ is allowed to incur some total error $\beta$ in its probabilities, but it is unlikely to concentrate too much of that error in the specific hidden outcome.

Now let $S$ be a particular GBS outcome,\footnote{For convenience, we are writing $(S,T)$ as $S$ here. That is, $S$ is being used to represent both the row collisions \emph{and} column collisions.} and let $\mathcal H_S \subseteq \mathbb N^{2m}$ be the subset of GBS outcomes that contains (without multiplicity) all outcomes $S_{\pi,\sigma} = (s_{\pi(1)}, \ldots, s_{\pi(m)}; t_{\sigma(1)}, \ldots, t_{\sigma(m)})$ for all permutations $\pi, \sigma \in \mathrm S_m$. This set is, by definition, invariant under permutations within the first $m$ modes and within the last $m$ modes, which correspond to permutations of rows or columns of $C$, respectively. To reiterate, to work in the dilute limit, $S$ will be a $0/1$-vector, but such a restriction is not yet needed. Indeed, we could choose $S$ to be any possible GBS outcome, even those with many collisions. That said, in order to obtain an approximation to the underlying permanent question, the subspace $\mathcal H_S$ must correspond to outcomes with a sufficiently large probability mass.

We then have the following result, which closely follows Theorem 3 of \cite{aaronson2013}:

\begin{theorem}
\label{thm:additive_estimation}
There exists an $\FBPP^{\NP^{\mathcal O}}$ algorithm which, given an outcome $S$, produces an estimate of the corresponding probability $\mathcal D_C(S)$ to additive error $\varepsilon / \left|\mathcal H_S\right|$ with probability at least $1-\delta$ (over choices of $C$) in time $\mathsf{poly}(m,1/\varepsilon,1/\delta)$.
\end{theorem}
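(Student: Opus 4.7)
The plan is to follow the Stockmeyer-plus-hiding recipe from Theorem~3 of \cite{aaronson2013}, adapted to the BipartiteGBS probabilities in \Cref{Eq: main_dbn}. The role of the ``hiding lemma'' is played here by the trivial observation that every outcome $y \in \mathcal H_S$ has the \emph{same} exact probability $\mathcal D_C(y) = \mathcal D_C(S)$: the permanent $|\Per(C_{S,T})|^2$ is invariant under permutations of rows and columns, and the factorial normalization $\prod s_i!\prod t_j!$ depends only on the multiset. Consequently, from the sampler $\mathcal O$'s viewpoint the $|\mathcal H_S|$ outcomes in $\mathcal H_S$ are interchangeable, and on average over a uniformly chosen $y \in \mathcal H_S$ the oracle's total-variation error budget $\beta$ is spread out over them.

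\textbf{Algorithm.} On input $(C,S)$, sample uniformly random permutations $\pi,\sigma\in\mathrm{S}_m$ and set $y := (s_{\pi(1)},\ldots,s_{\pi(m)};t_{\sigma(1)},\ldots,t_{\sigma(m)}) \in \mathcal H_S$. Then invoke Stockmeyer's approximate counting theorem with the $\NP^{\mathcal O}$ oracle to obtain a multiplicative $(1+1/T)$-approximation $\tilde p$ to
\[
\mathcal D_C'(y) \;=\; \Pr_r[\,\mathcal O(C, 0^{1/\beta}, r) = y\,],
\]
where $\beta$ (the sampler precision given to $\mathcal O$) and $T$ (the Stockmeyer precision) are $\poly(m, 1/\varepsilon, 1/\delta)$ parameters fixed at the end. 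Output $\tilde p$. Crucially, the query string $(C, 0^{1/\beta})$ passed to $\mathcal O$ does not depend on $\pi,\sigma$, so $\mathcal O$ cannot tailor its error to the particular $y$ the algorithm has in mind.

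\textbf{Analysis.} Two estimates drive the proof. (i)~By total variation, $\sum_{y' \in \mathcal H_S} |\mathcal D_C(y') - \mathcal D_C'(y')| \le 2\beta$, and since $\mathcal D_C(y') = \mathcal D_C(S)$ throughout $\mathcal H_S$, a uniformly random $y \in \mathcal H_S$ satisfies $\mathbb E_{\pi,\sigma}[|\mathcal D_C(S) - \mathcal D_C'(y)|] \le 2\beta/|\mathcal H_S|$; Markov then gives $|\mathcal D_C(S) - \mathcal D_C'(y)| \le 4\beta/(\delta|\mathcal H_S|)$ with probability at least $1-\delta/2$ over $\pi,\sigma$. (ii)~Stockmeyer yields $|\tilde p - \mathcal D_C'(y)| \le \mathcal D_C'(y)/T$. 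Combining the trivial bound $\mathcal D_C(S) \le 1/|\mathcal H_S|$ (from $\sum_{y' \in \mathcal H_S} \mathcal D_C(y') \le 1$) with part~(i) gives $\mathcal D_C'(y) = O(1/|\mathcal H_S|)$ on the same event, so $|\tilde p - \mathcal D_C'(y)| = O(1/(T|\mathcal H_S|))$. Triangle inequality plus a choice $\beta = \Theta(\varepsilon\delta)$, $T = \Theta(1/\varepsilon)$ produces the desired additive error $\varepsilon/|\mathcal H_S|$, and Stockmeyer's own $\delta/2$ failure probability is boosted in the usual way.

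\textbf{Main obstacle.} The delicate step is (ii): turning Stockmeyer's \emph{multiplicative} guarantee on the uncontrolled quantity $\mathcal D_C'(y)$ into an \emph{additive} guarantee on the target $\mathcal D_C(S)$. Without hiding, an adversarial $\mathcal O$ could dump most of its $\beta$ worth of total-variation error onto the single outcome $S$, making $\mathcal D_C'(S)$ of order $\beta \gg 1/|\mathcal H_S|$ and blowing up the Stockmeyer error by a factor of $|\mathcal H_S|$; this is precisely what the uniform choice of $y$ within the orbit $\mathcal H_S$, combined with the exact equi-probability $\mathcal D_C(y) = \mathcal D_C(S)$, rules out. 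The rest of the proof---bookkeeping the parameters $\beta,T,\delta$ and amplifying Stockmeyer's coin flips---is routine and mirrors \cite{aaronson2013} almost verbatim. Note that nothing in this theorem uses that $C$ is Gaussian or that we are in the dilute limit: the estimate $|\tilde p - \mathcal D_C(S)| \le \varepsilon/|\mathcal H_S|$ holds for \emph{every} $C$ with high probability over the algorithm's own coins, and the ``with probability $1-\delta$ over $C$'' phrasing in the statement is the trivial consequence of that uniformity; structural properties of Gaussian $C$ enter only in \Cref{subsec:probtoperm} when connecting $\mathcal D_C(S)$ to $|\Per X|^2$.
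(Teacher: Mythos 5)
There is a genuine gap, and it sits at the very center of your argument: the claim that every $y \in \mathcal H_S$ satisfies $\mathcal D_C(y) = \mathcal D_C(S)$ for a \emph{fixed} $C$ is false. Permuting the outcome pattern does not permute the rows and columns of the submatrix $C_{S,T}$; it selects a \emph{different} submatrix of $C$. Concretely, with $S=(1,0,\ldots,0)$ the pattern picks out row $1$ of $C$, while $S_\pi=(0,1,0,\ldots,0)$ picks out row $2$, and $|\Per(C_{S,T})|^2$ and $|\Per(C_{S_\pi,T_\sigma})|^2$ are unrelated quantities (for Gaussian $C$ they are essentially independent random variables). Consequently your algorithm, which feeds the \emph{original} $C$ to $\mathcal O$ and Stockmeyer-counts a uniformly random orbit element $y$, returns an accurate estimate of $\mathcal D_C(y)$ --- the probability of the wrong outcome --- even if $\mathcal O$ were a perfect sampler. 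The same false premise underlies your bound $\mathcal D_C(S) \le 1/|\mathcal H_S|$, which does not hold without equi-probability; the paper instead applies Markov to $\mathbb E_{X \in \mathcal H_S}[q_X] \le 1/|\mathcal H_S|$ to control the Stockmeyer multiplicative error.

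The paper's fix is to randomize in the other slot: permute the rows and columns of $C$ itself before handing it to $\mathcal O$, and query the correspondingly permuted outcome, so that the underlying submatrix --- and hence the target probability --- is preserved exactly. Hiding then comes not from pointwise equality of probabilities but from the fact that the Gaussian \emph{ensemble} is permutation-invariant: the oracle sees a matrix distributed identically to a fresh draw and cannot tell which element of $\mathcal H_S$ will be counted. This is also why your closing remark is wrong: the guarantee genuinely is ``with probability $1-\delta$ over choices of $C$,'' not a statement that holds for every fixed $C$ over the algorithm's coins. For a fixed worst-case $C$, an adversarial $\mathcal O$ could in principle invert the row/column permutation (Gaussian matrices generically have distinguishable rows) and concentrate its error budget on exactly the hidden outcome; only the average over the ensemble rules this out. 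Your intuition about spreading the $\beta$ error budget over the orbit is the right one, but it must be implemented by permuting $C$, not the outcome label alone.
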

\begin{proof}
At a high level, we will construct an algorithm which counts all settings of the random bits that cause $\mathcal O$ to output $S$.  However, $\mathcal O$ is allowed to err with some constant probability $\beta$.  In order to prevent $\mathcal O$ from concentrating more of that error on $S$, we will first permute the rows and columns of $C$ uniformly at random.  Equivalently, we can say that $\mathcal O$ cannot tell which outcome in $\mathcal H_S$ we desire to approximate.

Set $\beta := \varepsilon\,\delta / 16$ and feed the input $\langle C, 0^{1/\beta}, r \rangle$ to $\mathcal O$.  Recall that $\mathcal O$ returns a sample from $\mathcal D_C'$ such that $\lVert \mathcal D_C - \mathcal D_C' \rVert \le \beta$ if $r$ is sampled uniformly.  Furthermore, if we let $p_X := \mathcal D_C(X)$ and $q_X := \mathcal D'_C(X)$ for any outcome $X \in \mathbb N^{2m}$, we have
\beq
q_S = \Pr_{r \in \{0,1\}^{\poly(m)}} \left[ \mathcal O(C, 0^{1/\beta}, r) = S \right].
\eeq
We use Stockmeyer's approximate counting method \cite{stockmeyer1983complexity} on this probability.  If we can also show that $q_S$ is close to $p_S$, this will imply that we can use Stockmeyer's method to estimate the desired quantity $p_S$ as well.

Let $\Delta_X := |p_X - q_X|$.  We get
\beq
\underset{X \in {\mathcal H_S}}{\mathbb E} [\Delta_X] \leq \frac{1}{|{\mathcal H_S}|} \sum_{X \in \mathbb N^{2m}} \Delta_X = \frac{2}{|{\mathcal H_S}|} \left\| \mathcal D_C - \mathcal D_C' \right\| \leq \frac{2 \beta}{|{\mathcal H_S}|}.
\eeq
By Markov's inequality, we get
\beq
\Pr_{X \in {\mathcal H_S}} \left[ \Delta_X > \frac{2 \beta k }{ |{\mathcal H_S}|} \right] < \frac{1}{k}.
\eeq
Setting $k := 4/\delta$ we have $2 \beta k = \varepsilon / 2$, and so
\beq
\Pr_{X \in {\mathcal H_S}} \left[ \Delta_X > \frac{\varepsilon}{2} \cdot \frac{1}{ |{\mathcal H_S}|} \right] < \frac{\delta}{4}.
\eeq
Our goal is to bound $\Delta_S$, the error in probability for the specific outcome we are trying to compute. Recall that rows and columns of $C$ were randomized before feeding it to $\mathcal O$. Since the distribution over $C$ is permutation invariant, we have that a random outcome of $C$ has the same probability as a fixed outcome which has been permuted randomly, that is,
\beq
\Pr_{(\pi, \sigma), C} \left[ \Delta_S > \frac{\varepsilon}{2} \cdot \frac{1}{ |{\mathcal H_S}|} \right] < \frac{\delta}{4},
\eeq
where $\pi, \sigma$ are the permutations used to randomize the rows and columns of $C$.\footnote{We note that this step takes place of the ``Hiding Lemma'' in \cite{aaronson2013}.}

Let us now formally describe the approximation we get for $q_S$ using Stockmeyer counting~\cite{stockmeyer1983complexity}. For any $\theta > 0$, we can obtain an estimate $\widetilde{q}_S$ such that
\beq
\Pr\left[ |\widetilde q_S - q_S| > \theta\, q_S \right] < \frac{\delta}{2}
\eeq
with an $\FBPP^{\NP^{\mathcal O}}$ machine running in time polynomial in $m$, $1/\theta$, and $1/\delta$.  Since $q_X$ is a probability, observe that $\mathbb E_{X \in {\mathcal H_S}} [q_X] \le 1/|{\mathcal H_S}|$ since $q_X \le 1$.  Once again we apply Markov's inequality to get
\beq
\Pr_{X \in {\mathcal H_S}} \left[ q_X > \frac{k}{|{\mathcal H_S}|} \right] < \frac{1}{k}.
\eeq
For similar reasons as before, it follows that
\beq
\Pr_{(\pi, \sigma), C} \left[ q_S > \frac{k}{|{\mathcal H_S}|} \right] < \frac{1}{k}.
\eeq
Finally, we set $\theta := \varepsilon\, \delta / 8$ and $k := 4/\delta$.  Combing all the above with the union bound, we get
\begin{align}
\Pr \left[ |\widetilde q_S - p_S| > \frac{\varepsilon}{|{\mathcal H_S}|} \right]
&\le \Pr \left[ |\widetilde q_S - q_S| > \frac{\varepsilon}{2} \cdot \frac{1}{|{\mathcal H_S}|} \right] +
\Pr \left[ |q_S - p_S| > \frac{\varepsilon}{2} \cdot \frac{1}{|{\mathcal H_S}|} \right] \nonumber\\
&\le \Pr\left[q_S > \frac{4}{\delta |{\mathcal H_S}|} \right] +
\Pr \left[ |\widetilde q_S - q_S| > \theta q_S \right] +
\Pr \left[ \Delta_S > \frac{\varepsilon}{2} \cdot \frac{1}{|{\mathcal H_S}|} \right] \nonumber\\
&<  \frac{\delta}{4} + \frac{\delta}{2} + \frac{\delta}{4},
\end{align}
where the probability is over choices of $C$, $(\pi,\sigma)$ and the randomness of the approximate counting procedure.
\end{proof}

\subsection{From probabilities to permanents} \label{subsec:probtoperm}

We just showed that, from the assumption that there exists a classical algorithm $\mathcal O$ that simulates the output distribution of a GBS experiment to within total variation distance $\beta$, it follows that there exists an $\FBPP^{\NP^\mathcal{O}}$ algorithm that produces an estimate to any outcome $S$ to within error $\varepsilon / |\mathcal{H}_S|$. We now show how to leverage this result to obtain an estimate for the permanent of a Gaussian matrix, i.e., to solve $\GPEpa$.

We begin by embedding the matrix we care about, $X$, within the GBS transition matrix $C$. Recall that $X \sim \mathcal N (0,1)^{n \times n}_{\mathbb C}$, and so there is an efficient algorithm which takes $X$ as input and produces a matrix $C' \sim \mathcal N (0,1)^{m \times m}_{\mathbb C}$ which contains $X$ as its submatrix, occurring in a random position.

Now recall that $C'$ cannot be directly implemented in the GBS setup, since we require its singular values to be between $[0,1)$. Furthermore, we wish to work in the dilute limit. Therefore, we rescale the matrix $C'$ by a factor of $1/(\alpha \sqrt{m})$, as discussed in \Cref{subsec:gaussian}, resulting in the transition matrix $C \sim \G$.

Suppose now, without loss of generality, that $X/(\alpha\sqrt{m})$ appears as the top left submatrix of $C$. Consider now the 2$n$-photon no-collision outcome, $S$, which contains a single photon in each of the modes $\{1, \ldots, n\}$ and of the modes $\{m+1, \ldots m+n\}$. By the discussion in \Cref{subsec:model}, the probability of this outcome is
\beq \label{eq:probtoperm}
\Pr(S)= \frac{1}{\norm}\frac{1}{m^n \alpha^{2n}} \left|\Per X\right|^2.
\eeq
From this correspondence between the probability of outcome $S$ and $\left|\Per X\right|^2$, together with \Cref{thm:additive_estimation}, we obtain the following result:
 
\begin{corollary}
\label{cor:perapprox}
 Let $X \sim \mathcal N (0,1)^{n \times n}_{\mathbb C}$. There exists an $\FBPP^{\NP^{\mathcal O}}$ algorithm which estimates $\left|\Per X\right|^2$ to additive error
 \beq
 \varepsilon \frac{\norm m^n \alpha^{2n}}{|{\mathcal H_S}|},
 \eeq
with probability at least $1-\delta$ over choice of $X$ in time  $\poly(m,1/\varepsilon,1/\delta)$.
\end{corollary}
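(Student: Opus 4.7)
The plan is to invoke \Cref{thm:additive_estimation} on a carefully constructed BipartiteGBS experiment whose relevant output probability encodes $|\Per X|^2$, then invert \Cref{eq:probtoperm} to extract the permanent. First, I would build the transition matrix: given $X \sim \mathcal N(0,1)^{n \times n}_{\mathbb C}$, sample the remaining $m^2 - n^2$ entries of an $m \times m$ matrix $C'$ i.i.d.\ from $\mathcal N(0,1)_{\mathbb C}$ and place $X$ into a uniformly random $n \times n$ block (specified by a random pair of row/column position sets). By independence of the Gaussian entries and the permutation invariance of the ensemble, the result is $C' \sim \mathcal N(0,1)^{m \times m}_{\mathbb C}$; rescaling yields $C := C'/(\alpha\sqrt{m}) \sim \G$, a legitimate BipartiteGBS transition matrix in which $X/(\alpha\sqrt{m})$ sits as an $n \times n$ submatrix.

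Next, take $S$ to be the $2n$-photon no-collision outcome with one photon in each of the $n$ modes corresponding to the rows of $X$ and one photon in each of the $n$ modes (in the second half) corresponding to the columns of $X$. By construction, $C_{S,T}$ equals $X/(\alpha\sqrt{m})$ up to an irrelevant row/column permutation that does not affect the permanent, so \Cref{Eq: main_dbn} gives $\Pr(S) = |\Per X|^2 / (\norm\, m^n \alpha^{2n})$, which is exactly \Cref{eq:probtoperm}. Now invoke \Cref{thm:additive_estimation} with error parameters $\varepsilon$ and $\delta$: with probability $1-\delta$ over the choice of $C$, an $\FBPP^{\NP^{\mathcal O}}$ algorithm produces an estimate $\widetilde{p}$ satisfying $|\widetilde p - \Pr(S)| \le \varepsilon/|\mathcal H_S|$ in time $\poly(m,1/\varepsilon,1/\delta)$. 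Outputting $\norm\, m^n \alpha^{2n}\, \widetilde p$ then yields an estimate of $|\Per X|^2$ with additive error $\varepsilon\, \norm\, m^n \alpha^{2n}/|\mathcal H_S|$, as claimed.

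Two small points warrant a sentence. First, \Cref{thm:additive_estimation} promises the $1-\delta$ bound over choices of $C$, whereas the corollary requires the bound over choices of $X$; since $C$ arises from $X$ together with independent auxiliary Gaussian randomness and a random embedding position, marginalizing the auxiliary randomness preserves the failure probability (and one can absorb into $\delta$ the negligible event that the rescaled $C$ has a singular value outside $[0,1)$, on which $\mathcal D_C$ is defined to be trivial). Second, the reduction must be efficient, i.e.\ $m = \poly(n)$, which is guaranteed in the dilute regime $m = \Theta(n^2)$. There is no serious obstacle here beyond bookkeeping; all of the substantive work has already been done in \Cref{thm:additive_estimation} and in deriving \Cref{eq:probtoperm}. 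The reason this corollary is worth recording is that the blow-up factor $\norm\, m^n \alpha^{2n}/|\mathcal H_S|$ is exponentially large, and controlling it tightly using \Cref{lem:boundZ} together with the bosonic-birthday lower bound on $|\mathcal H_S|$ is what will eventually allow \Cref{thm:intro_main} to match the additive error tolerated by $\GPEpa$.
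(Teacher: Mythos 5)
Your proposal follows essentially the same route as the paper's proof: embed $X$ at a uniformly random position inside a larger i.i.d.\ Gaussian matrix, rescale by $1/(\alpha\sqrt m)$ to obtain $C \sim \G$, target the $2n$-photon no-collision outcome supported on the rows and columns of $X$, and convert the estimate of $\Pr(S)$ from \Cref{thm:additive_estimation} into an estimate of $|\Per X|^2$ via \Cref{eq:probtoperm}. The one place where your argument is looser than the paper's is the bad event that the rescaled $C$ has a singular value $\ge 1$: its probability is roughly $m e^{-\Omega(m\alpha^4)}$, a quantity fixed by $m$ and $\alpha$, so it cannot simply be ``absorbed into $\delta$'' when $\delta$ is chosen smaller than that; the paper closes this corner case by noting that in such a regime $\poly(1/\delta)$ running time is already exponential in $m$, so $|\Per X|^2$ can be computed exactly by Ryser's formula.
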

\begin{proof}
We begin by following the procedure described previously in order to embed $X$ as a submatrix of the GBS transition matrix $C$. The entire procedure will fail if $C$ has singular values greater than 1 since such a transition matrix does not correspond to a valid GBS experiment.  However, by \Cref{lem:max_eigen_wishart}, the maximum singular value of $C$ is greater than $2/\alpha + \epsilon$ with probability at most $m e^{-m \alpha^4 \epsilon / 8}$.  Let us now set $\epsilon$ to $1 - 2/\alpha$.  If $m e^{-m \alpha^4 \epsilon / 8}$ is less than, say $\delta/2$, we can assume the singular values are less than $1$ by the union bound.  However, one can check that the experiment might fail with probability greater than $\delta/2$ if $1/\delta = \Omega(\exp(m \alpha^4))$.  In this case, however, our algorithm can run in time exponential in $m$ since every polynomial in $\exp(m)$ is bounded by another polynomial in $1/\delta$ (and recall we allow that our algorithm runs in time $\poly(1/\delta)$). Since $\left|\Per X\right|^2$ can be computed \emph{exactly} in time exponential in $m$ by Ryser's formula, the theorem still holds even in this extreme regime of $\delta$.

We now apply \Cref{thm:additive_estimation}, which provides an estimate of $\Pr(S)$ to within additive error $\varepsilon/|\mathcal{H}_S|$. By Eq.~\eqref{eq:probtoperm}, this immediately yields an additive estimate for $\left|\Per X\right|^2$ to within the desired error.
\end{proof}

Notice that we have yet to use the fact that we are working in the dilute limit---indeed, the previous theorems have been stated for a general scaling parameter $\alpha$.  We will now need to make this restriction. That is, we will show that any classical oracle $\mathcal O$ which approximately samples from the output distribution of our GBS experiment in the dilute limit can be leveraged to compute Gaussian permanents.

The main outstanding step in this proof is to find a sufficiently large subspace $\mathcal H_S$ in which to hide the outcome $S$, such that the error bound obtained in \Cref{cor:perapprox} matches the required error to solve $\GPEpa$. For that purpose, we restrict ourselves to the no-collision subspace with $n$ photon pairs, where $n$ is the size of the matrix permanent we wish to estimate in $\GPEpa$. More specifically, we set the scale parameter $\alpha$ such that the expected   number of photon pairs is exactly $n$, and restrict our attention to states $S = (s_1, \ldots s_m; t_1, \ldots t_m)$ where each $s_i$ or $t_j$ is only equal to 0 or 1, and where $\sum s_i = \sum t_j = n$. To clarify, the size of the matrix is fixed and \emph{not} a random number.  Nevertheless, we call this parameter $n$ suggestively since the total photon number of the experiment is unlikely to be too far from $\mathbb E [\braket{n}]$ by \Cref{lem:boundn}.  Remarkably, however, our proof never explicitly invokes this fact.

\begin{theorem}
\label{thm:main}
$\GPEpa \in \FBPP^{\NP^\mathcal{O}}$.
\end{theorem}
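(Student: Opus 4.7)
The plan is to feed Corollary~\ref{cor:perapprox} with the right choice of scale parameter and then show that the three factors $\norm$, $\alpha^{2n}$, and $|\mathcal H_S|$ combine to give an additive error matching the $\GPEpa$ target of $\epsilon n!$. Given the input $X\sim\mathcal N(0,1)^{n\times n}_\mathbb C$, I would pick $m=\Theta(n^2)$ and set $\alpha^2=m/n$, so that $\mathbb E[\langle n\rangle]=m/\alpha^2=n$ and the experiment sits squarely in the dilute limit ($\alpha=\Theta(m^{1/4})$). Following \Cref{subsec:probtoperm}, I embed $X/(\alpha\sqrt m)$ as a uniformly random submatrix of $C\sim\G$, and let $S$ be the no-collision outcome with a single photon in each of the $n$ rows and columns occupied by the embedded $X$. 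Corollary~\ref{cor:perapprox} then produces, in $\FBPP^{\NP^\mathcal O}$, an estimate of $|\Per X|^2$ with additive error $\varepsilon\cdot\norm\, m^n\alpha^{2n}/|\mathcal H_S|$, and it suffices to prove
\beq\label{eq:plan_target}
\frac{\norm\, m^n\alpha^{2n}}{|\mathcal H_S|}\le \poly(m)\cdot n!
\eeq
with probability $1-O(\delta)$ over $C$: then setting $\varepsilon$ to an appropriate inverse polynomial in $(m,1/\epsilon,1/\delta)$ and union-bounding the failure events of Corollary~\ref{cor:perapprox} and Lemma~\ref{lem:boundZ} delivers the required $\epsilon n!$ accuracy within the allowed $\poly(m,1/\epsilon,1/\delta)$ runtime.

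For the denominator, $\mathcal H_S$ is the set of all no-collision outcomes with $n$ photons in each half, so $|\mathcal H_S|=\binom{m}{n}^2$; the elementary lower bound $\binom{m}{n}\ge m^n/(n!\cdot e^{n(n-1)/(m-n+1)})$ combined with $m=\Theta(n^2)$ gives $|\mathcal H_S|\ge e^{-O(1)}\,m^{2n}/(n!)^2$. For the numerator, the chosen $\alpha$ yields $m/\alpha^2=n$ and $m/\alpha^4=n^2/m=O(1)$, so Lemma~\ref{lem:boundZ} provides $\norm= O(e^n/\delta)$ with probability at least $1-\delta$. Plugging both into \eqref{eq:plan_target} and using $\alpha^{2n}=(m/n)^n$,
\beqs
\frac{\norm\, m^n\alpha^{2n}}{|\mathcal H_S|}= O\!\left(\frac{e^n(n!)^2}{\delta\, n^n}\right)= O\!\left(\frac{\sqrt n\cdot n!}{\delta}\right)
\eeqs
by Stirling's approximation $n!\sim\sqrt{2\pi n}(n/e)^n$, which establishes \eqref{eq:plan_target}.

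The main obstacle, and the reason this argument is delicate rather than routine, is the tightness of the bound on $\norm$. The naive worst-case estimate $\norm\le(1-4/\alpha^2)^{-m}\approx e^{4m/\alpha^2}=e^{4n}$---obtained from the support of the quarter-circle law alone---would inflate the right-hand side of \eqref{eq:plan_target} to $\Theta(e^{3n})\cdot n!$, an exponential overshoot that no polynomial rescaling of $\varepsilon$ can absorb, so the theorem would fail. Reducing the exponent from $4m/\alpha^2$ to $m/\alpha^2$ is thus the essential technical ingredient, and it reflects that the singular values of a Gaussian $C$ behave as if drawn approximately independently from the quarter-circle density rather than all concentrating near its right endpoint. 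Once this tight bound is in hand, the matching $e^{-n}$ factor from Stirling's $(n!)^2/n^n \sim \sqrt{2\pi n}\cdot n!/e^n$ leaves only a polynomial gap, and the rest of the argument is bookkeeping over the failure probabilities of Corollary~\ref{cor:perapprox} and Lemma~\ref{lem:boundZ}.
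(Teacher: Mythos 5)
Your proposal is correct and follows essentially the same route as the paper's own proof: embed $X$ via \Cref{cor:perapprox}, lower-bound $|\mathcal H_S|=\binom{m}{n}^2$ by $\Theta\bigl(m^{2n}/(n!)^2\bigr)$ in the dilute limit, invoke \Cref{lem:boundZ} to get $\norm = O(e^{n}/\delta)$, and cancel that $e^{n}$ against the one produced by Stirling's approximation, arriving at the same $O(\sqrt{n}/\delta)$ bound on the error ratio. The only detail the paper adds that you omit is the degenerate regime $m < \ln(1/\delta)$, where the hypothesis of \Cref{lem:boundZ} fails but one may compute $|\Per X|^2$ exactly via Ryser's formula within the allowed $\poly(1/\delta)$ time.
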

\begin{proof}
Recall that we are given a matrix $X \sim \mathcal N_{\mathbb C}^{n \times n}$, and we wish to construct an $\FBPP^{\NP^\mathcal{O}}$ algorithm which estimates $|\Per(X)|^2$ to additive error $\epsilon n!$ with probability at least $1-\delta$ in time $\poly(n, 1/\epsilon, 1/\delta)$.  

We employ \Cref{cor:perapprox}, which gives an $\FBPP^{\NP^{\mathcal O}}$ algorithm that estimates of $\left\lvert\Per X\right\lvert^2$ to within additive error
\beq
\label{eq:source_error}
 \varepsilon\frac{\norm m^n \alpha^{2n}}{|{\mathcal H_S}|}
\eeq
with probability at least $1 - \Delta$ in time $\poly(m, 1/\varepsilon, 1/\Delta)$. We will set $\Delta = \delta/2$ and set $\varepsilon$ such that Eq.~\eqref{eq:source_error} is bounded by $\epsilon n!$.  To this end, let us define the ratio
\beq
\label{eq:I_defn}
\mathcal I := \epsilon/\varepsilon = \frac{\norm \alpha^{2n} m^n}{|{\mathcal H_S}|n!}.
\eeq
Our goal now is to bound how large $\mathcal I$ can be. If it is at most polynomially large, then our proof is concluded since we can set $\varepsilon = \epsilon / \poly(n,1/\delta)$ and thus achieve the precision required by $\GPEpa$. First recall that we are working in the dilute limit, so let us set $m = c n^2$ and $\alpha^2 = \sqrt{cm}$ where $c > 1$ is some constant to be determined later.  Furthermore, in each half of the output we have $n$ photons in $m$ modes without collisions, and so
\beq
\label{eq:space_bound}
|\mathcal H_S| = \binom{m}{n}^2 \ge \left(1 - \frac{1}{c}\right)^2 \left( \frac{m^n}{n!} \right)^2 := \gamma_c \left( \frac{m^n}{n!} \right)^2
\eeq
where we have used\footnote{Whenever $c, n \ge 1$, we have
$$
\binom{m}{n} = \frac{\prod_{i=0}^{n-1} (m-i)}{n!} \ge \frac{(m-n)^n}{n!} = \frac{(1- \frac{1}{cn})^n m^n}{n!} \ge \left(1 - \frac{1}{c}\right) \frac{m^n}{n!}.
$$.}
that $m = c n^2$ and implicitly defined a new constant $\gamma_c > 0$. From this, we have

\beq
\label{eq:I_bound}
\mathcal I =  \frac{\norm \alpha^{2n} m^n}{\binom{m}{n}^2 n!}
\le \frac{\norm \alpha^{2n} n!}{\gamma_c^2 m^n}
\le \frac{2 \norm \sqrt{2 \pi n}}{\gamma_c^2} \left(\frac{\alpha^2 n}{m e}\right)^n
= \frac{2 \norm \sqrt{2 \pi n}}{\gamma_c^2} \frac{1}{e^n}
\eeq
where the first inequality uses Eq.~\eqref{eq:space_bound}, and the second uses a Stirling approximation bound. By \Cref{lem:boundZ}, we have that
\beq
\label{eq:Z_bound}
\Pr \left[ \norm \geq \frac{4 e^{272/c}}{\delta} e^{n} \right] \leq \frac{\delta}{2}
\eeq
as long as $m \ge \ln(2/\delta)$.  Once again, if $m < \ln(2/\delta)$, we get $1/\delta = \Omega(e^{c n^2})$ and so we could have computed the permanent of $X$ explicitly using Ryser's formula. Therefore, let us assume Eq.~\eqref{eq:Z_bound} and combine it with Eq.~\eqref{eq:I_bound} by the union bound to conclude that $\mathcal I = O(\sqrt{n} / \delta)$ with probability at least $1-\delta$ over the Gaussian ensemble.  This completes the proof.
\end{proof}


\section{Dealing with collision outcomes}
\label{sec:collision_reduction}
In this section, we establish connections between permanents of matrices with repeated rows and columns (corresponding to collision outcomes) and permanents of matrices without repetitions. The end goal is to establish a reduction between the hardness of computing permanents of Gaussian random matrices with and without repetitions. These results apply generally and may also be useful for BosonSampling.

To start, notice that we need to define new variants of the Gaussian Permanent Estimation problem where the goal is to estimate permanents of matrices with repeated rows and columns.  Much like how the error tolerance for GPE is based on the expectation of the permanent, the error tolerance for these new problems is based on the expectation of permanents with repetitions for which we need the following result (proof in \Cref{app:lemma_proofs}):
\begin{restatable}{lemma}{repperexpectation}
Let $A \sim \mathcal N(0, 1)^{c \times c}_\mathbb C$. Then
\begin{equation}
\mathbb E\left[\left|\Per(A_{S,T})\right|^2\right] = n! \prod_{i=1}^c s_i! \prod_{j=1}^c t_j!
\end{equation}
for any repetition vectors $S, T \in \mathbb N^c$ such that $n = \sum_{i=1}^c s_i = \sum_{j=1}^c t_j$.
\end{restatable}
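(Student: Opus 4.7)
The plan is to expand $|\Per(A_{S,T})|^2$ as a double sum over pairs of permutations and compute each term's expectation via Wick's theorem for circularly symmetric complex Gaussians.

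First, I would encode the repetition structure by two surjections $\rho,\pi:[n]\to[c]$ with fibre sizes $|\rho^{-1}(i)|=s_i$ and $|\pi^{-1}(j)|=t_j$, so that $(A_{S,T})_{k,\ell}=A_{\rho(k),\pi(\ell)}$. Expanding the permanent and its conjugate,
\[
|\Per(A_{S,T})|^2 \;=\; \sum_{\sigma,\tau\in S_n} \prod_{k=1}^n A_{\rho(k),\pi(\sigma(k))}\,\overline{A_{\rho(k),\pi(\tau(k))}}.
\]

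Next, I would take the expectation term-by-term. Because the entries are i.i.d.\ with $\mathbb{E}[A_{ij}A_{k\ell}]=0$ and $\mathbb{E}[A_{ij}\overline{A_{k\ell}}]=\delta_{ik}\delta_{j\ell}$, Wick's theorem says the only surviving contractions pair each $A$ with an $\overline{A}$; such pairings are indexed by $\phi\in S_n$. The $\phi$-th pairing of the $(\sigma,\tau)$-summand contributes $1$ precisely when $\rho(k)=\rho(\phi(k))$ and $\pi(\sigma(k))=\pi(\tau(\phi(k)))$ for all $k$, and $0$ otherwise. Consequently
\[
\mathbb{E}\bigl[|\Per(A_{S,T})|^2\bigr] \;=\; \#\bigl\{(\sigma,\tau,\phi)\in S_n^3 \,:\, \rho=\rho\circ\phi,\; \pi\circ\sigma=\pi\circ\tau\circ\phi\bigr\}.
\]

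Finally, I would evaluate this count in three factors. The condition $\rho=\rho\circ\phi$ restricts $\phi$ to the row Young subgroup (the stabilizer of $\rho$), of size $\prod_i s_i!$. For any such $\phi$ and any $\sigma\in S_n$, rewriting the second condition as $\pi\circ\tau=\pi\circ\sigma\circ\phi^{-1}$ prescribes, for each $k$, which column block $\pi^{-1}(j)$ must contain $\tau(k)$; since $\sigma\phi^{-1}$ is a bijection, the multiset of prescribed blocks exactly matches the available blocks, so the bijections $\tau$ realizing this assignment number $\prod_j t_j!$. Multiplying the three independent counts yields $n!\cdot \prod_i s_i!\cdot \prod_j t_j!$. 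The main place to be careful is this last step: one must verify that the constraint on $\tau$ genuinely cuts out a full coset of the column Young subgroup (rather than something smaller) for \emph{every} admissible $(\sigma,\phi)$, which is precisely where the bijectivity of $\sigma\phi^{-1}$ is used. Everything else is routine bookkeeping once the Wick expansion is in place.
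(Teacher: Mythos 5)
Your proof is correct, and it takes a genuinely different (though closely related) route from the paper's. The paper first collects the monomials of $\Per(A_{S,T})$ by their ``type''---a contingency table $M\in\mathbb{N}^{c\times c}$ with row sums $s_i$ and column sums $t_j$ recording how many factors come from each constant block---computes the multiplicity $\prod_i s_i!\prod_j t_j!/\prod_{i,j}M_{i,j}!$ of each type, and then needs only the single-variable moment identity $\mathbb{E}[a^{m}\bar a^{m'}]=m!\,\delta_{m,m'}$ together with the combinatorial identity $n!=\sum_M \prod_i s_i!\prod_j t_j!/\prod_{i,j}M_{i,j}!$ to finish. You instead keep the double sum over $(\sigma,\tau)$ intact, invoke the full multivariate Wick/Isserlis theorem to introduce the pairing permutation $\phi$, and reduce the expectation to counting triples $(\sigma,\tau,\phi)$ satisfying two conditions phrased in terms of the row and column Young subgroups. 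Your version buys a cleaner conceptual picture---the three factors $n!$, $\prod_i s_i!$, $\prod_j t_j!$ appear directly as the sizes of three independent choices---and avoids the multiplicity bookkeeping and the final identity; the paper's version is more elementary in that it uses only the moments of a single complex Gaussian rather than the general Wick formula. The one subtlety you flag, namely that the constraint $\pi\circ\tau=\pi\circ\sigma\circ\phi^{-1}$ carves out a full coset of the column Young subgroup of size exactly $\prod_j t_j!$ because $\sigma\phi^{-1}$ is a bijection and hence the prescribed target-block assignment has the same fibre-size profile as $\pi$, is precisely the right thing to verify, and it does hold.
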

We can now define the following problems that generalize $\GPEpa$ and $\GPEa$:
\begin{definition}\label{def:RGPE}
Given $A \sim \mathcal N(0, 1)^{c \times c}_\mathbb C$, vectors $S, T \in \mathbb{N}^c$ with $n = \sum_{i=1}^c s_i = \sum_{j=1}^c t_j$, and accuracy parameters $\varepsilon,\delta>0$, we define the following problems for which we are required to output a complex number $\tilde{P}$ such that
\begin{enumerate}
    \item $\RGPEpa : \left|\tilde{P}-\left|\Per(A_{S,T})\right|^2\right|\leq \varepsilon n! \prod_{i,j=1}^c s_i! t_j!$,
    \item $\RGPEa : \left|\tilde{P}-\Per(A_{S,T})\right|\leq \varepsilon \sqrt{n! \prod_{i,j=1}^c s_i! t_j!}$,
\end{enumerate}
with probability at least $1-\delta$ over the randomness of $A$ in time $\poly(n, c, 1/\varepsilon, 1/\delta)$.\footnote{Note that we have chosen to define $A$ as a $c \times c$ Gaussian matrix, rather than an $m \times m$ matrix. This choice was motivated by the fact that the complexity of the RGPE problem will depend on the number of total clicks in the repetition patterns, rather than the total number of modes. In fact, this idea will be important later in \Cref{thm:PerAsPoly}, where we require that there is at least one photon in every mode ($s_i,t_j \ge 1$).}
\end{definition}

Given that we do not have a conclusive proof of hardness in the high-collision regime, some may wonder whether we have defined $\RGPEpa$ correctly. Do we need to make the error scale with the expectation as is the case with $\GPEpa$? In \Cref{sec:beyond_dilute}, we give numerical evidence that this is the right choice---i.e., Stockmeyer counting on the BipartiteGBS distribution in the high-collision regime still solves the $\RGPEpa$ problem. Moreover, if the error scaling were much smaller, then the estimate it produced would be insufficient. 

In \cite{aaronson2013}, the authors show that $\GPEpa$ and $\GPEa$ are, in fact, equivalent, up to the Permanent Anti-Concentration Conjecture (PACC), which states that the permanents of random Gaussian matrices are not too concentrated around 0.\footnote{Formally, the Permanent Anti-Concentration Conjectures states that there is some polynomial $p$ such that for all $c$ and $\delta > 0$, $\Pr_{A \sim \mathcal N(0, 1)^{c \times c}_\mathbb C}[|\Per(A)|^2 < c!/p(c, 1/\delta)] < \delta$.} With some amount of work, it is possible to prove a similar equivalence between problems $\RGPEpa$ and $\RGPEa$ assuming a version of the PACC which includes permanents of Gaussian matrices with repetitions. Unfortunately, even this generalization has its limits. It can be shown that, for a matrix composed of $c$ copies of a single row of Gaussian elements, the permanent does not anti-concentrate. Therefore, even if the PACC is true, we expect its generalization to fail above a certain number of row/column repetitions. The hope is that the generalized PACC might nonetheless hold for those repetition patterns which occur naturally in the GBS distribution. We do not pursue this here as this equivalence is not required for the arguments that follow.

This rest of this section will show how solutions to RGPE can be used to solve GPE.  In \Cref{sec:permanentreduc}, we show how the permanent of a matrix with repetitions can be written as a polynomial that depends on the permanent of the matrix without repetitions. In \Cref{sec:poly_interpolation}, we use this result to prove an efficient reduction between $\RGPEa$ and $\GPEa$ and an inefficient reduction between $\RGPEpa$ and $\GPEpa$. The latter result shows that BipartiteGBS is hard when there are $O(1)$ collisions in the outcomes.

\subsection{Permanents of matrices with repeated rows and columns} \label{sec:permanentreduc}
Given a matrix $A$ and repetition patterns $S,T$, our goal for this section is to construct some matrix $B$ such that $\Per(B_{S,T})$ can be expressed as a polynomial whose constant coefficient is proportional to $\Per(A)$.  Therefore, given an oracle to compute the permanents of matrices with this collision pattern, we can infer (through polynomial interpolation) the permanents of matrices that have no repeated rows or columns.  The degree of the polynomial will depend on the total number of collisions $k := k_S + k_T$, i.e., the number of collisions in repetition patterns $S$ and $T$, respectively.  Note that here we are allowing $k_S \neq k_T$.  If this is the case, the matrix $B$ must be rectangular so that the matrix $B_{S,T}$ is square (for which the permanent is well-defined).  Formally, we show the following:
\begin{theorem}\label{thm:PerAsPoly}
Let $A \in \mathbb C^{c\times c}$ and $S,T \in \mathbb N^c$ such that $s_i,t_j \ge 1$ be given.  There is a $(c + k_T) \times (c + k_S)$ matrix $B := B[z;x,y]$ obtained by adding $k_S$ columns and $k_T$ rows to the matrix $A$ with entries that depend on complex variables $z$, $x = \{x_{i,j}^{(\ell)}\}$, and $y = \{y_{i,j}^{(\ell)}\}$ such that
\begin{align}
\Per(B_{S,T}) &=  \xi\Per(A) + \gamma_1 z + \gamma_2 z^2 + \ldots + \gamma_k z^k \label{eq:PerAsPoly} \\
\xi &= \prod_{i=1}^cs_i!t_i! \left(\prod_{j=1}^{t_i -1}x^{(i)}_{i, j} \right) \left(\prod_{j=1}^{s_i -1}y^{(i)}_{i, j} \right). \label{eq:xi_def}
\end{align}
The coefficients $\gamma_1,\ldots, \gamma_k$ are functions of the entries of $A$, $S$, $T$, $x$ and $y$. Furthermore, whenever $x_{i,j}^{(\ell)}$, $y_{i,j}^{(\ell)}$ and all entries of $A$ are i.i.d.\ standard complex Gaussians, and $|z| = 1$, we have that $B \sim \mathcal N(0, 1)^{(c + k_T) \times (c + k_S)}_\mathbb C$.
\end{theorem}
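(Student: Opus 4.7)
The plan is to build $B$ by placing $A$ in the upper-left $c \times c$ block and then augmenting with $k_T$ extra rows and $k_S$ extra columns, noting that under the hypothesis $s_i, t_j \ge 1$ one has $k_S = k_T = n - c$, so $B$ is square $n \times n$. I would interpret $B_{S,T}$ as repeating according to the extended patterns $\hat S = (s_1, \ldots, s_c, 1, \ldots, 1)$ and $\hat T = (t_1, \ldots, t_c, 1, \ldots, 1)$ (with trailing ones for the $k_T$ added rows and $k_S$ added columns respectively), giving a square matrix of side length $2n - c$. The heart of the construction is to design the added rows and columns so that, at $z = 0$, their sparsity forces a unique combinatorial type of permanent contribution that collapses to a multiple of $\Per(A)$.

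Concretely, I would index the $k_S$ added columns by pairs $(\ell, j)$ with $\ell \in \{1, \ldots, c\}$ and $j \in \{1, \ldots, s_\ell - 1\}$, and the $k_T$ added rows analogously by $(\ell', j')$ with $j' \in \{1, \ldots, t_{\ell'} - 1\}$. The entries of $B$ are then defined as follows: $B_{i, i'} = A_{i, i'}$ for original rows and columns; $B_{i, (\ell, j)} = y^{(\ell)}_{i, j}$ when $i = \ell$ and $B_{i, (\ell, j)} = z \, y^{(\ell)}_{i, j}$ when $i \neq \ell$; symmetrically $B_{(\ell', j'), i'} = x^{(\ell')}_{i', j'}$ when $i' = \ell'$ and $z$ times the Gaussian otherwise; and every ``bottom-right'' entry $B_{(\ell', j'), (\ell, j)}$ is $z$ times a fresh independent standard complex Gaussian. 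Since every designated variable is already a standard complex Gaussian and unit-modulus multiplication preserves that distribution by rotational invariance, it is immediate that $B \sim \mathcal N(0, 1)^{(c + k_T) \times (c + k_S)}_{\mathbb C}$ whenever $|z| = 1$.

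To extract the constant term I would evaluate $B_{S, T}$ at $z = 0$ and use the fact that each added column $(\ell, j)$ has nonzero entries only in the $s_\ell$ copies of row $\ell$, while each added row $(\ell', j')$ has nonzero entries only in the $t_{\ell'}$ copies of column $\ell'$. Thus in any nonzero term of the permanent, the $s_\ell - 1$ added columns for group $\ell$ must be matched to $s_\ell - 1$ of the $s_\ell$ copies of row $\ell$, contributing $\prod_{j=1}^{s_\ell - 1} y^{(\ell)}_{\ell, j}$ per assignment with $s_\ell \cdot (s_\ell - 1)! = s_\ell!$ distinct assignments; symmetrically, the added rows contribute a factor $t_\ell! \prod_{j=1}^{t_\ell - 1} x^{(\ell)}_{\ell, j}$. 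After all the added rows and columns have been pinned down in this way, exactly one copy of each original row and one copy of each original column is left over, and the sum over the ways these pair among themselves is precisely $\Per(A)$. Taking the product over $\ell$ yields $\Per(B_{S, T})\big|_{z = 0} = \xi \, \Per(A)$ with $\xi$ exactly as in \Cref{eq:xi_def}. For the higher powers of $z$, every deviation from this ``diagonal'' assignment uses either an off-diagonal added-column/row entry or a bottom-right entry, each carrying a single factor of $z$, and a short combinatorial argument bounds the number of such deviations in any single permanent term by $k_S + k_T = k$, so $\Per(B_{S,T})$ is a polynomial in $z$ of degree at most $k$.

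The main obstacle I anticipate is the careful bookkeeping of the $z = 0$ counting: one has to verify that the $s_\ell!$ factor (rather than $(s_\ell - 1)!$) comes out correctly---the extra factor of $s_\ell$ comes from choosing which of the $s_\ell$ row copies is left over to pair into the $A$-permanent---and simultaneously that these leftover pairings really reconstruct $\Per(A)$ rather than some weighted sum carrying extra multiplicities. A secondary subtlety is justifying the degree bound $k_S + k_T$ rather than something larger: this follows because the added columns can supply at most $k_S$ off-diagonal-or-bottom-right entries in any single term and the added rows at most $k_T$, while a bottom-right entry consumes one unit of budget from each side, so the total $z$-count is at most $k_S + k_T$.
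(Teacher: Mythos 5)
Your proposal is correct and takes essentially the same route as the paper: the paper builds the identical matrix $B$ by composing a row-only augmentation lemma with its column analogue, whereas you write the combined block matrix directly, but the placement of the $y$- and $x$-blocks with off-diagonal $z$ factors, the $z=0$ pigeonhole argument forcing each added column of group $\ell$ onto a copy of row $\ell$, the $s_\ell\cdot(s_\ell-1)!=s_\ell!$ multiplicity count reconstructing $\xi\Per(A)$, the degree-$k$ bound, and the rotational-invariance argument for Gaussianity at $|z|=1$ all match the paper's proof.
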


While the majority of this proof is given \Cref{app:lemma_proofs}, let us briefly describe the form of the construction, which results from two separate (but essentially identical) steps.  We first consider the case where only the rows of $A$ are repeated.  That is, we construct a rectangular matrix $B'$ such that when \emph{only} the row repetition pattern $S$ is applied (yielding a square matrix $B_S'$) the analogous version of Eq.~\eqref{eq:PerAsPoly} holds (i.e., the constant coefficient of the permanent polynomial is proportional to the permanent of $A$).  In the second step, we simply treat $B_S'$ as if it were the original matrix $A$ and do the same construction except for the repeated columns specified by the repetition pattern $T$.  

Let us now describe this first step, the construction of $B'$.  For each $\ell$ such that $s_{\ell} > 1$, append to $A$ the $c\times (s_\ell-1)$ matrix $V^{(\ell)}$ with entries
\beq
V^{(\ell)}_{ij} =
\begin{cases}
y_{i,j}^{(\ell)} & \text{ if } i = \ell \\
z y_{i,j}^{(\ell)} & \text{ o.w. }
\end{cases}
\eeq
For example, the $2\times 2$ matrix $A = \begin{pmatrix}
a_{1,1} & a_{1,2}  \\
a_{2,1} & a_{2,2}
\end{pmatrix}$ with repetition vector $S=(3,2)$ has
\begin{align*}
    V^{(1)} = \begin{pmatrix}
    y^{(1)}_{1, 1} & y^{(1)}_{1, 2} \\
    z\, y^{(1)}_{2, 1} & z\,y^{(1)}_{2, 2}
    \end{pmatrix} \hspace{0.5cm}  \text{ and } \hspace{0.5cm} V^{(2)} = \begin{pmatrix}
    z\, y^{(2)}_{1, 1}  \\
    y^{(2)}_{2, 1}
    \end{pmatrix},
\end{align*}
which leads to the matrix
\begin{align*}
B' =
   \begin{pmatrix}
   A & | & V^{(1)} & | & V^{(2)}
   \end{pmatrix} = \begin{pmatrix}
    a_{1,1} & a_{1,2} & y^{(1)}_{1, 1} & y^{(1)}_{1, 2} & z\, y^{(2)}_{1, 1} \\
    a_{2,1} & a_{1,2} & z\, y^{(1)}_{2, 1} & z\,y^{(1)}_{2, 2} & y^{(2)}_{2, 1}
    \end{pmatrix}.
\end{align*}
When we apply the row repetition pattern $S$, we get
\begin{align*}
    B'_S = \begin{pmatrix}
    a_{1,1} & a_{1,2} & y^{(1)}_{1, 1} & y^{(1)}_{1, 2} & z\, y^{(2)}_{1, 1} \\
    a_{1,1} & a_{1,2} & y^{(1)}_{1, 1} & y^{(1)}_{1, 2} & z\, y^{(2)}_{1, 1} \\
    a_{1,1} & a_{1,2} & y^{(1)}_{1, 1} & y^{(1)}_{1, 2} & z\, y^{(2)}_{1, 1} \\
    a_{2,1} & a_{2,2} & z\, y^{(1)}_{2, 1} & z\,y^{(1)}_{2, 2} &  y^{(2)}_{2, 1} \\
    a_{2,1} & a_{2,2} & z\, y^{(1)}_{2, 1} & z\,y^{(1)}_{2, 2} &  y^{(2)}_{2, 1}
    \end{pmatrix}.
\end{align*}
To give some intuition for the proof of correctness, one can show that each monomial in the expansion of the permanent for $B'_S$ cannot have two $a_{i,j}$ terms from the same row or column of $A$ unless there is also $z$ term.  Therefore, the only monomials which do not have a factor of $z$ are those that also appear in the expansion for the permanent of $A$ (albeit with an extra factor of a product of $y$ variables which is always the same).  To complete the proof, it suffices to count the multiplicity of these monomials.

This result can now be used to show that the permanent of a matrix without repetitions can be estimated from estimates of larger matrices with repetitions. The first straightforward way of doing this is just setting $z = 0$ and $x_{i,j}^{i} = y_{i,l}^{i} = 1$ in our construction. This can be viewed as a ``worst-case reduction'' between these two problems, in the sense that we are assuming we have full control over the choice of the larger matrix $B$. However, this is not sufficient to give a reduction between RGPE and GPE since the matrices there are required to have independent (up to the corresponding repetitions) Gaussian matrix elements, and the matrix $B[0; x, y]$ is very far from Gaussian.  The next section describes two (related) techniques for dealing with \Cref{eq:PerAsPoly} when the matrices are random.

\subsection{Polynomial interpolation techniques to estimate Gaussian permanents}
\label{sec:poly_interpolation}

The first polynomial interpolation technique to deal with \Cref{eq:PerAsPoly} is to query the polynomial at the $(k+1)$th roots of unity. This leads to the error scaling described in \Cref{thm:reduction_intro} in the Introduction, which we rephrase below (proof in \Cref{app:lemma_proofs}):
\begin{restatable}{theorem}{GPEaToRGPEa}\label{thm:GPEa_to_RGPEa}
Given access to an oracle for $\RGPEa$ with error $\varepsilon \sqrt{(c+k)! \prod s_i! t_j!}$, it is possible to use $k+1$ calls to the oracle to solve $\GPEa$ with error $\epsilon \sqrt{c!}$ where $\epsilon := \varepsilon(\frac{1}{|\xi|} \sqrt{((c+k)!/c!)\prod s_i! t_j!})$ with $\xi$ as in Eq.~\eqref{eq:xi_def}.
\end{restatable}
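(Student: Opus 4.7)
The plan is to use \Cref{thm:PerAsPoly} to encode $\Per(A)$ as the constant coefficient of a univariate polynomial $q(z)$ of degree at most $k$, and then recover that coefficient by querying the $\RGPEa$ oracle at the $(k+1)$-st roots of unity. Evaluating on the unit circle is essential: \Cref{thm:PerAsPoly} guarantees that the padded matrix $B[z;x,y]$ is i.i.d.\ standard complex Gaussian \emph{precisely} when $|z|=1$, which is exactly the input distribution the $\RGPEa$ oracle expects.

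Concretely, first sample fresh i.i.d.\ standard complex Gaussians $x=\{x_{i,j}^{(\ell)}\}$ and $y=\{y_{i,j}^{(\ell)}\}$, and form the matrix $B[z;x,y]$ of \Cref{thm:PerAsPoly}. Extend $S$ and $T$ to repetition patterns $S',T'$ matching the padded dimensions of $B$ by appending trivial entries equal to $1$ for each new row/column. Then $B_{S',T'}[z;x,y]$ is square of side $c+k$, and by \Cref{thm:PerAsPoly} its permanent is the polynomial $q(z) = \xi\,\Per(A) + \sum_{\ell=1}^{k} \gamma_\ell z^\ell$.

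Next, query the oracle at $z=\omega^j$ for $\omega := e^{2\pi i/(k+1)}$ and $j=0,1,\ldots,k$, obtaining estimates $\tilde q_j$. Since $|\omega^j|=1$, each queried matrix is distributed exactly as $\RGPEa$ requires, so with probability at least $1-\delta/(k+1)$ per call we have $|\tilde q_j - q(\omega^j)| \le \varepsilon\sqrt{(c+k)!\prod_{i=1}^c s_i!\prod_{j=1}^c t_j!}$; a union bound over the $k+1$ calls makes this hold simultaneously with probability $1-\delta$. The standard inverse-DFT identity $\xi\,\Per(A) = \frac{1}{k+1}\sum_{j=0}^{k} q(\omega^j)$ then motivates the estimator $\tilde P := \frac{1}{\xi(k+1)}\sum_{j=0}^{k}\tilde q_j$, and the triangle inequality yields
\[ |\tilde P - \Per(A)| \;\le\; \frac{1}{|\xi|(k+1)}\sum_{j=0}^{k}|\tilde q_j - q(\omega^j)| \;\le\; \frac{\varepsilon}{|\xi|}\sqrt{(c+k)!\prod_{i=1}^c s_i!\prod_{j=1}^c t_j!} \;=\; \epsilon\sqrt{c!}, \]
where the last equality is exactly the definition of $\epsilon$ in the statement.

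The main design tension --- the one that singles out roots of unity --- is between preserving Gaussianity of every queried matrix (which forces all evaluation points to lie on the unit circle) and choosing evaluation points spread out enough that the inverse Vandermonde does not blow up the oracle's error. The $(k+1)$-st roots of unity resolve both constraints simultaneously: the constant-coefficient interpolation weights are uniformly $1/(k+1)$, so the $k+1$ individual oracle errors average rather than amplify, which is what produces the clean $1/|\xi|$ scaling in the final bound.
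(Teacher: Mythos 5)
Your proposal is correct and follows essentially the same route as the paper's proof: construct $B[z;x,y]$ via \Cref{thm:PerAsPoly}, query the $\RGPEa$ oracle at the $(k+1)$-st roots of unity (where Gaussianity is preserved), and average the results to extract the constant coefficient $\xi\Per(A)$, exactly as in the paper's \Cref{lemma:poly_interpolation1}. The only additions are cosmetic --- you make the per-call failure probability and union bound explicit, and you articulate why roots of unity are the right evaluation points --- neither of which changes the argument.
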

Notice that if we disregard the need to convert between the allowable $(\sqrt{(c+k)! \prod s_i! t_j!})$-error fluctuations for $\RGPEa$ and the $(\sqrt{c!})$-error fluctuations for $\GPEa$, then this theorem implies an error of $\epsilon$ for the repeated matrix leads to an error of $\epsilon/|\xi|$ for the unrepeated matrix.  Moreover, we have that $\epsilon/|\xi| = O(\epsilon/1.498^k)$ with high probability (also proved in \Cref{app:lemma_proofs}), and so we actually obtain an exponential improvement in the error accuracy.  Unfortunately, once the target error bounds for $\RGPEa$ and $\GPEa$ \emph{are} incorporated we are in opposite situation---$\epsilon := \varepsilon(\frac{1}{|\xi|} \sqrt{((c+k)!/c!)\prod s_i! t_j!})$ is exponentially large.  Therefore, \Cref{thm:GPEa_to_RGPEa} cannot be used to reduce $\GPEa$ to $\RGPEa$ without an exponential error blowup.

That said, we believe that \Cref{thm:GPEa_to_RGPEa} provides weak evidence of a formal connection between $\RGPEa$ and $\GPEa$ (and recall the latter is believed to be $\#\P$-hard).  Furthermore, it is worth noting that \Cref{thm:GPEa_to_RGPEa} leads to a somewhat surprising error scaling, as the error in the polynomial interpolation does not depend explicitly on the polynomial degree $k$.   We hope the relatively benign error scaling of \Cref{thm:GPEa_to_RGPEa} might make it a useful building block in a hardness proof for GBS or BosonSampling in the regime where high-collision outcomes dominate, and we leave filling the above gaps for future research.

However, a crucial aspect of this reduction in \Cref{thm:GPEa_to_RGPEa} is that it is between \emph{amplitudes} (whose phases are needed to arrange the cancellation of unwanted terms), whereas hardness arguments must be stated in terms of probabilities. We now describe an alternative polynomial interpolation technique which does provide an efficient reduction from $\GPEpa$ to $\RGPEpa$, but only in the regime where $k = O(1)$.  While the scaling in \Cref{thm:GPEa_to_RGPEa} will be much better and would also suffice in the constant-collision regime, we note that working directly with permanent magnitudes allows us to avoid creating a new anti-concentration conjecture for permanents with repeated rows and columns---i.e., to reduce from $\RGPEa$ to $\RGPEpa$:

\begin{restatable}{theorem}{GPEpaToRGPEpa} \label{thm:GPEpa_to_RGPEpa}
Given access to an oracle for $\RGPEpa$ with error $\varepsilon (c+k)! \prod s_i! t_j!$, it is possible to use $O(k)$ calls to the oracle to solve $\GPEpa$ with additive error $\epsilon c!$, for
\beq
\epsilon := \Ord\left(\frac{(c+k)^{2k}k^{2k}}{\delta^{3k+{1}/{2}}2^k} \varepsilon \right)
\eeq
Whenever $k = O(1)$, we have $\epsilon = \poly(c,\varepsilon, \delta)$.
\end{restatable}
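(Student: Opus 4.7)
The plan is to combine \Cref{thm:GPEa_to_RGPEa} (equivalently, the polynomial structure of \Cref{thm:PerAsPoly}) with the polynomial-interpolation technique used by Aaronson and Brod for BosonSampling under constant photon loss. Given a Gaussian input $A\sim\mathcal{N}(0,1)_{\mathbb{C}}^{c\times c}$ for $\GPEpa$, I would first invoke \Cref{thm:PerAsPoly} to construct $B[z;x,y]$ with $x,y$ drawn as i.i.d.\ standard complex Gaussians, so that $p(z):=\Per(B_{S,T})=\xi\Per(A)+\gamma_1 z+\cdots+\gamma_k z^k$ is a degree-$k$ polynomial in $z$ and $B\sim\mathcal{N}(0,1)_{\mathbb{C}}^{(c+k_T)\times(c+k_S)}$ whenever $|z|=1$. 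A single query to the $\RGPEpa$ oracle at a unit-circle point $z$ therefore returns an additive approximation of $|p(z)|^2$.

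Next, I would query at the $(2k+1)$th roots of unity to obtain noisy evaluations of the real-valued trigonometric polynomial $\theta \mapsto |p(e^{i\theta})|^2$, whose $2k+1$ Fourier coefficients are the bilinear combinations $c_\ell = \sum_{i-j=\ell}\gamma_i\bar\gamma_j$ (with $\gamma_0=\xi\Per(A)$). Inverting the $(2k+1)\times(2k+1)$ Vandermonde system recovers all $c_\ell$'s up to a controlled amplification, after which---because $k=O(1)$---one solves the resulting $O(1)$-variable non-linear system for the individual magnitudes $\{|\gamma_i|^2\}$, isolating $|\gamma_0|^2=|\xi|^2|\Per(A)|^2$. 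Finally, one divides by $|\xi|^2$, which is computable directly from $x,y$ via Eq.~\eqref{eq:xi_def}, to produce the desired estimate of $|\Per(A)|^2$.

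The error analysis composes three amplifications: (i) converting between the $\RGPEpa$ target-error scale $(c+k)!\prod s_i!\,t_j!$ and the $\GPEpa$ target-error scale $c!$ costs a factor of $O((c+k)^{2k})$; (ii) the Vandermonde inversion together with the subsequent polynomial-system inversion contribute $O(k^{2k}/2^k)$; and (iii) the $1/|\xi|^2$ prefactor is bounded by $O(\delta^{-(2k+1)})$ via complex Gaussian anti-concentration ($\Pr[|y|\le r]=O(r^2)$) together with a union bound over the $O(k)$ independent Gaussian factors in $\xi$. Additional $1/\delta$ penalties needed to keep the non-linear inversion in step (ii) stable supply the remaining power in the claimed $\delta^{-(3k+1/2)}$, giving $\epsilon=O((c+k)^{2k}k^{2k}\varepsilon/(2^k\delta^{3k+1/2}))$, which is $\poly(c,1/\varepsilon,1/\delta)$ whenever $k=O(1)$.

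The main obstacle, and the source of the $k=O(1)$ restriction, is the isolation of $|\gamma_0|^2$ from the Fourier coefficients of $|p|^2$. Unlike the amplitude setting of \Cref{thm:GPEa_to_RGPEa}, where $\gamma_0$ is recovered simply as the average of $p$ over roots of unity, here each Fourier coefficient of $|p|^2$ entangles $|\gamma_0|^2$ with the remaining $|\gamma_i|^2$'s and the cross-terms $\gamma_i\bar\gamma_j$. Disentangling them amounts to a Fej\'er--Riesz-type spectral factorization whose stability degrades exponentially in $k$; for constant $k$ this exponential blowup is absorbed into the prefactors, but for super-constant $k$ both this conditioning and the union bound over anti-concentration events in the factors of $\xi$ blow up exponentially, breaking the reduction.
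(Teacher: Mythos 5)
There is a genuine gap, and it is in the step you identify as the crux. Your plan is to query the $\RGPEpa$ oracle at points on the unit circle and recover $\left|\gamma_0\right|^2=|\xi|^2\left|\Per(A)\right|^2$ from the Fourier coefficients $c_\ell=\sum_{i-j=\ell}\gamma_i\bar\gamma_j$ of $\theta\mapsto |p(e^{i\theta})|^2$. But that data does not determine $|\gamma_0|^2$ at all: the autocorrelation sequence of $(\gamma_0,\ldots,\gamma_k)$ is invariant under flipping any root of $p$ across the unit circle, and distinct spectral factorizations generically have different constant terms. Concretely, $p_1(z)=1+2z$ and $p_2(z)=2+z$ satisfy $|p_1(e^{i\theta})|^2=|p_2(e^{i\theta})|^2=5+4\cos\theta$ for all $\theta$, yet $|\gamma_0|^2$ equals $1$ in the first case and $4$ in the second. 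So the ``non-linear system'' you propose to solve for the individual $\{|\gamma_i|^2\}$ has multiple solutions with different values of the quantity you need, and no amount of conditioning analysis or Fej\'er--Riesz machinery can repair this: the issue is non-uniqueness, not instability. This is exactly why the paper remarks that, unlike in \Cref{thm:GPEa_to_RGPEa}, one ``cannot compute the polynomial for values of $z$ in the unit circle.''

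The paper's actual route sidesteps this by taking $z$ \emph{real}. For real $z$, $|p(z)|^2=\sum_{i,j}\gamma_i\bar\gamma_j z^{i+j}$ is an honest degree-$2k$ polynomial in $z$ whose constant term is exactly $|\gamma_0|^2$, so ordinary interpolation recovers it. The price is twofold: (i) real $z\neq 1$ deforms the distribution of $B$ away from Gaussian, so one must restrict to $z\in[1-\gamma,1+\gamma]$ with $\gamma\le \delta/\sqrt{(c-1)k+k_Sk_T}$ to keep the total variation distance (via KL divergence and Pinsker) small enough that the $\RGPEpa$ oracle still succeeds; and (ii) one must extrapolate from this tiny window down to $z=0$ using a least-squares fit at $2k+1$ points, as in the Aaronson--Brod lossy-BosonSampling argument, which blows up the variance by $\gamma^{-4k}$. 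That extrapolation penalty, together with Chebyshev and the anti-concentration bound on $1/|\xi|^2$, is what produces the $(c+k)^{2k}k^{2k}/(\delta^{3k+1/2}2^k)$ scaling and forces $k=O(1)$. Your error bookkeeping for items (i) and (iii) is broadly in the right spirit, but the central recovery step as you describe it would fail.
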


The main reason why the error above scales so poorly in $k$ is that we take the absolute squared value of Eq.~\eqref{eq:PerAsPoly}, and then apply a polynomial interpolation method. Contrary to \Cref{thm:GPEa_to_RGPEa}, we cannot compute the polynomial for values of $z$ in the unit circle, since many of its monomials depend on $\left|z\right|^2$ and so we cannot orchestrate the proper cancellations. As a result, we use a least squares estimator to approximate the value of the polynomial at $z=0$ by sampling $O(k)$ values of it in a small range around $z=1$. This causes a blowup that is exponential in the degree of the polynomial, leading to the scaling above.  

In \Cref{app:lemma_proofs} we outline the proof of this result, though we omit some of the important technical lemmas that can be taken directly from \cite{AaronsonBrod} with regards to the hardness of lossy BosonSampling.

\Cref{thm:GPEpa_to_RGPEpa} can be plugged, in a relatively direct manner, into our main argument or that of \cite{aaronson2013} to prove that the complexity of both BosonSampling and GBS remains unchanged when we move from the no-collision subspace to a subspace with a (fixed) constant number of collisions. While this does not provide evidence of hardness in a regime where $m = O(n^d)$ for $d<2$, we believe \Cref{thm:GPEpa_to_RGPEpa} is well-motivated in two senses: first, as a preliminary step towards a stronger result that does take into account more that constantly-many collisions, which might re-purpose some of the intermediate results; second, as a level of ``experimental robustness'', where allowing for a few collision outcomes to be included in the output distribution might improve the count-rate for finite-sized experiments (recall that even in the no-collision regime, some constant-fraction of the output distribution may still have collisions).


\section{Discussion} \label{subsec:discussion}
We introduced BipartiteGBS as a method for programming a Gaussian boson sampling device so that the output probabilities are proportional to the permanents of arbitrary matrices.  This allowed us to rigorously prove the hardness of approximately sampling from GBS distributions in the $m = \Theta(\braket n^2)$ regime under the same set of conjectures as those used for BosonSampling~\cite{aaronson2013}.  To recap the advantage of our approach, recall the two reasons why BosonSampling is required to operate in the $m=\Omega(n^2)$ (dilute) regime. The first is that the argument is predicated on the hardness of permanents with unrepeated rows and columns, and so we need that many more modes than photons for no-collision outcomes to dominate the output distribution.  Perhaps more importantly, the second reason is that it is required that the $n \times n$ submatrices of $m \times m$ Haar-random matrices appear approximately Gaussian.  In fact, while it is widely believed that such a statement holds whenever $m = \omega(n^2)$, it was only formally proved for $m=\omega(n^5)$. An important aspect of our work is that it has removed this obstacle in the case of GBS. Since we can implement arbitrary transition matrices, we can just choose them directly from the Gaussian ensemble, and thus their submatrices already look Gaussian even for $m=\Theta(n)$. Thus, there is the tantalizing prospect of improving on our work to show hardness for GBS in this regime, which would be much more experimentally friendly.  We provide a blueprint for how this argument might go in \Cref{sec:beyond_dilute}.

Let's also contrast our work with another recent paper of Deshpande et al.\ on the hardness of Gaussian boson sampling~\cite{deshpande2021quantum}.  There, the authors also prove a worst-to-average case hardness result for approximate GBS in the dilute limit (c.f.~\Cref{thm:main}).  To do this, they must conjecture two plausible and yet new conjectures in complexity and random matrix theory.  Furthermore, Deshpande et al.\ actually inherit the same problem that appears in BosonSampling---there are only rigorous proofs that the $n \times n$ submatrices of $m \times m$ are approximately Gaussian when $m = \omega(n^5)$.  To circumvent this problem, the authors must conjecture directly that $m = \Omega(n^2)$ suffices (technically, they require that submatrices of the unitary product $UU^T$ approximate random $X X^T$ matrices for Gaussian $X$).  Therefore, they will require fundamentally new ideas to operate in regimes beyond the dilute limit.

We reiterate that the main open problem left by our work is exactly that---prove hardness of GBS in a regime where the number of modes is subquadratic in the number of photons.  We outline two possible approaches.  First, one could attempt to improve the reduction in \Cref{sec:collision_reduction} so that an efficient algorithm for $\RGPEa$ implies an efficient algorithm for $\GPEa$.  This would allow for the possibility that the hardness of GBS in the high-collisions could primarily be based on the same hardness conjectures as those used in BosonSampling.  That said, several issues still need to be addressed.  For example, consider our bound on $\mathcal Z$ in \Cref{lem:boundZ}.  There, we have a multiplicative $\exp(\Omega(m/\alpha^4))$ term which is constant in the dilute limit, but grows exponentially for smaller $\alpha$.  Some change in scaling is inevitable, but this bound will not lead to a tight enough estimation of the permanent to solve $\RGPEa$ via our Stockmeyer counting argument.  While we give strong evidence in \Cref{sec:beyond_dilute} that such a bound exists, proving it rigorously may be challenging.  Moreover, in order to use a reduction akin to that given in \Cref{thm:GPEa_to_RGPEa}, one would need to prove an equivalence between the multiplicative and additive versions of RGPE. This would required a generalization of the Permanent Anti-Concentration Conjecture for matrices with repeated rows and columns which is not known to be equivalent to the original conjecture, and in fact is provably false in extreme settings. A second approach would be to try to strengthen the evidence for the $\#\P$-hardness of the Permanent-of-Repeated Gaussians problem without reference to the BosonSampling conjectures. We note that even in this case much of the same work outlined above would be required.

Another major open direction left in our work---and indeed in many of the proposed hardness arguments based on linear optics---is incorporating realistic experimental noise. To be clear, we show the classical hardness of approximately sampling (in total variation distance) from ideal GBS distributions. However, even in the high-collision regime, current experiments \emph{also} do not sample approximately from their true distributions due to a variety of sources of noise (e.g., photon loss). Therefore, incorporating this noise is a critical to closing the gap between theory and experiment.

That said, it is likely that the flexibility of BipartiteGBS can help mitigate the effects of noise, particularly in near-term experiments. For instance, assume one wants to perform Scattershot BosonSampling with an unitary transition matrix $W$, which corresponds to BipartiteGBS where we set all squeezing parameters to be equal, $U=W$, and $V = \mathbb{I}$ (cf.\ \Cref{Fig: GBS}). If $W$ is Haar random, then it typically requires depth equal to $m$ \cite{reck1994,clements2016}. But now note that we can shift some of the beam splitters from $U$ to $V$, in the sense that any combination of these two matrices for which $W = U V^T$ leads to the same problem instance. Thus, we can program an $m$-depth decomposition of $W$, such as that of \cite{reck1994,clements2016}, in an $m/2$-depth BipartiteGBS instance simply by mapping half of the beam splitter layers to $U$ and half to $V$. Since losses scale exponentially with depth, this reduction of a factor of two for the depth corresponds to having square root of the losses.

Another less straightforward example is as follows. Suppose we measure the complexity of simulating the device by the computational cost of producing a single $2n$-photon sample (from the exact distribution) using state-of-the-art algorithms. A $2m$-mode implementation of the standard GBS model would typically require an arbitrary 2$m$-mode interferometer, and samples of its output probabilities can be generated in time $O(m n^3 2^n)$~\cite{bjorklund2018faster,quesada2022quadratic,bulmer2022boundary}. A BipartiteGBS instance with comparable computational cost (up to polynomial factors) would also have $2m$ total modes, and its $2n$-photon probabilities would be given by permanents of $n \times n$ matrices, with cost-per-sample of $O(n 2^n)$ time~\cite{ryser1963combinatorial,neville2017classical,clifford2018classical,clifford2020faster}. But an arbitrary BipartiteGBS interferometer only requires depth $m$, and thus half the depth (and square root of the losses) as an arbitrary implementation of standard GBS. Combining these two observations, it is possible to reduce the depth by a factor of 4 if one moves from arbitrary GBS to an implementation of BipartiteGBS based on an unitary transition matrix.

Finally, we ask what other uses BipartiteGBS might have in proving stronger forms of classical intractability---either from an experimental or theoretical viewpoint.  For instance, because we are no longer constrained to have unitary transition matrices, one might envision a hardness result predicated on an entirely different distribution of matrices (e.g.\ Bernoulli instead of Gaussian).  The flexibility of BipartiteGBS allows the possibility of more contrived distributions that nevertheless have a stronger complexity-theoretic foundation, and we hope that BipartiteGBS motivates others to explore these possibilities.


\section*{Acknowledgments}
We would like to thank Torben Kr\"uger for useful discussions.
N.Q. thanks Z. Vernon for valuable discussions and the Ministère de l’Économie et de l’Innovation du Québec and the Natural Sciences and Engineering Research Council of Canada for financial support.
D.\ J.\ Brod acknowledges support from Instituto Nacional de Ci\^{e}ncia e Tecnologia de Informação Quântica (INCT-IQ, CNPq) and FAPERJ. M.\ B.\ A.\ Alonso acknowledges support from CNPq.
\bibliographystyle{quantum}
\bibliography{references}

\appendix


\section{Bounds on \texorpdfstring{$n$}{photon number} and \texorpdfstring{$\norm$}{normalization term} from random matrix theory} \label{app:bounds}

In this appendix, we prove \Cref{lem:boundn} and \Cref{lem:boundZ}, corresponding to bounds on $n$ and $\norm$, respectively. As in the main text, let $\mathcal G = \mathcal N(0, \tfrac{1}{\alpha^2 m})^{m \times m}_\mathbb C$ be the distribution over $m \times m$ matrices whose entries are independent complex Gaussians with mean $0$ and variance $1/\alpha^2 m$.

The quantities we will bound depend crucially on the set $\{\lambda_i\}_{i=1,\ldots,m}$, where $\lambda_i = \sigma_i^2$ and $\sigma_i$ is the $i$th singular value of Gaussian matrix $C \sim \G$. Alternatively, $\lambda_i$ is the $i$th eigenvalue of the matrix $A := CC^\dag$. The matrix $A$ is called a \emph{complex Wishart matrix}. The complex Wishart distribution is usually denoted by $\mathcal W(\mu, \Sigma)$ when the columns of $C$ are sampled from the multivariate complex normal distribution with mean vector $\mu$ and covariance matrix $\Sigma$. In our case we can write $A \sim \mathcal W(0, \frac{1}{\alpha^2 m} \mathbb{I}_m)$, and the latter ensemble will be represented by $\mathcal W$, for short.

Much is known about the spectrum of complex Wishart matrices from random matrix theory, and we will invoke several known results regarding $\mathcal W$ throughout this appendix. For example, it is known that the maximum eigenvalue of $A\sim \mathcal W$ (denoted by  $\lambda_{\max}(A)$) can be bounded as follows:
\begin{lemma}[Haagerup and Thorbj{\o}rnsen \cite{haagerup_thorbjrnsen:2003}]
\label{lem:max_eigen_wishart}
\beq
\Pr[\lambda_{\max}(A) \ge 4/\alpha^2 + \epsilon] \le m e^{-m \alpha^4 \epsilon^2 / 8}.
\eeq
\end{lemma}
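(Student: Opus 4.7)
The plan is to leverage Gaussian concentration applied to the operator norm of $C$, combined with an expectation bound at the spectral edge. The starting point is the identity $\lambda_{\max}(A) = \sigma_{\max}(C)^2$, where $\sigma_{\max}(C)$ is the largest singular value of the complex Gaussian matrix $C$ with entry variance $1/(\alpha^2 m)$; this reduces the task to controlling the upper tail of $\sigma_{\max}(C)$.

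The first ingredient is a bound on the mean, $\mathbb E[\sigma_{\max}(C)] \leq 2/\alpha$. This is the complex analog of the classical Bai--Yin / Geman edge theorem, and can be obtained from Gordon's comparison inequality for Gaussian processes applied to the variational formula $\sigma_{\max}(C) = \sup_{\|u\|=\|v\|=1} |u^\dagger C v|$, by comparing to a simpler auxiliary process indexed by $(u,v)$. An alternative route is the moment method: one computes $\mathbb E[\Tr(A^k)]$ via the Wick expansion (a sum over pairings whose leading contribution, from non-crossing pairings, is the $k$th Catalan number times $(4/\alpha^2)^k m$), and uses $\mathbb E[\sigma_{\max}(C)^{2k}] \leq \mathbb E[\Tr(A^k)]$ together with a Markov-type inequality to push $k \to \infty$.

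The second ingredient is Gaussian concentration. The map $C \mapsto \sigma_{\max}(C)$ is $1$-Lipschitz in the Frobenius metric, and the entries of $C$ are jointly Gaussian with variance $1/(\alpha^2 m)$, so Borell--TIS yields
\begin{equation*}
\Pr\bigl[\sigma_{\max}(C) \geq \mathbb E[\sigma_{\max}(C)] + t\bigr] \leq e^{-t^2 \alpha^2 m/2}.
\end{equation*}
Combining with the mean bound, and then squaring using the implication $\sigma_{\max}(C)^2 \geq 4/\alpha^2 + \epsilon \;\Rightarrow\; \sigma_{\max}(C) - 2/\alpha \geq \epsilon/(\sigma_{\max}(C) + 2/\alpha) = \Omega(\epsilon\alpha)$ (valid in the regime where $\epsilon = O(1/\alpha^2)$, with the opposite regime handled by direct norm bounds), one obtains a tail of the form $e^{-\Omega(\epsilon^2 \alpha^4 m)}$.

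The main obstacle is recovering the precise constant $1/8$ in the exponent and, more importantly, the linear prefactor $m$ in front of the exponential. The Borell--TIS route sketched above gives no such prefactor, so to match the stated bound one would follow the original moment-method approach of Haagerup--Thorbj\o rnsen, which carefully enumerates genus-zero and higher-genus contributions to $\mathbb E[\Tr(A^k)]$ and yields polynomial-in-$m$ prefactors from summing over eigenvalue indices (morally a union bound absorbed into the combinatorics). For the applications in this paper any bound of the form $\poly(m)\cdot e^{-\Omega(\epsilon^2 \alpha^4 m)}$ would suffice, so either route completes the argument.
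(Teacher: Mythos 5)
The paper does not prove this lemma: it is imported verbatim as a cited result of Haagerup and Thorbj{\o}rnsen, so there is no internal proof to compare yours against. Your sketch is a legitimate, essentially standard alternative route --- reduce $\lambda_{\max}(A)$ to $\sigma_{\max}(C)^2$, bound $\mathbb E[\sigma_{\max}(C)]\le 2/\alpha$ at the spectral edge (Gordon/Slepian comparison or the trace--moment method, both of which work), and apply Borell--TIS to the $1$-Lipschitz map $C\mapsto\sigma_{\max}(C)$ --- and it does yield a tail of the form $e^{-\Omega(\epsilon^2\alpha^4 m)}$. That form is all the paper ever uses: in \Cref{thm:<n>_variance}, \Cref{thm:n_variance}, \Cref{lem:boundZ}, and \Cref{cor:perapprox} the lemma is only invoked to make the failure probability smaller than $\delta/2$ under mild conditions on $m$, so a worse absolute constant in the exponent (your route gives something like $1/36$ to $1/72$ after the squaring step rather than $1/8$) and the absence or presence of the $\poly(m)$ prefactor would only shift the explicit constants in those downstream statements, not break them. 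Two small points to tighten: (i) the step $\sigma_{\max}-2/\alpha\ge\epsilon/(\sigma_{\max}+2/\alpha)=\Omega(\epsilon\alpha)$ needs an a priori upper bound $\sigma_{\max}=O(1/\alpha)$ on the event in question, so you should explicitly split into the moderate-deviation regime $\epsilon=O(1/\alpha^2)$ and a large-deviation regime handled by a cruder bound, as you gesture at; (ii) the concentration constant depends on whether ``variance $1/(\alpha^2 m)$'' refers to the complex variance $\mathbb E|C_{ij}|^2$ or to each real coordinate, which changes the exponent by a factor of $2$. You are right, and commendably honest, that recovering the literal constant $1/8$ and the prefactor $m$ requires the moment-method analysis of the original reference; as written, your argument proves a statement of the same shape but not the exact inequality displayed in the lemma.
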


Let us recall the notation we will use for the source of randomness in these calculations. We write $\braket{\cdot}$ for the average of a quantity (e.g., the total photon number) over the distribution of photon numbers that arises in a GBS setup of \Cref{subsec:model}, for a particular transition matrix $C$. We write $\mathbb E[\cdot]$ as the expectation of a particular quantity over the randomness of $C \sim \G$.

\subsection{Bounds on observed photon number} \label{subapp:boundn}

We wish to upper bound the probability that the observed  number of photon pairs, $n$, is far from its expectation. Let us begin by reviewing the computation of the expectation value itself. We can write
\beq
\langle n \rangle = \sum_{i=1}^{m} \frac{\lambda_i}{1- \lambda_i}
=  \sum_{i=1}^{m} \sum_{k=1}^\infty \lambda_i^k
=  \sum_{k=1}^\infty \Tr(A^k),
\eeq
where $\lambda_i$ is the $i$th eigenvalue of matrix $A \sim \mathcal W$.  We can compute the exact expected value of $\langle n \rangle$ over random $A$ using the following lemma:

\begin{lemma}[Hanlon, Stanley, Stembridge \cite{hss:1992_WishartTrace}]
For $A \sim \mathcal W(0, \sigma^2 \mathbb{I}_m)$,
\beq
\mathbb E[\Tr(A^k)] = \frac{\sigma^{2k}}{k} \sum_{i=1}^k (-1)^{i-1} \frac{(m+1-i)_k^2}{(k-i)!(i-1)!},
\eeq
where $(a)_k := a(a+1)\cdots(a + k - 1)$ is the rising Pochhammer symbol.
\end{lemma}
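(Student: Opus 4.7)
My plan is to derive the formula by a direct moment computation combined with a known combinatorial identity for cycle statistics of permutations. Write $A = CC^{\dagger}$ where $C$ has i.i.d.\ entries $C_{ij} \sim \mathcal{N}(0,\sigma^{2})_{\mathbb{C}}$. Expanding the trace as a sum of products and tracking indices in the usual cyclic fashion gives
\begin{equation*}
\Tr(A^{k}) = \sum_{i_{1},\ldots,i_{k}} \sum_{j_{1},\ldots,j_{k}} C_{i_{1} j_{1}} \bar{C}_{i_{2} j_{1}} C_{i_{2} j_{2}} \bar{C}_{i_{3} j_{2}} \cdots C_{i_{k} j_{k}} \bar{C}_{i_{1} j_{k}}.
\end{equation*}
Because the entries are complex Gaussian with $\mathbb{E}[C_{ab}\bar{C}_{cd}] = \sigma^{2}\delta_{ac}\delta_{bd}$ and $\mathbb{E}[C_{ab}C_{cd}] = 0$, Wick's theorem (Isserlis for complex Gaussians) reduces the expectation to a sum over perfect matchings between the $C$'s and the $\bar{C}$'s, i.e.\ over permutations $\pi \in S_{k}$.

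The key bookkeeping step is to identify how the surviving Kronecker deltas constrain the free sums over $(i_{1},\ldots,i_{k})$ and $(j_{1},\ldots,j_{k})$. Let $\gamma = (1\,2\,\cdots\,k) \in S_{k}$ be the cyclic shift induced by the trace. A careful accounting (matching the $i$-indices by the cyclic shift relating a $C$ to the next $\bar{C}$, and the $j$-indices by the Wick pairing) shows that each permutation $\pi$ contributes $\sigma^{2k}\, m^{c(\pi)}\, m^{c(\pi\gamma)}$, where $c(\cdot)$ denotes the number of cycles. Thus
\begin{equation*}
\mathbb{E}\bigl[\Tr(A^{k})\bigr] = \sigma^{2k} \sum_{\pi \in S_{k}} m^{c(\pi)+c(\pi\gamma)}.
\end{equation*}

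The remaining task is to evaluate this double cycle-count generating function in closed form. Grouping permutations by the value $i = c(\pi)$ (or equivalently by the genus of the factorization $\pi\cdot(\pi^{-1}\gamma)=\gamma$) and using the classical fact that the number of pairs $(\pi,\rho)\in S_{k}^{2}$ with $\pi\rho=\gamma$, $c(\pi)=i$, $c(\rho)=j$ is given by the Hurwitz-type numbers (whose generating series in $m$ telescopes into products of rising factorials), one can show that the inner sum equals $\frac{1}{k}\sum_{i=1}^{k}(-1)^{i-1}\frac{(m+1-i)_{k}^{2}}{(k-i)!(i-1)!}$. An equivalent and perhaps cleaner derivation is via the orthogonal-polynomial representation of the complex Wishart one-point density: $\rho(\lambda) = \sigma^{-2}e^{-\lambda/\sigma^{2}}\sum_{j=0}^{m-1}[L_{j}^{(0)}(\lambda/\sigma^{2})]^{2}$, and computing $\int_{0}^{\infty}\lambda^{k}\rho(\lambda)\,d\lambda$ by expanding the Laguerre polynomials in powers of $\lambda$ and invoking $\int_{0}^{\infty}\lambda^{a}e^{-\lambda}\,d\lambda = a!$ term by term. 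The resulting double sum can be resummed using the Vandermonde--Chu identity, producing precisely the alternating-sign closed form stated.

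The main obstacle is the combinatorial identity in the last paragraph: neither the permutation-cycle sum nor the Laguerre integral collapses automatically, and the manipulation into the compact Pochhammer form with alternating signs requires a nontrivial rearrangement (essentially the content of the Hanlon--Stanley--Stembridge computation). The Gaussian moment computation and reduction to the permutation sum, by contrast, are routine. In practice I would simply cite \cite{hss:1992_WishartTrace} for this final step, since the factor $\sigma^{2k}$ and the $m$-dependence have already been pinned down by the Wick-theorem argument.
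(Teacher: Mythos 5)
The paper offers no proof of this lemma at all---it is imported verbatim from Hanlon--Stanley--Stembridge---and your proposal ultimately rests on the same citation for the only nontrivial step (the closed-form evaluation of $\sum_{\pi \in S_k} m^{c(\pi)+c(\gamma\pi)}$), so the two are effectively the same approach. Your Wick-theorem reduction to that double cycle-count sum is correct (one can check it reproduces the paper's listed values $m^2\sigma^2$ for $k=1$ and $2m^3\sigma^4$ for $k=2$), and you are right to flag that the alternating-sign Pochhammer identity is the genuinely hard content that belongs to the reference rather than to this paper.
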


Therefore, for $A \sim \mathcal W(0, \frac{1}{\alpha^2 m} \mathbb{I}_m)=\mathcal W$, the first several instances (starting at $k=1$) of $\mathbb E[\Tr(A^k)]$ are
\beq
\frac{m}{\alpha^2},\frac{2 m}{\alpha^4},\frac{5 m^2+1}{\alpha^6 m},\frac{14 m^2+10}{\alpha^8 m},\frac{42 m^4+70 m^2+8}{\alpha^{10} m^3}, \ldots
\eeq
This makes the expression for $\mathbb E [\langle n \rangle ]$ challenging to write down and manipulate. Notice, however, that when $\alpha$ is large, the higher order terms are negligible.  Formally, we state the following proposition:
\begin{proposition}
\label{prop:<n>_lambda_max}
$
\langle n \rangle \le \Tr(A) + 2 m \lambda_{\max}(A)^2
$
whenever $\lambda_{\max}(A) \le 1/2$.
\end{proposition}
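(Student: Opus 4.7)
The plan is to start from the expression $\langle n \rangle = \sum_{i=1}^m \frac{\lambda_i}{1-\lambda_i}$ derived in the main text, and isolate the linear-in-$\lambda_i$ part (which gives $\Tr(A)$) from the higher-order remainder. Specifically, I would use the elementary identity
\begin{equation*}
\frac{\lambda_i}{1-\lambda_i} = \lambda_i + \frac{\lambda_i^2}{1-\lambda_i},
\end{equation*}
which is valid whenever $\lambda_i \ne 1$, so that
\begin{equation*}
\langle n \rangle = \sum_{i=1}^m \lambda_i + \sum_{i=1}^m \frac{\lambda_i^2}{1-\lambda_i} = \Tr(A) + \sum_{i=1}^m \frac{\lambda_i^2}{1-\lambda_i}.
\end{equation*}

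Next, I would control the remainder term using the hypothesis $\lambda_{\max}(A) \le 1/2$. Since $A = CC^\dagger$ is positive semidefinite, all eigenvalues satisfy $0 \le \lambda_i \le \lambda_{\max}(A) \le 1/2$, and hence $1 - \lambda_i \ge 1/2$. This gives $\tfrac{1}{1-\lambda_i} \le 2$ for every $i$, so
\begin{equation*}
\sum_{i=1}^m \frac{\lambda_i^2}{1-\lambda_i} \le 2 \sum_{i=1}^m \lambda_i^2 \le 2 \sum_{i=1}^m \lambda_{\max}(A)^2 = 2m\, \lambda_{\max}(A)^2,
\end{equation*}
which combined with the previous display yields the claimed bound.

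There is no real obstacle here; the proposition is essentially a one-line algebraic observation once the series for $\langle n \rangle$ is reorganized. The only subtle point is ensuring positivity of the eigenvalues (to rule out cancellations and to justify bounding $\lambda_i^2 \le \lambda_{\max}(A)^2$), which is immediate since $A$ is a Wishart matrix. The cutoff $\lambda_{\max}(A) \le 1/2$ is what ensures the geometric tail $\tfrac{1}{1-\lambda_i}$ is bounded by the constant $2$ that appears in the statement; any other threshold $\lambda_{\max}(A) \le 1-\eta$ would give a proportional constant $1/\eta$ in front of $m\lambda_{\max}(A)^2$.
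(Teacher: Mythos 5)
Your proof is correct and follows essentially the same route as the paper: both isolate the $k=1$ (trace) term and bound the remainder $\sum_i \lambda_i^2/(1-\lambda_i)$ by $2m\lambda_{\max}(A)^2$ using $\lambda_{\max}(A)\le 1/2$, with the paper merely writing the remainder as the explicit geometric tail $\sum_{k\ge 2}\lambda_i^k$ rather than in closed form.
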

\begin{proof}
Start by writing $\langle n \rangle =  \sum_{i=1}^{m} \sum_{k=1}^\infty \lambda_i^k$.  We can bound all the higher-order terms ($k > 1$) as
\beq
\sum_{i=1}^{m} \sum_{k=2}^\infty \lambda_i^k
\le m \sum_{k=2}^\infty \lambda_{\max}(A)^k
\le 2 m \lambda_{\max}(A)^2,
\eeq
whenever $\lambda_{\max}(A) \le 1/2$.  What remains (the $k=1$ term) is simply the trace of $A$.
\end{proof}

Importantly, we know that the maximum eigenvalue of $A \sim \mathcal W$ is small (for large $\alpha$) with high probability, due to \Cref{lem:max_eigen_wishart}. We also know the expectation of the trace:
\begin{lemma}[Maiwald and Kraus \cite{maiwaldkraus:2000_WishartTrace}]
\label{lem:trace_wishart}
 $\mathbb E[\Tr(A)] = \frac{m}{\alpha^2}$
and $\mathbb E[\Tr(A)^2] = \frac{m^2+1}{\alpha^4}$.
\end{lemma}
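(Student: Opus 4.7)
The plan is to proceed by direct moment computation from the definition $A = CC^\dagger$, reducing everything to moments of the i.i.d.\ complex Gaussian entries of $C$. The key identity is
\beqs
\Tr(A) = \Tr(CC^\dagger) = \sum_{i,j=1}^m |C_{ij}|^2,
\eeqs
so both moments of $\Tr(A)$ become sums of moments of $|C_{ij}|^2$, with independence across distinct $(i,j)$.

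For the first moment, linearity of expectation together with $\mathbb E[|C_{ij}|^2] = \Var(C_{ij}) = 1/(\alpha^2 m)$ (since the entries are mean-zero complex Gaussians with variance $1/(\alpha^2 m)$) gives $\mathbb E[\Tr(A)] = m^2 \cdot 1/(\alpha^2 m) = m/\alpha^2$.

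For the second moment, I would expand
\beqs
\Tr(A)^2 = \sum_{(i,j),(k,l)} |C_{ij}|^2 |C_{kl}|^2
\eeqs
and split the sum into the $m^4 - m^2$ off-diagonal pairs where $(i,j)\neq(k,l)$ and the $m^2$ diagonal pairs where $(i,j)=(k,l)$. Independence of distinct entries makes each off-diagonal term contribute $1/(\alpha^4 m^2)$. For the diagonal terms I need the fourth moment $\mathbb E[|C_{ij}|^4]$; writing $C_{ij} = X+iY$ with $X,Y \sim \mathcal N(0, 1/(2\alpha^2 m))$ independent, one has $\mathbb E[|C_{ij}|^4] = \mathbb E[(X^2+Y^2)^2] = 2\bigl(\mathbb E[|C_{ij}|^2]\bigr)^2 = 2/(\alpha^4 m^2)$. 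Adding up,
\beqs
\mathbb E[\Tr(A)^2] = \frac{m^4 - m^2}{\alpha^4 m^2} + \frac{2 m^2}{\alpha^4 m^2} = \frac{m^2 - 1}{\alpha^4} + \frac{2}{\alpha^4} = \frac{m^2+1}{\alpha^4}.
\eeqs

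There is essentially no hard step here: the computation is routine once one is careful about the complex (as opposed to real) Gaussian fourth-moment formula, which introduces the factor of $2$ rather than $3$ and is ultimately responsible for the ``$+1$'' correction in $\mathbb E[\Tr(A)^2]$. As a sanity check, $\Var(\Tr(A)) = \mathbb E[\Tr(A)^2] - \mathbb E[\Tr(A)]^2 = 1/\alpha^4$, which is consistent with $\Tr(A)$ being a sum of $m^2$ i.i.d.\ mean-$1/(\alpha^2 m)$, variance-$1/(\alpha^4 m^2)$ random variables.
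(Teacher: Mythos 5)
Your computation is correct. Note that the paper itself offers no proof of this lemma: it is stated as an imported result, cited to Maiwald and Kraus, and used only to extract $\Var[\Tr(A)] = 1/\alpha^4$ in the proof of Theorem~\ref{thm:<n>_variance}. Your direct expansion of $\Tr(CC^\dagger) = \sum_{i,j}|C_{ij}|^2$ into $m^2$ i.i.d.\ exponential-type variables is a perfectly valid, self-contained replacement for the citation, and you correctly isolate the one point where care is needed: the complex-Gaussian fourth moment $\mathbb E[|C_{ij}|^4] = 2\bigl(\mathbb E[|C_{ij}|^2]\bigr)^2$ (factor $2$ rather than the real-Gaussian $3$), which is exactly what produces the $+1$ in $(m^2+1)/\alpha^4$; had you used the real-case factor you would have obtained $(m^2+2)/\alpha^4$ instead. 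Your closing sanity check that $\Var(\Tr(A)) = 1/\alpha^4$ matches what the paper actually uses downstream. One could also obtain the same result as the $k=1$ case of the Hanlon--Stanley--Stembridge trace-moment formula quoted just above in \Cref{subapp:boundn} (for the first moment) together with general Wishart second-moment identities, but your elementary route is simpler and fully rigorous.
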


We arrive at the following conclusion about the distribution of $\langle n \rangle$ over the choice of $A$:
\begin{theorem}
\label{thm:<n>_variance}
\beq
\Pr\left[\left|\langle n \rangle - \frac{m}{\alpha^2}\right| \ge \frac{32 \beta^2 m}{\alpha^4} + \frac{1}{\alpha^2}\sqrt{\frac{2}{\delta}}\right] \le \delta,
\eeq
whenever $\beta \ge 4$, $\alpha^2 \ge 8 \beta$, and $m \ge \beta^{-2} \ln(1/\delta)$.
\end{theorem}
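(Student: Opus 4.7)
My plan is to decompose $\langle n \rangle$ into its leading linear piece and a higher-order remainder, then control each separately via the tools already in place.

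Starting from the expansion $\langle n \rangle = \sum_{k=1}^\infty \Tr(A^k)$ with $A \sim \mathcal W$, the $k=1$ term is exactly $\Tr(A)$, while \Cref{prop:<n>_lambda_max} controls the rest by $2m\,\lambda_{\max}(A)^2$ whenever $\lambda_{\max}(A) \le 1/2$. Note also that $\langle n \rangle \ge \Tr(A)$ trivially because every $\lambda_i \ge 0$. So the plan reduces to showing two concentration statements, each failing with probability at most $\delta/2$, and then union-bounding.

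First, I would apply \Cref{lem:max_eigen_wishart} with $\epsilon = 4(\beta - 1)/\alpha^2$ to obtain
\beq
\Pr\!\left[\lambda_{\max}(A) \ge \tfrac{4\beta}{\alpha^2}\right] \le m\, e^{-2m(\beta-1)^2}.
\eeq
Under the hypothesis $\beta \ge 4$ we have $2(\beta-1)^2 \ge \tfrac{9}{8}\beta^2$, and combined with $m \ge \beta^{-2}\ln(1/\delta)$ this forces the right-hand side below $\delta/2$ after absorbing the mild $\ln m$ factor (this is the fiddliest step, but purely a constant-chasing calculation, not a conceptual obstacle). On this good event, the assumption $\alpha^2 \ge 8\beta$ yields $\lambda_{\max}(A) \le \tfrac{4\beta}{\alpha^2} \le \tfrac{1}{2}$, so \Cref{prop:<n>_lambda_max} gives
\beq
\Tr(A) \le \langle n\rangle \le \Tr(A) + 2m\left(\tfrac{4\beta}{\alpha^2}\right)^2 = \Tr(A) + \tfrac{32\beta^2 m}{\alpha^4}.
\eeq

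Second, I would control $\Tr(A)$ by Chebyshev's inequality. \Cref{lem:trace_wishart} provides $\mathbb E[\Tr(A)] = m/\alpha^2$ and $\Var[\Tr(A)] = \mathbb E[\Tr(A)^2] - \mathbb E[\Tr(A)]^2 = 1/\alpha^4$, so
\beq
\Pr\!\left[\,\left|\Tr(A) - \tfrac{m}{\alpha^2}\right| \ge \tfrac{1}{\alpha^2}\sqrt{\tfrac{2}{\delta}}\,\right] \le \tfrac{\delta}{2}.
\eeq
Combining the two good events by the union bound, with probability at least $1-\delta$ we simultaneously have $\langle n\rangle - \Tr(A) \in [0, 32\beta^2 m/\alpha^4]$ and $|\Tr(A) - m/\alpha^2| \le \alpha^{-2}\sqrt{2/\delta}$, whence $|\langle n\rangle - m/\alpha^2| \le 32\beta^2 m/\alpha^4 + \alpha^{-2}\sqrt{2/\delta}$, as claimed.

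The only real bookkeeping step is verifying that $m \ge \beta^{-2}\ln(1/\delta)$ together with $\beta \ge 4$ suffices to push $m\, e^{-2m(\beta-1)^2}$ below $\delta/2$; everything else is a clean application of \Cref{prop:<n>_lambda_max}, \Cref{lem:max_eigen_wishart}, \Cref{lem:trace_wishart}, and Chebyshev. The role of the three hypotheses is transparent in this plan: $\beta \ge 4$ makes the exponent in the maximum-eigenvalue bound large; $\alpha^2 \ge 8\beta$ ensures the truncation condition $\lambda_{\max} \le 1/2$ holds on the good event; and $m \ge \beta^{-2}\ln(1/\delta)$ is precisely what is needed to beat the $\delta/2$ target in the tail bound for $\lambda_{\max}(A)$.
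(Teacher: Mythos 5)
Your proposal is correct and follows essentially the same route as the paper's own proof: split $\langle n\rangle = \Tr(A) + \sum_{k\ge 2}\Tr(A^k)$, bound the tail via \Cref{lem:max_eigen_wishart} and \Cref{prop:<n>_lambda_max} on the event $\lambda_{\max}(A)\le 4\beta/\alpha^2\le 1/2$, control $\Tr(A)$ by Chebyshev using \Cref{lem:trace_wishart}, and union-bound. The constant-chasing step you flag does go through (e.g., $2m(\beta-1)^2 \ge \tfrac{9}{8}m\beta^2 \ge \ln(1/\delta)+2m$, so $m e^{-2m(\beta-1)^2}\le \delta\, m e^{-2m}\le \delta/2$), matching what the paper asserts without detail.
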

\begin{proof}
By Lemma~\ref{lem:max_eigen_wishart}, for any $\beta \ge 4$, we have
\beq
\Pr[\lambda_{\max}(A) \ge 4\beta/\alpha^2] \le m e^{-2m(\beta-1)^2} \le \frac{\delta}{2},
\eeq
whenever $m \ge \beta^{-2} \ln(1/\delta)$, so let us assume $\lambda_{\max}(A) \le 4\beta/\alpha^2 \le 1/2$.
Let us write $\langle n \rangle = \Tr(A) + \xi(A)$ where $\xi(A) := \sum_{k=2}^\infty \Tr(A^k)$.  Using Proposition~\ref{prop:<n>_lambda_max} and the fact that $A$ is positive semidefinite, we have that $0 \le \xi(A) \le 32 \beta^2 m/\alpha^4$.

Therefore, to understand the deviation of $\langle n \rangle$, it suffices to consider the variance of $\Tr(A)$.  By  Lemma~\ref{lem:trace_wishart}, we have $\Var[\Tr(A)] = \mathbb E[\Tr(A)^2] - \mathbb E[\Tr(A)]^2 = 1/\alpha^4$, and so, by Chebyshev's inequality, we get
\beq
\Pr\left[|\Tr(A) - \mathbb E[\Tr(A)]|
\ge \frac{1}{\alpha^2}\sqrt{\frac{2}{\delta}} \right] \le \frac{\delta}{2}.
\eeq
Combining our two bounds with the union bound, we get
\begin{align}
|\langle n \rangle - \mathbb E[\Tr(A)]|
&\le |\langle n \rangle - \Tr(A)| + |\Tr(A) - \mathbb E[\Tr(A)]| \nonumber\\
&\le |\xi(A)| + \frac{1}{\alpha^2}\sqrt{\frac{2}{\delta}} \nonumber\\
&\le \frac{32 \beta^2 m}{\alpha^4} + \frac{1}{\alpha^2}\sqrt{\frac{2}{\delta}},
\end{align}
which (using $\mathbb E[\Tr(A)] = m/\alpha^2$) gives the theorem.
\end{proof}

Therefore, assuming $n$ is roughly equal to its expected value (over both the choice of $A$ and the inherent randomness of the photon number), we can approximate $n$ as $m/\alpha^2$ whenever $\alpha = \Omega(m^{1/4})$ from Proposition~\ref{prop:<n>_lambda_max}.  What we would like to know is how much $n$ deviates from this value, for which we will need to compute the variance
\beq
\Delta^2 [n] = \langle n^2 \rangle -  \langle n \rangle^2
\eeq

Let us now write $n = \sum_{i=1}^m n_i$ where $n_i$ is the number of photons generated at source in mode $i$.  Expanding out the $\langle n^2 \rangle$ term of the variance, we get
\beq
\langle n^2 \rangle
= \sum_{i=1}^m \langle n_i^2 \rangle + \sum_{i \neq j} \langle n_i \rangle \langle n_j \rangle
= \langle n \rangle^2 + \langle n \rangle + \sum_{i=1}^m \braket{n_i}^2
\eeq
where the first equality uses $\braket{n_i n_j} = \braket{n_i}\braket{n_j}$ (i.e., the photons are generated independently at each mode), and the second equality uses the following identity for one half of a two-mode squeezed state:\footnote{See, for example, Eq.~(3.5.15) of Barnett and Radmore~\cite{barnett2002}.  For completeness, we give a self-contained proof of this fact in \Cref{sec:thermal_state_identity}.}
\beq
\braket{n_i^2} = 2\braket{n_i}^2+\braket{n_i}.
\eeq
Therefore, we can write the variance as
\beq
\Delta^2 [n] = \langle n \rangle + \sum_{i=1}^m \left(\frac{\lambda_i}{1- \lambda_i}\right)^2 = \langle n \rangle + \sum_{i=1}^m \sum_{k=2}^\infty (k-1)\lambda_i^k.
\eeq
Once again, we make the observation that when the maximum eigenvalue of $A$ is small, the higher order terms in the Taylor expansion converge.

\begin{proposition}
\label{prop:var(n)_lambda_max}
$
\Delta^2 [n] \le \langle n \rangle + 4 m \lambda_{\max}(A)^2,
$
whenever $\lambda_{\max}(A) \le 1/2$.
\end{proposition}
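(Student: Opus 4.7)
The plan is to bound the residual sum $\sum_{i=1}^m \sum_{k=2}^\infty (k-1)\lambda_i^k$ directly, using the hypothesis $\lambda_{\max}(A) \le 1/2$ to control both the convergence of the series and the size of its limit. Note that the displayed identity
\[
\Delta^2[n] = \langle n \rangle + \sum_{i=1}^m \sum_{k=2}^\infty (k-1)\lambda_i^k
\]
that precedes the proposition has already isolated the ``correction term'' we need to handle; everything else is just estimating this tail.

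First I would recognize the inner sum as the derivative of a geometric series: for any $x$ with $|x|<1$,
\[
\sum_{k=2}^\infty (k-1) x^k \;=\; x^2 \sum_{j=0}^\infty (j+1) x^j \;=\; \frac{x^2}{(1-x)^2}.
\]
Applying this with $x = \lambda_i$ (which is legitimate since each $\lambda_i \in [0, \lambda_{\max}(A)] \subseteq [0,1/2]$, so the series converges absolutely), the correction term becomes $\sum_{i=1}^m \lambda_i^2/(1-\lambda_i)^2$.

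Next, since $\lambda_i \le 1/2$ for every $i$, we have $1 - \lambda_i \ge 1/2$, hence $(1-\lambda_i)^{-2} \le 4$. Combined with $\lambda_i^2 \le \lambda_{\max}(A)^2$, each summand is at most $4\lambda_{\max}(A)^2$, and summing over the $m$ modes yields the claimed bound
\[
\sum_{i=1}^m \frac{\lambda_i^2}{(1-\lambda_i)^2} \;\le\; 4 m\, \lambda_{\max}(A)^2.
\]
Adding this to $\langle n \rangle$ gives the proposition. There is no real obstacle here; this is a short calculation that mirrors the bound used in \Cref{prop:<n>_lambda_max}, and the constant $4$ (versus $2$ in that proposition) simply reflects the extra factor of $(k-1)$ appearing in the Taylor expansion of $x/(1-x)$ differentiated termwise.
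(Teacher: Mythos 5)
Your proof is correct and follows essentially the same route as the paper: both start from the identity $\Delta^2[n] = \langle n\rangle + \sum_i \sum_{k\ge 2}(k-1)\lambda_i^k$ and bound the tail via $\sum_{k\ge 2}(k-1)x^k = x^2/(1-x)^2 \le 4x^2$ for $x \le 1/2$, the only (immaterial) difference being that you sum the series termwise in each $\lambda_i$ before bounding, whereas the paper first replaces each $\lambda_i$ by $\lambda_{\max}(A)$.
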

\begin{proof}
We have $\Delta^2 [n] = \langle n \rangle + \sum_{i=1}^m \sum_{k=2}^\infty (k-1)\lambda_i^k$.  We bound the terms of the sum as
\beq
\sum_{i=1}^m \sum_{k=2}^\infty (k-1)\lambda_i^k
\le m \sum_{k=2}^\infty (k-1)\lambda_{\max}(A)^k
\le 4 m \lambda_{\max}(A)^2
\eeq
whenever $\lambda_{\max}(A) \le 1/2$.
\end{proof}

We are finally ready to state and prove the main result for this section: a bound on how much $n$ deviates from $m/\alpha^2$ (roughly, its expected value) over both the randomness of $A \sim \mathcal W$ and the randomness of the photon number.

\begin{theorem} \label{thm:n_variance}
\beq
\Pr\left[ \left|n - \frac{m}{\alpha^2}\right| \ge \frac{2 \sqrt{m}}{\alpha \sqrt{\delta}} +  \frac{3}{\alpha \delta^{3/4}} + \frac{21 \beta \sqrt{m}}{\alpha^2\sqrt{\delta}} + \frac{32 \beta^2 m}{\alpha^4} \right] \le \delta,
\eeq
whenever $\beta \ge 4$, $\alpha^2 \ge 8 \beta$, and $m \ge 1/\beta^2 \ln(1/\delta)$.
\end{theorem}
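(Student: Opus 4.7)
The plan is to control $|n - m/\alpha^2|$ via the triangle inequality
\[
|n - m/\alpha^2| \;\le\; |n - \langle n \rangle| \;+\; |\langle n \rangle - m/\alpha^2|,
\]
where the second term is a pure random-matrix deviation over $A \sim \mathcal W$ and the first term is the intrinsic quantum fluctuation of $n$ around its expectation for fixed $A$. The second piece is handled directly by \Cref{thm:<n>_variance} applied with confidence $\delta/2$: on a ``good'' event $\mathcal E_1$ of $A$-probability at least $1-\delta/2$, we simultaneously have $|\langle n \rangle - m/\alpha^2| \le 32\beta^2 m/\alpha^4 + (2/\alpha^2)\sqrt{1/\delta}$ and $\lambda_{\max}(A) \le 4\beta/\alpha^2$ (the latter is already established inside the proof of \Cref{thm:<n>_variance}, so no new union bound is required). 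The hypothesis $\alpha^2 \ge 8\beta$ then guarantees $\lambda_{\max}(A) \le 1/2$, which is precisely what is needed to invoke Proposition~\ref{prop:var(n)_lambda_max} in the next step.

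For the first term I would condition on $A \in \mathcal E_1$ and apply Chebyshev's inequality to the photon-number distribution, which requires a pointwise upper bound on $\Delta^2[n]$. Proposition~\ref{prop:var(n)_lambda_max} gives $\Delta^2[n] \le \langle n \rangle + 4m\lambda_{\max}(A)^2$, and feeding in the bounds from $\mathcal E_1$ produces
\[
\Delta^2[n] \;\le\; \frac{m}{\alpha^2} \;+\; \frac{32\beta^2 m}{\alpha^4} \;+\; \frac{2}{\alpha^2}\sqrt{\frac{1}{\delta}} \;+\; \frac{64\beta^2 m}{\alpha^4}.
\]
Chebyshev with confidence $\delta/2$ then yields $|n-\langle n\rangle| \le \sqrt{2\Delta^2[n]/\delta}$, and I would split this four-term square root using the elementary inequality $\sqrt{a+b+c+d} \le \sqrt a + \sqrt b + \sqrt c + \sqrt d$. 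The first piece gives $\sqrt{2m}/(\alpha\sqrt\delta)$; the two $\beta^2 m/\alpha^4$ pieces combine into $(8+8\sqrt 2)\beta\sqrt m/(\alpha^2\sqrt\delta)$; and the $(2/\alpha^2)\sqrt{1/\delta}$ piece produces $\sqrt{(2/\alpha^2)\sqrt{1/\delta}\cdot (2/\delta)} = 2/(\alpha\delta^{3/4})$, the somewhat unusual $\delta^{-3/4}$ term in the statement.

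Combining the two triangle-inequality contributions by a union bound (total failure probability $\le \delta$) and collecting like terms should give exactly the four constants $2$, $3$, $21\beta$, and $32\beta^2$. The main obstacle is not any deep idea but the bookkeeping: one has to track which factors of $\sqrt{1/\delta}$ come from the Chebyshev step applied to the photon-number randomness versus those already baked into the bound on $|\langle n\rangle - m/\alpha^2|$; verify the numerical collapses $\sqrt 2 \le 2$ and $8+8\sqrt 2 \le 21$; and confirm that the leftover $(2/\alpha^2)\sqrt{1/\delta}$ from \Cref{thm:<n>_variance} can be absorbed into the $3/(\alpha\delta^{3/4})$ term using $\alpha^2 \ge 8\beta \ge 32$ (so that $2/\alpha^2 \le 1/\alpha$ up to a constant, and $\delta^{-1/2} \le \delta^{-3/4}$ for $\delta \le 1$). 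Once these elementary estimates are carried out carefully, the stated inequality follows.
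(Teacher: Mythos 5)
Your proposal is correct and follows essentially the same route as the paper: the triangle inequality $|n - m/\alpha^2| \le |n-\langle n\rangle| + |\langle n\rangle - m/\alpha^2|$, Theorem~\ref{thm:<n>_variance} for the ensemble term, Proposition~\ref{prop:var(n)_lambda_max} plus Chebyshev for the quantum fluctuation, and a union bound. The only differences are bookkeeping choices --- you split the failure probability as $\delta/2 + \delta/2$ rather than $\delta/4 + \delta/2 + \delta/4$ and absorb the leftover $2/(\alpha^2\sqrt{\delta})$ into the $\delta^{-3/4}$ term rather than the $21\beta$ term --- and your constants ($\sqrt{2}$, $8+8\sqrt{2}$) sit comfortably inside the stated ones.
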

\begin{proof}
Once again, we have $\lambda_{\max}(A) \le 4\beta/\alpha^2 \le 1/2$ with probability $\delta/4$ by Lemma~\ref{lem:max_eigen_wishart}, and so we have
\beq
\Delta^2 [n] \le \langle n \rangle + \frac{64 \beta^2 m}{\alpha^4}
\le \frac{m}{\alpha^2} + \frac{2}{\alpha^2\sqrt{\delta}} + \frac{96 \beta^2 m}{\alpha^4}
\eeq
where the first inequality comes from Proposition~\ref{prop:var(n)_lambda_max} and the second comes from Theorem~\ref{thm:<n>_variance} (which we assume holds with probability $\delta/2$).  By Chebyshev's inequality, we get
\beq
\Pr\left[|n - \langle n \rangle| \ge \sqrt{\frac{4 \Delta^2 [n]}{\delta}} \right] \le \frac{\delta}{4},
\eeq
where the probability is over the randomness in the photon number.  And so, combining the previous inequalities with the union bound, we get
\begin{align}
|n - m/\alpha^2| &\le |n - \langle n \rangle| + |\langle n \rangle - m/\alpha^2|\nonumber \\
&\le \sqrt{\frac{4\Delta^2 [n]}{\delta}} + \frac{2}{\alpha^2\sqrt{\delta}} + \frac{32 \beta^2 m}{\alpha^4} \nonumber\\
&\le \frac{2}{\sqrt{\delta}}\sqrt{\frac{m}{\alpha^2} + \frac{2}{\alpha^2\sqrt{\delta}} + \frac{96 \beta^2 m}{\alpha^4}} + \frac{2}{\alpha^2\sqrt{\delta}} + \frac{32 \beta^2 m}{\alpha^4} \nonumber\\
&\le \frac{2}{\sqrt{\delta}}\left(\sqrt{\frac{m}{\alpha^2}} + \sqrt{\frac{2}{\alpha^2\sqrt{\delta}}} + \sqrt{\frac{96 \beta^2 m}{\alpha^4}}\right) + \frac{2}{\alpha^2\sqrt{\delta}} + \frac{32 \beta^2 m}{\alpha^4} \nonumber\\
&\le \frac{2 \sqrt{m}}{\alpha \sqrt{\delta}} +  \frac{3}{\alpha \delta^{3/4}} + \frac{21 \beta \sqrt{m}}{\alpha^2\sqrt{\delta}} + \frac{32 \beta^2 m}{\alpha^4},
\end{align}
with probability $\delta$, which completes the proof.
\end{proof}

\subsection{Bounds on normalization factor} \label{subapp:boundN2}

We are now interested in the quantity
\beq
\norm = \frac{1}{\det(\mathbb{I}_m - CC^\dag)} = \prod_{i=1}^{m} \frac{1}{1- \lambda_i}
\eeq
where $\lambda_i$ is the $i$th eigenvalue of a complex Wishart matrix $A := CC^\dag \sim \mathcal W$. We begin this section with the following fact about the moment generating function of $A \sim \mathcal W$:
\begin{lemma}[Goodman \cite{goodman:1963_wishart}]
\label{lem:moment_generating_wishart}
\beq
\mathbb E[\exp(\Tr(A))] = \left(1-\frac{1}{\alpha^2 m}\right)^{-m^2}.
\eeq
\end{lemma}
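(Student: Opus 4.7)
The plan is to exploit the decomposition $\Tr(A) = \Tr(CC^\dagger) = \sum_{i,j=1}^m |C_{ij}|^2$, which turns the computation of a seemingly complicated matrix-valued expectation into the product of $m^2$ independent scalar expectations. Since the entries $C_{ij}$ of $C \sim \G$ are independent complex Gaussians with mean $0$ and variance $1/(\alpha^2 m)$, the squared magnitudes $|C_{ij}|^2$ are also independent, and
\beqs
\mathbb E[\exp(\Tr(A))] = \mathbb E\!\left[\exp\!\left(\textstyle\sum_{i,j} |C_{ij}|^2\right)\right] = \prod_{i,j=1}^m \mathbb E[\exp(|C_{ij}|^2)].
\eeqs
So the task reduces to computing the moment generating function of a single $|C_{ij}|^2$ at $t = 1$.

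The next step is to recall that for a complex Gaussian $Z$ with mean $0$ and variance $\sigma^2$---i.e.\ $\mathrm{Re}(Z)$ and $\mathrm{Im}(Z)$ are independent real Gaussians of variance $\sigma^2/2$---the random variable $|Z|^2$ is exponentially distributed with rate $1/\sigma^2$. A direct polar-coordinates integration (or equivalently, the identity $|Z|^2 = \mathrm{Re}(Z)^2 + \mathrm{Im}(Z)^2$ and the MGF of a chi-squared random variable) gives $\mathbb E[\exp(t|Z|^2)] = 1/(1 - t\sigma^2)$ for any $t < 1/\sigma^2$. Substituting $\sigma^2 = 1/(\alpha^2 m)$ and $t = 1$ yields the per-entry MGF $1/(1 - 1/(\alpha^2 m))$, which is finite provided $\alpha^2 m > 1$ (certainly satisfied in the regime of interest $\alpha \ge 6$, $m \ge 1$).

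Combining the two steps, we get
\beqs
\mathbb E[\exp(\Tr(A))] = \left(\frac{1}{1 - 1/(\alpha^2 m)}\right)^{m^2} = \left(1 - \frac{1}{\alpha^2 m}\right)^{-m^2},
\eeqs
as claimed. There is no genuine obstacle here: both ingredients (the trace decomposition and the exponential MGF) are entirely standard, and the only mild subtlety is the finiteness condition $\alpha^2 m > 1$, which is automatic in the dilute-limit parameter regime used throughout the paper.
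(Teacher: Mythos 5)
Your proof is correct. Note that the paper itself offers no proof of this statement: it is quoted as a known result of Goodman on the moment generating function of the trace of a complex Wishart matrix, so there is nothing internal to compare your argument against. What you supply is a clean, self-contained, elementary derivation: the identity $\Tr(CC^\dagger) = \sum_{i,j}|C_{ij}|^2$ reduces the claim to the MGF of a single exponentially distributed variable $|C_{ij}|^2$ with mean $1/(\alpha^2 m)$, and independence of the $m^2$ entries turns the expectation of the exponential of the sum into the product of $m^2$ identical factors $\bigl(1 - 1/(\alpha^2 m)\bigr)^{-1}$. Your observation about the finiteness condition $\alpha^2 m > 1$ is also the right caveat, and it is indeed automatic in the regime $\alpha \ge 6$ where the lemma is invoked (in the proof of \Cref{lem:boundZ}). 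The only thing your approach ``buys'' beyond the citation is transparency: the reader sees immediately why the exponent is $m^2$ (one factor per matrix entry) rather than having to trust an external reference stated for general Wishart ensembles.
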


Our goal for this section is to prove the following:
\begin{lemma}
\beq
\Pr\left[\norm \ge  \frac{2}{\delta} e^{m/\alpha^2} e^{17\beta^2 m / \alpha^4}\right] \le \delta,
\eeq
whenever $\beta \ge 4$, $\alpha^2 \ge 8 \beta$, and $m \ge 1/\beta^2 \ln(1/\delta)$.
\end{lemma}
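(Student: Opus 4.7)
The plan is to exploit the series expansion
\eq{
\log \norm = -\sum_{i=1}^m \log(1 - \lambda_i) = \sum_{k=1}^\infty \frac{\Tr(A^k)}{k} = \Tr(A) + \sum_{k=2}^\infty \frac{\Tr(A^k)}{k},\nonumber
}
and then bound each piece separately. The leading term $\Tr(A)$ has a known moment generating function (\Cref{lem:moment_generating_wishart}), so it is amenable to a Chernoff-style argument; the tail is dominated by powers of $\lambda_{\max}(A)$, which in turn is controlled by \Cref{lem:max_eigen_wishart}. The overall structure therefore mirrors the arguments of \Cref{subapp:boundn} but uses a Markov-on-exponential instead of Chebyshev so that $1/\delta$ appears as a multiplicative prefactor rather than inside the exponential.

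First I would invoke \Cref{lem:max_eigen_wishart} with an appropriate $\epsilon$ so that, for $\beta \ge 4$, $\alpha^2 \ge 8\beta$, and $m \ge \beta^{-2} \ln(1/\delta)$, we have $\lambda_{\max}(A) \le 4\beta/\alpha^2 \le 1/2$ except with probability $\delta/2$. On this event the tail series converges, and using the elementary estimate $\sum_{k \ge 2} x^k/k \le x^2/(2(1-x))$ valid for $x \in [0,1)$, together with $\Tr(A^k) \le m\,\lambda_{\max}(A)^k$, one gets
\eq{
\sum_{k=2}^\infty \frac{\Tr(A^k)}{k} \le \frac{m\,\lambda_{\max}(A)^2}{2(1-\lambda_{\max}(A))} \le 16\beta^2 m/\alpha^4.\nonumber
}

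Second, I would bound $\Tr(A)$ using Markov's inequality applied to $\exp(\Tr(A))$: for any $t>0$,
\eq{
\Pr[\Tr(A) \ge \ln t ] = \Pr[e^{\Tr(A)} \ge t] \le \frac{\mathbb E[e^{\Tr(A)}]}{t}.\nonumber
}
By \Cref{lem:moment_generating_wishart}, $\mathbb E[e^{\Tr(A)}] = (1 - 1/(\alpha^2 m))^{-m^2}$. Taking logs and using the Taylor bound $-\log(1-x) \le x + x^2/(2(1-x))$ with $x = 1/(\alpha^2 m)$ gives, under our parameter regime, $\log \mathbb E[e^{\Tr(A)}] \le m/\alpha^2 + 1/\alpha^4$. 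Choosing $t = (2/\delta)\,e^{m/\alpha^2 + 1/\alpha^4}$ ensures that $\Tr(A) \le \ln(2/\delta) + m/\alpha^2 + 1/\alpha^4$ except with probability $\delta/2$.

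Finally, a union bound over the two failure events of total probability $\delta$ yields
\eq{
\norm \le \exp\bigl(\Tr(A) + 16\beta^2 m/\alpha^4\bigr) \le \frac{2}{\delta}\, e^{m/\alpha^2}\, e^{\,1/\alpha^4 + 16\beta^2 m/\alpha^4},\nonumber
}
and one absorbs the lone $1/\alpha^4$ into the $\beta^2 m/\alpha^4$ term (using $m \ge 1 \ge 1/\beta^2$) to obtain the advertised $17\beta^2 m/\alpha^4$. The main subtlety will be choosing the Markov-on-$e^{\Tr(A)}$ route rather than Chebyshev: a second-moment approach would leave a $\sqrt{m}/\alpha^2\sqrt\delta$ deviation in $\Tr(A)$, which when exponentiated would corrupt the $e^{m/\alpha^2}$ factor with $\delta$-dependence. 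The moment-generating-function calculation is what makes the $1/\delta$ sit cleanly outside the exponential, and verifying that the Taylor truncation of $-m^2\log(1 - 1/(\alpha^2 m))$ contributes only an $O(1/\alpha^4)$ additive error (swallowed by the tail bound) is the only real bookkeeping step.
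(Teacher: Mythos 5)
Your proposal is correct and follows essentially the same route as the paper's proof in \Cref{subapp:boundN2}: expand $-\log\det(\mathbb{I}_m - A)$ as a power series in the eigenvalues, control the $k\ge 2$ tail via \Cref{lem:max_eigen_wishart}, and apply Markov's inequality to $e^{\Tr(A)}$ using the moment generating function from \Cref{lem:moment_generating_wishart}, finishing with a union bound. The minor bookkeeping choices (the $x^2/(2(1-x))$ tail estimate and absorbing the stray $1/\alpha^4$ into the $17\beta^2 m/\alpha^4$ exponent) match what the paper does.
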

\begin{proof}
Recall that $\norm =  \prod_{i=1}^{m} (1- \lambda_i)^{-1}$. We have
\begin{align}
\label{eq:log_expansion}
 \prod_{i=1}^{m} (1- \lambda_i)^{-1}
= \exp\left(- \sum_{i=1}^{m} \ln(1- \lambda_i) \right)
= \exp\left( \sum_{i=1}^{m} \sum_{k=1}^\infty \frac{\lambda_i^k}{k} \right),
\end{align}
where the last equality comes from the Taylor expansion of each term $\ln(1-\lambda_i)$.  First, we will show that we can bound the higher powers of the eigenvalues.  By Lemma~\ref{lem:max_eigen_wishart}, for any $\beta \ge 4$, we have
\begin{align} \label{eq:lamba_max_bound}
\Pr[\lambda_{\max}(A) \ge 4\beta/\alpha^2] \le m e^{-2m(\beta-1)^2} \le \frac{\delta}{2},
\end{align}
whenever $m \ge 1/\beta^2 \ln(1/\delta)$.  Therefore, let's assume that $\lambda_{\max}(A) \le 4\beta/\alpha^2$.  We obtain
\begin{align}
\label{eq:taylor_remainder}
\sum_{i=1}^{m} \sum_{k=2}^\infty \frac{\lambda_i^k}{k} \le m \sum_{k=2}^\infty \frac{(4\beta/\alpha^2)^k}{k}
\le \frac{16 \beta^2 m}{\alpha^4},
\end{align}
where we have assumed that $8 \beta \le \alpha^2$ for the last inequality.  Combining \eqref{eq:log_expansion} and \eqref{eq:taylor_remainder}, we get
\begin{align}
\label{eq:N^2_formula_expanded}
 \prod_{i=1}^{m} (1- \lambda_i)^{-1}
\le \exp\left( \sum_{i=1}^{m} \lambda_i \right) \exp \left( \frac{16 \beta^2 m}{\alpha^4} \right)
= \exp( \Tr(A) )\exp \left( \frac{16 \beta^2 m}{\alpha^4} \right).
\end{align}
By Lemma~\ref{lem:moment_generating_wishart}, we have
\begin{align}
\mathbb E[\exp \Tr(A)] = \left(1-\frac{1}{\alpha^2 m}\right)^{-m^2} \le \exp(m/\alpha^2 + 1/\alpha^4)
\end{align}
where we've used a Taylor expansion and $\alpha \ge 2$ for the last inequality.  By Markov's inequality we get
\begin{align}
\label{eq:moment_generating_function_bound}
\exp \Tr(A) \le 2 \exp(m/\alpha^2 + 1/\alpha^4) / \delta,
\end{align}
with probability at least $1 - \delta/2$.  The theorem follows from \eqref{eq:N^2_formula_expanded}, \eqref{eq:moment_generating_function_bound}, and the union bound.
\end{proof}

\subsection{Proof of thermal state identity} \label{sec:thermal_state_identity}
The purpose of this section is to give a self-contained proof of the identity
\beq
\braket{n_i^2} = 2\braket{n_i}^2+\braket{n_i},
\eeq
which is used in \Cref{subapp:boundn} to show that $n$ is close to $m/\alpha^2$ with high probability. To start, recall that in the main text we introduced the two-mode squeezing operator between modes $i$ and $i+m$ with squeezing parameter $r$ as:
$S_2(r) = \exp\left[ r \left( a_i^\dagger a_{i+m}^\dagger - a_i a_{i+m} \right)\right]  = \exp\left[r G\right]$,
where $G = a_i^\dagger a_{i+m}^\dagger - a_i a_{i+m} =-G^\dagger$ is an anti-Hermitian generator. We would like to investigate expectation values of operators $O$ on the two-mode squeezed vacuum state
\begin{align}
\braket{O} = \braket{\psi|O|\psi} = \braket{0_i 0_{i+m}|\underbrace{S_2^\dagger(r) O S_2(r)}_{\equiv O(r)}|0_i 0_{i+m}} \text{ with }
\ket{\psi} = S_2(r) \ket{0_i 0_{i+m}}.
\end{align}
To make progress, we investigate the following quantity
\begin{align}
\frac{\partial}{\partial r}O(r) &= \left[\frac{\partial}{\partial r} e^{r G^\dagger } \right] O e^{r G } + e^{r G^\dagger}  O \left[\frac{\partial}{\partial r} e^{r G^\dagger}  \right] = \left[-G e^{r G^\dagger} \right] O e^{r G } + e^{r G^\dagger } O \left[G e^{r G} \right] =S_2(r)^\dagger [O,G]S_2(r).
\end{align}
In the last equation we used the fact that $G^\dagger = -G$ and introduced the commutator $[A,B] \equiv A B- BA$. Recalling the canonical bosonic commutation relations $[a_i,a_j]=[a_i^\dagger,a_j^\dagger]=0 $ and $[a_i,a_j^\dagger] = \delta_{i,j}$, we now set $O=a_i$ to obtain
\begin{align}
\begin{array}{r l}
\frac{\partial}{\partial r} a_i(r) &= a_{i+m}^\dagger(r), \\ \frac{\partial}{\partial r} a_{i+m}^\dagger(r) &= a_{i}(r),
\end{array} \Longrightarrow
\begin{array}{r l}
a_i(r) &= a_i \cosh r + a_{i+m}^\dagger \sinh r, \\  a_{i+m}^\dagger(r) &= a_{i+m}^\dagger \cosh r + a_{i} \sinh r.
\end{array}
\end{align}
Having derived these transformations we can now calculate
\begin{align}
\braket{\psi|a_i^{\dagger k} a_i^k|\psi}
&= \braket{0_i 0_{i+m}|S_2^\dagger(r)a_i^{\dagger k} S_2(r) S_2^\dagger(r) a_i^k S_2(r)|0_i 0_{i+m}} \nonumber \\
&=\braket{0_i 0_{i+m}| \left[  a_{i}^\dagger \cosh r + a_{i+m} \sinh r \right]^k \left[a_i \cosh r + a_{i+m}^\dagger \sinh r \right]^k|0_i 0_{i+m}} \nonumber \\
&=\braket{0_i 0_{i+m}| \left[  a_{i+m} \sinh r \right]^k \left[ a_{i+m}^\dagger \sinh r \right]^k |0_i 0_{i+m}} \nonumber \\
&= k! \sinh^{2 k} r.
\end{align}
In the last equation we used the canonical commutation relations and the fact that the vacuum is annihilated by the annihilation operator.
If we set $k=1$, we easily find the mean photon number to be $\braket{n_i} = \braket{a_i^\dagger a_i} = \sinh^2 r$. For the second moment, we write  $\braket{n_i^2} = \braket{a_i^\dagger a_i a_i^\dagger a_i} = \braket{a_i^{\dagger 2} a_i^2} + \braket{a_i^\dagger a_i}$. The first term in the RHS of the last equality can be obtained by setting $k=2$ and we finally obtain $\braket{n_i^2} = 2 \braket{n_i}^2 + \braket{n_i}$, as claimed.


\section{Technical lemmas for proving hardness of permanents with repetitions} \label{app:lemma_proofs}

In this appendix, we give proofs for the technical lemmas in \Cref{sec:collision_reduction}.  To start, let us show the proof of the expectation of a permanent of a matrix with repeated rows and columns which is used to define the allowable error fluctuations in the RGPE problem (see \Cref{def:RGPE}).

\repperexpectation*
\begin{proof}
For simplicity, let us define the $n \times n$ matrix $X := A_{S,T}$ so that we have $\Per(X) = \sum_{\sigma \in \mathrm S_n} \prod_{i =1}^n X_{i,\sigma(i)}$.  Because $X$ has repeated rows and columns, notice that many terms $\prod_{i=1}^n X_{i,\sigma(i)}$ of the permanent are the same for different permutations $\sigma$.  Let us rewrite the permanent so that such terms are grouped together.  Notice that $A_{S,T}$ can be thought of as consisting of $c^2$ blocks, where the $(i,j)$th block is the constant matrix $A_{i,j}^{s_i \times t_j}$.

We block the permutations similarly.  Concretely, given a permutation matrix over $n$ elements, let $f \colon \mathrm{S}_n \to \mathbb N^{c \times c}$ be the function that sums all elements within the same block.  For example, suppose we have $S = T = (2,2,1)$.  Then
$$
\left(\begin{array}{ c c | c c | c }
0 & 1 & 0 & 0 & 0 \\
0 & 0 & 0 & 0 & 1 \\ \hline
0 & 0 & 0 & 1 & 0 \\
0 & 0 & 1 & 0 & 0 \\ \hline
1 & 0 & 0 & 0 & 0
\end{array}\right)
\;\; \xrightarrow{\;\;\; f \;\;\;} \; \;
\left(\begin{array}{ c c c }
1 & 0 & 1 \\
0 & 2 & 0 \\
1 & 0 & 0
\end{array}\right).
$$
Let $\mathcal M_{S,T} := \{ f(\sigma) : \sigma \in \mathrm S_n \}$ be the set of matrices that arise from applying $f$ to all permutations.  We now ask, given a particular matrix $M \in \mathcal M_{S,T}$, how many times is $M$ the image of some permutation $\sigma$?  Notice that if we start from a permutation matrix and permute say the first $s_1$ rows or $t_1$ columns, then its image under $f$ is invariant.  Generalizing, we get
$$
| \{ \sigma \in \mathrm S_n : f(\sigma) = M \} | = \prod_{i = 1}^c  \prod_{j=1}^c  \frac{s_i! t_j!}{M_{i,j}!}
$$
where the denominator compensates for overcounting. Therefore, we can write
\begin{align*}
\Per(X) &= \sum_{\sigma \in \mathrm S_n} \prod_{i =1}^n X_{i,\sigma(i)} \\
&= \sum_{\sigma \in \mathrm S_n} \prod_{i=1}^c \prod_{j=1}^c A_{i,j}^{f(\sigma)_{i,j}} \\
&= \sum_{M \in \mathcal M_{S,T}} \prod_{i=1}^c \prod_{j=1}^c \frac{s_i! t_j!}{M_{i,j}!} A_{i,j}^{M_{i,j}}
\end{align*}
Recall that for a Gaussian random variable $a$, we have $\mathbb E[a^m (a^*)^{m'}]$ is $m!$ when $m = m'$, and 0 otherwise.  Therefore,
\begin{align*}
\mathbb E[|\Per(X)|^2] &= \mathbb E[\Per(X)\Per(X)^*] \\
&= \sum_{M \in \mathcal M_{S,T}} \prod_{i=1}^c \prod_{j=1}^c \left(\frac{s_i! t_j!}{M_{i,j}!}\right)^2 M_{i,j}! \\
&= n! \prod_{i=1}^c s_i! \prod_{j=1}^c t_j!
\end{align*}
where the last line comes from the fact
$$
n! = | \{ \sigma \in \mathrm S_n \} |
=  \sum_{M \in \mathcal M_{S,T}} | \{ \sigma \in \mathrm S_n : f(\sigma) = M \} |
=  \sum_{M \in \mathcal M_{S,T}} \prod_{i=1}^c  \prod_{j=1}^c \frac{s_i! t_j!}{M_{i,j}!}.
$$
\end{proof}

\subsection{Expressing permanent of matrix with repetitions as polynomial}

This section contains the proof of \Cref{thm:PerAsPoly}, which expresses the permanent of a matrix with repeated rows and columns as a polynomial whose constant coefficient is proportional to a permanent with no repetitions.  We start with the proof of correctness for the construction of the matrix $B'$ given in \Cref{sec:permanentreduc}:

\begin{lemma}\label{lemma:PerAsPoly}
Let $A \in \mathbb C^{c\times c}$ and $S \in \mathbb N^c$ such that $s_i \ge 1$ is given.  Defining $n':=c+k_S$, let $B'$ be a $c \times n'$ matrix obtained by adding $k_S$ columns to the matrix $A$ with entries that depend on complex variables $z$ and $y = \{y_{i,j}^{(\ell)}\}$.  Let $B_S'$ be the $n' \times n'$ matrix which has $s_i$ copies of row $i$ of $B'$.  We construct $B'$ such that
\begin{align}
\Per(B_S') &= \xi_1 \Per(A) + \gamma_1 z + \gamma_2 z^2 + \ldots + \gamma_{k_S} z^{k_S}, \\
\xi_1 &= \prod_{i=1}^c \Bigr(s_i! \prod_{j = 1}^{s_i - 1} y_{i,j}^{(i)}\Bigr)\label{eq:alpha1}
\end{align}
where the coefficients $\gamma_1,\ldots, \gamma_{k_S}$ are functions of the entries of $A$, $S$, and $y_{i,j}^{(\ell)}$. Furthermore, whenever $A \sim \mathcal N(0, 1)^{c \times c}_\mathbb C$,  $y_{i,j}^{(\ell)} \sim \mathcal N(0, 1)_\mathbb C$ are i.i.d., and $|z| = 1$, then we have that $B' \sim \mathcal N(0, 1)^{c \times n'}_\mathbb C$.
\end{lemma}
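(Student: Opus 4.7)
The plan is to expand $\Per(B'_S)$ as a sum over permutations and group the monomials by their power of $z$, identifying the $z^0$ contribution with $\xi_1 \Per(A)$. First I would set up the structure: $B'_S$ has $n' = c + k_S$ rows arranged into $c$ \emph{row blocks}, where block $\ell$ consists of $s_\ell$ identical copies of row $\ell$ of $B'$; its columns are the $c$ columns of $A$ followed by the $k_S$ columns coming from the concatenation $[V^{(1)} \mid V^{(2)} \mid \cdots]$.

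The next step is to isolate the $z^0$ monomials in $\Per(B'_S) = \sum_{\sigma \in \mathrm{S}_{n'}} \prod_{i} (B'_S)_{i,\sigma(i)}$. By construction, a column of $V^{(\ell)}$ produces a factor of $z$ exactly when it is paired with a row of $B'_S$ that does \emph{not} come from block $\ell$. Hence the $z^0$ monomials are precisely those in which each of the $s_\ell - 1$ columns of $V^{(\ell)}$ is matched to a row of block $\ell$. Since block $\ell$ has $s_\ell$ rows, this forces exactly one row from each block $\ell$ to be matched with a column of $A$, inducing a permutation $\sigma' \in \mathrm{S}_c$ that pairs blocks with $A$-columns. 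Any other monomial must pair some $V^{(\ell)}$ column with a row outside block $\ell$, contributing at least one $z$; since there are only $k_S$ $V$-columns in total, the resulting polynomial has degree at most $k_S$ in $z$.

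The main obstacle is the combinatorial bookkeeping, which I would handle as follows. For each fixed $\sigma' \in \mathrm{S}_c$ I count the number of $\sigma \in \mathrm{S}_{n'}$ that yield the same $z^0$ monomial. For each block $\ell$ I first choose which of its $s_\ell$ rows is paired with the $A$-column $\sigma'(\ell)$ ($s_\ell$ choices), then assign the remaining $s_\ell - 1$ rows of that block bijectively to the $s_\ell - 1$ columns of $V^{(\ell)}$ ($(s_\ell - 1)!$ choices). Crucially, since all rows within block $\ell$ are identical, the monomial value does not depend on \emph{which} assignment is chosen; it is always $A_{\ell,\sigma'(\ell)} \prod_{j=1}^{s_\ell - 1} V^{(\ell)}_{\ell, j} = A_{\ell,\sigma'(\ell)} \prod_{j=1}^{s_\ell - 1} y^{(\ell)}_{\ell, j}$. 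Multiplying the multiplicity $\prod_\ell s_\ell!$ with this monomial and summing over $\sigma' \in \mathrm{S}_c$ produces $\xi_1 \Per(A)$ with $\xi_1$ as in \eqref{eq:alpha1}.

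Finally, for the distributional claim I would note that the first $c$ columns of $B'$ are exactly the columns of $A$ and are therefore i.i.d.\ standard complex Gaussian by hypothesis. Each remaining entry $V^{(\ell)}_{i,j}$ is either $y^{(\ell)}_{i,j}$ or $z\, y^{(\ell)}_{i,j}$; whenever $|z| = 1$, rotational invariance of the standard complex Gaussian implies that $z \cdot y^{(\ell)}_{i,j}$ is itself standard complex Gaussian. Independence of all the $A_{i,j}$ and $y^{(\ell)}_{i,j}$ then yields $B' \sim \mathcal N(0,1)^{c \times n'}_{\mathbb C}$, completing the proof.
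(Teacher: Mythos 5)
Your proof is correct and follows essentially the same route as the paper's: identify the $z$-free monomials of $\Per(B'_S)$ as those where every column of $V^{(\ell)}$ is matched within row block $\ell$ (the paper phrases this as a proof by contradiction, you as a direct characterization), conclude that one entry per row of $A$ survives to form a monomial of $\Per(A)$ accompanied by $\prod_{j=1}^{s_\ell-1} y^{(\ell)}_{\ell,j}$, and count the multiplicity $s_\ell \cdot (s_\ell-1)! = s_\ell!$ per block. The rotational-invariance argument for the Gaussianity of $B'$ when $|z|=1$ is likewise the intended one.
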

\begin{proof}
Recall that $B'$ is constructed as follows: for each $\ell$ such that $s_{\ell} > 1$, append to $A$ the $c\times (s_\ell-1)$ matrix $V^{(\ell)}$.  We have that $V^{(\ell)}_{ij} = y_{i,j}^{(\ell)}$ if $i = \ell$, and $z y_{i,j}^{(\ell)}$ otherwise.  We refer the reader to \Cref{sec:permanentreduc} for an explicit example of this construction.

To prove the correctness of this procedure, recall that the permanent of any matrix $C = \{c_{i,j}\}_{i,j=1}^n$ can be written as the sum of monomials $\prod_{i=1}^n c_{i, \sigma(i)}$ over all permutations $\sigma$. Let us consider the monomial terms in the expression for $\Per(B'_S)$ that are constant in $z$. We claim that for each such monomial, there must be at least one element from each row of the original matrix $A$. Suppose, by contradiction (and without loss of generality), that there is some monomial with no contribution from the elements in the first row of $A$.  By construction, it follows that the monomial must have $s_1$-many contributions from the first $s_1$ rows of the added columns (i.e., from $V^{(1)}, \ldots, V^{(c)}$).  Notice that only $V^{(1)}$ has entries in those first $s_1$ rows that are not multiplied by $z$.  However, $V^{(1)}$ only has $(s_1 -1)$-many columns, and hence the monomial in question must have some contribution from the the other added columns, which are all multiplied by the variable $z$.

Therefore, one element from each of the $c$ rows of $A$ appears in any monomial that is a constant in $z$.  Observe that these $c$ terms form a monomial of the permanent of $A$. The remaining $k_S$ terms must come from the matrices $V^{(i)}$, for which the only entries that do not depend on $z$ are of the form $y^{(i)}_{i, j}$.  Furthermore, since we cannot chose two terms from the same row, this immediately implies the only choice for the remaining terms is $\prod_{j = 1}^{s_i - 1} y_{i,j}^{(i)}$. To complete the proof, we simply need to count how many such terms occur.  For a particular monomial of the $A$, there are $s_i$ ways to choose the element $a_{i,\sigma(i)}$, and there are $(s_i - 1)!$ ways to choose the $\prod_{j = 1}^{s_i - 1} y_{i,j}^{(i)}$ term (since every permutation in the permanent of that submatrix suffices).  Eq.~\eqref{eq:alpha1} follows.  The remaining conclusions of the lemma are direct consequences of the construction, which we leave to the reader.
\end{proof}

Let us now describe in detail how \Cref{lemma:PerAsPoly} and its column version are combined to prove \Cref{thm:PerAsPoly}.  Given $A \in \mathbb C^{c\times c}$, we will construct a $(c+k_T) \times (c + k_S)$ matrix $B := B[z; x, y]$ to which we apply the repetition patterns $S = (s_1, \ldots, s_c, 1, \ldots, 1)$ and $T = (t_1, t_2, \ldots, t_c, 1, \ldots, 1)$.  There are $k_T$ trailing $1$'s in $S$, and $k_S$ trailing $1$'s in $T$. The matrix $B$ is obtained from $A$ by appending the rows/columns which arise in the following two invocations of the \Cref{lemma:PerAsPoly}:
\begin{enumerate}\setlength\itemsep{0pt}
    \item Row version with pattern $S' = (s_1, \ldots, s_c)$ to $A$ to obtain the $c \times n'$ matrix $B'$ in variables $z$ and $\{y_{i,j}^{(\ell)}\}$.
    \item Column version with pattern $T$ to the $n' \times n'$ matrix $B'_{S'}$ to obtain $k_T$ new rows in variables $z$ and $\{x_{i,j}^{(\ell)}\}$.
\end{enumerate}
Correctness of this procedure follows from the lemmas:
\begin{align}
    \Per(B_{S,T}) = \xi_2 \Per(B'_{S'}) + z p_2(z)
     = \xi_2 (\xi_1\Per(A) + z p_1(z)) + z p_2(z)
    = \xi \Per(A) + z p(z),
\end{align}
where $p(z)$ is a polynomial of degree $k-1$, and we have implicitly defined
\beq \label{eq:app_xi}
\xi := \xi_1 \xi_2 = \prod_{i=1}^cs_i!t_i! \left(\prod_{j=1}^{t_i -1}x^{(i)}_{i, j} \right) \left(\prod_{j=1}^{s_i -1}y^{(i)}_{i, j} \right).
\eeq

\subsection{Polynomial interpolation techniques}

Starting from Eq.~\eqref{eq:PerAsPoly}, which we restate,
\beq
\Per(B_{S,T}) =  \xi\Per(A) + \gamma_1 z + \gamma_2 z^2 + \ldots + \gamma_k z^k,
\eeq
we give the proofs for the polynomial interpolation theorems given in \Cref{sec:poly_interpolation} to extract the constant coefficient.  We begin with a general technique for approximating the constant term in a complex polynomial:

\begin{lemma}\label{lemma:poly_interpolation1}
Let $P(z) = \gamma_0 + \sum_{n=1}^k \gamma_n z^n$ be a complex polynomial. Assume that there is a procedure to obtain an estimate $\tilde{P}(z)=P(z) + e(z)$ for any $z$, where $e(z)$ is the error in the estimate, satisfying $|e(z)|\leq \epsilon$ for all $z$. Then
\beq
\tilde{\gamma}_0 = \frac{1}{k+1}\sum_{j=0}^k \tilde{P}\left(e^{2\pi i j/(k+1)}\right)
\eeq
is an estimate of $\gamma_0$ such that $|\tilde{\gamma}_0 - \gamma_0|\leq \epsilon$.
\end{lemma}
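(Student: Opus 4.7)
The plan is to exploit the standard discrete-Fourier / roots-of-unity filter that isolates the constant coefficient of a polynomial of degree at most $k$ by averaging it at the $(k+1)$-th roots of unity. So first I would verify, as a preliminary, the identity
\beq
\frac{1}{k+1}\sum_{j=0}^{k} \omega^{jn} = \begin{cases} 1 & \text{if } n \equiv 0 \pmod{k+1}, \\ 0 & \text{otherwise,}\end{cases}
\eeq
where $\omega := e^{2\pi i/(k+1)}$, which follows from the geometric series formula applied to $1 + \omega^n + \omega^{2n} + \cdots + \omega^{kn}$ together with $\omega^{n(k+1)} = 1$.

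Next, I would apply this identity termwise to the expansion of $P$. Substituting $z = \omega^j$ and averaging over $j$, each monomial $\gamma_n z^n$ with $1 \le n \le k$ vanishes because $n \not\equiv 0 \pmod{k+1}$, while the constant term $\gamma_0$ is preserved. This yields the exact identity
\beq
\frac{1}{k+1}\sum_{j=0}^{k} P\bigl(e^{2\pi i j/(k+1)}\bigr) = \gamma_0.
\eeq

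Finally, I would handle the error propagation. Writing $\tilde{P}(z) = P(z) + e(z)$ with $|e(z)| \le \epsilon$, linearity of the averaging operation gives
\beq
\tilde{\gamma}_0 - \gamma_0 = \frac{1}{k+1}\sum_{j=0}^{k} e\bigl(e^{2\pi i j/(k+1)}\bigr),
\eeq
and then the triangle inequality bounds this by $\frac{1}{k+1}(k+1)\epsilon = \epsilon$. There is no real obstacle here: the argument is a one-line Fourier-analytic extraction, and the remarkable feature (worth emphasizing in the writeup, since it will be used in Theorem~\ref{thm:GPEa_to_RGPEa}) is that the error does not blow up with $k$ because the averaging has operator norm $1$ on the sup norm of $e$. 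The only caveat to flag is that this crucially requires evaluating $\tilde{P}$ on the unit circle, so that each $|e(\omega^j)| \le \epsilon$ is uniformly available; this restriction is why the same technique fails for $|P(z)|^2$ (which has $|z|^2$-dependence) and motivates the different least-squares approach used for \Cref{thm:GPEpa_to_RGPEpa}.
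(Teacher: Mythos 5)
Your proposal is correct and follows essentially the same route as the paper's proof: both evaluate $P$ at the $(k+1)$-th roots of unity, use the fact that $\sum_{j=0}^{k} z_j^n = 0$ for $1 \le n \le k$ to isolate $\gamma_0$ exactly, and then bound $|\tilde{\gamma}_0 - \gamma_0|$ by the triangle inequality. The additional remarks you make about the error not growing with $k$ and about the restriction to the unit circle match the discussion surrounding Theorems~\ref{thm:GPEa_to_RGPEa} and~\ref{thm:GPEpa_to_RGPEpa} in the paper.
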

\begin{proof}
The proof is by construction. For each $j=0, 1, \ldots, k$, define $z_j=e^{2\pi i j/(k+1)}$. Then it holds that
\begin{align*}
    \sum_{j=0}^k P(z_j) &= \sum_{j=0}^k \left( \gamma_0 + \sum_{n=1}^k \gamma_n z_j^n \right)\\
    &= (k+1)\gamma_0 + \sum_{n=1}^k \gamma_n \sum_{j=0}^k z_j^n\\
    &= (k+1)\gamma_0,
\end{align*}
where we used the fact that $\sum_{j=0}^k z_j^n=0$ for all $n\leq k$. It follows that
\beq
\gamma_0 = \frac{1}{k+1}\sum_{j=0}^k P(z_j).
\eeq
We then have
\begin{align*}
    |\tilde{\gamma}_0 - \gamma_0| &= \Bigr|\frac{1}{k+1}\sum_{j=0}^k \tilde{P}(z_j)- P(z_j)\Bigr|\\
    &= \frac{1}{k+1}\Bigr|\sum_{j=0}^k e(z_j)\Bigr|\leq \epsilon,
\end{align*}
as desired.
\end{proof}

Combining lemmas \ref{lemma:PerAsPoly} and \ref{lemma:poly_interpolation1}, we have shown that given an oracle to estimate $\Per(B_{S,T})$, it is possible to also estimate $\Per(A)$ by calling the oracle $k+1$ times for values of $z$ equally distributed along the unit circle.
Indeed, defining $\tilde{P}=\tilde{\gamma_0}/\xi$, with $\xi$ as in Eq.~\eqref{eq:xi_def}, we have $|\tilde{P}-\Per(A)|$ is at most the error of the oracle times $1/|\xi|$. We now use this to provide a reduction from $\GPEa$ to $\RGPEa$, as per \Cref{def:RGPE}.

\GPEaToRGPEa*
\begin{proof}
Let $A \in \mathbb C^{c \times c}$ be the input matrix to $\GPEa$. Following \Cref{thm:PerAsPoly}, we build a matrix $B$ such that $\Per(B_{S,T})=\xi \Per(A) + z p(z)$. Recall that if we chose $|z|=1$ and the variables $x^{(\ell)}_{i, j}, y^{(\ell)}_{i, j}$ to be from the same Gaussian distribution as the entries of $A$, this ensures that $B_{S,T}$ is a Gaussian random matrix (with repetitions) that can be fed into the oracle for $\RGPEa$.

Now repeat this construction for $z_j=e^{2\pi i j/(k+1)}$ to obtain the matrices $B[z_j; x, y]_{S,T}$ for each $j=0,1,\ldots, k+1$.
For each such matrix, use the oracle to compute an estimate $\tilde{P}(z_j)$ of its permanent. Then we output an estimate $\tilde{P}$ of $\xi \Per(A)$ given by
\beq
\tilde{P} = \frac{1}{(k+1)}\sum_{j=0}^k \tilde{P}(z_j).
\eeq
Using \Cref{lemma:poly_interpolation1}, we have $|\tilde{P} - \xi \Per(A)| \le \varepsilon \sqrt{(c+k)! \prod s_i! t_j!}$, and so the error in the estimate $\tilde{P}/\xi$ for $\Per(A)$ is
\begin{align*}
    |\tilde{P}/\xi - \Per(A)| = \frac{1}{|\xi|}|\tilde{P} - \xi\Per(A)| \leq \left(\varepsilon\frac{1}{|\xi|}\sqrt{((c+k)!/c!) \prod s_i! t_j!}\right)\sqrt{c!} = \epsilon \sqrt{c!}
\end{align*}
as claimed.
\end{proof}

To understand $\epsilon$ in \Cref{thm:GPEa_to_RGPEa}, we must understand the size of $|\xi|$.  To this end, in \Cref{sec:xi_bound} we will prove that $|\xi| = \Omega(1.498^k)$ with constant probability for sufficiently large $k$.  To succeed with probability $1-\delta$ for any $\delta > 0$, we note that the random Gaussian variables appearing in the definition of $|\xi|$ come from the additional Gaussian variables (the $x^{(\ell)}_{i,j}$ and $y^{(\ell)}_{i,j}$ terms) employed in the reduction.  Therefore, if those variables are not sufficiently large, we can simply resample them $O(\log(1/\delta))$-many times until they are.

Let us now turn to our second polynomial interpolation technique, which works directly with the magnitude of the permanent:

\GPEpaToRGPEpa*
\begin{proof}
We begin by taking the absolute squared value of \Cref{eq:PerAsPoly}. This leads to an equation of the form
\beq
\left|\Per(B_{S,T})\right|^2
    = \left|\xi\right|^2 \left|\Per(A)\right|^2 + q(z) \label{eq:perzxy2}
\eeq
where $q(z)$ is polynomial in $z$ and $\bar{z}$ with $q(0)=0$. Contrary to \Cref{thm:GPEa_to_RGPEa}, we cannot compute the left-hand side for $z$ in the unit circle and leverage the corresponding cancellations, since $q(z)$ has monomials that depend on $\left|z\right|^2$. However, notice that when $z$ is real, \Cref{eq:perzxy2} is of the same form as Eq.\ (8) in \cite{AaronsonBrod}.  Unsurprisingly, the remainder of this proof closely follows that in \cite{AaronsonBrod}, and so, we will just walk through their proof outline and emphasize the differences that arise in our case but omit repetition of technical details.

The first step is to choose $x^{(i)}_{i, j}$ and $y^{(i)}_{i, j}$ in $B_{S,T}$ to be i.i.d.\ Gaussian values from $\mathcal N(0,1)_\mathbb{C}$ (i.e., the same distribution as the entries of $A$), and then choose different (real and positive) values of $z$ in the range $[1-\gamma, 1+\gamma]$, for a suitably small $\gamma$. Note that, if $\gamma = 0$, then matrix $B$ has only i.i.d.\ elements from $\mathcal N(0,1)_\mathbb{C}$, as per \Cref{thm:PerAsPoly}. When we choose a different value of $\gamma$ we induce a deformation in the distribution of $B$, since now some of its elements are drawn from $\mathcal N(0,z^2)_\mathbb{C}$ instead.

If $\gamma$ is too large, then the oracle for $\RGPEpa$ may fail since it only succeeds with high probability over the Gaussian distribution.  To bound how large $\gamma$ can be before this happen, we compute the total variation distance between the distributions over $B$ when $\gamma=0$ and when $\gamma > 0$ (its deformed version). If that distance is $O(\delta)$, then the oracle for $\RGPEpa$ will still succeed with probability $O(\delta)$ and can be invoked to compute $\left|\Per(B_{S,T})\right|^2$ to the desired precision. Luckily, there is a closed form for the Kullback-Leibler divergence between two sets of i.i.d.\ Gaussian variables which, combined with Pinsker's inequality, gives a bound on the desired total variation distance (cf.\ Lemma 1 of \cite{AaronsonBrod}). The only thing remaining is to count how many matrix elements of $B$ have actually been deformed via multiplication by $z$. Following the proof of \Cref{thm:PerAsPoly}, we can see that $(c^2+k_S+k_T)$ elements of $B$ are standard complex Gaussians, and the remaining $(c-1)k+k_S k_T$ have been deformed. This implies that we must set
\beq
\gamma \leq \frac{\delta}{\sqrt{(c-1)k+k_S k_T}},
\eeq
for the two distributions to be sufficiently close.

The second step is to apply the polynomial interpolation. This is done by invoking the oracle to estimate the left-hand side of \Cref{eq:perzxy2}, to error $\varepsilon (c+k)! \prod s_i! t_j!$, for $2k+1$ values of $c$ evenly spaced in the interval $[1-\gamma, 1+\gamma]$. This leads to a linear system which can be solved by a standard least squares method. If we denote by $\hat\beta_0$ the resulting estimator for the constant term in the polynomial of \Cref{eq:perzxy2}, the variance of this estimator is given by
\beq
\Var[\hat\beta_0] = \frac{\left(\varepsilon (c+k)! \prod s_i! t_j!\right)^2}{\gamma^{4k}},
\eeq
which is the same as Eq.\ (20) in \cite{AaronsonBrod} when translated to our variables. By Chebyshev's inequality, we get
\beq
\Pr\left[\left| |\Per(A)|^2 - \hat\beta_0 |\xi|^{-2} \right| \ge |\xi|^{-2} \sqrt{\Var[\hat\beta_0]/\delta} \right] \le \delta,
\eeq
and so we obtain a bound for $\left|\Per(A)\right|^2$ that holds with probability $1-\delta$. We identify this bound with the error in our estimate for $\left|\Per(A)\right|^2$, which needs to be at most $\epsilon c!$ by the definition of $\GPEpa$. Combining all of the above, we obtain the following relation
\beq \label{eq:partialerrorbound}
\epsilon
= \varepsilon \frac{(c+k)! \left((c-1)k + k_S k_T \right)^k}{c! X \delta^{2k + 1/2} \prod s_i! t_j!}
\le \varepsilon \frac{(c+k)^{2k} k^k}{X \delta^{2k+1/2}2^k},
\eeq
where $X = \prod_{j=1}^{t_i -1}|x^{(i)}_{i, j}|^2 \prod_{j=1}^{s_i -1}|y^{(i)}_{i, j}|^2$ is the product of $k$ independent absolute-squared standard i.i.d.\ complex Gaussian variables (coming from the definition of $\xi$ in \Cref{eq:app_xi}). Each of the $k$ terms, which we will generically call $x$, follows an exponential distribution with CDF
\beq
\Pr[x\leq g] = 1 - e^{-g}.
\eeq
If we set $g = \delta /k$, then $\Pr[x\leq g] \leq 1- (1-\delta)^{1/k}$, or in other words $\Pr[x > g] > (1-\delta)^{1/k}$.  By the independence of the $k$ terms comprising $X$, we get $\Pr[1/X < 1/g^k] > 1-\delta$.
Therefore we can assume that, with probability at least $1-\delta$, $1/X$ is at most $k^k/\delta^k$. Plugging this into \Cref{eq:partialerrorbound}, we get that
\beq
\epsilon = O\biggl(\frac{(c+k)^{2k} k^{2k}}{\delta^{3k+1/2}2^k} \varepsilon \biggr),
\eeq
which is polynomial in $c$, $\delta$ and $\varepsilon$ as long as $k=O(1)$, as desired.
\end{proof}

\subsection{Lower bound for \texorpdfstring{$|\xi|$}{reduction coefficient}} \label{sec:xi_bound}

The purpose of this section is to give a lower bound on the $|\xi|$ term coming from \Cref{eq:xi_def} in \Cref{thm:GPEa_to_RGPEa}:
\beq
|\xi| = \prod_{i=1}^cs_i!t_i! \left(\prod_{j=1}^{t_i -1}|x^{(i)}_{i, j}| \right) \left(\prod_{j=1}^{s_i -1}|y^{(i)}_{i, j}| \right). \label{eq:xi_abs}
\eeq
where $x^{(i)}_{i, j}, y^{(i)}_{i, j}$ are i.i.d.\ standard complex Gaussians. We show
\begin{theorem}\label{thm:xi_bound}
$\Pr[|\xi| \ge 1.498^k] \ge 1/2 - 4/\sqrt{k}$.
\end{theorem}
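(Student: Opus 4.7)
The plan is to lower-bound $|\xi|$ by a simple product of $k$ i.i.d.\ Gaussian magnitudes and then invoke the central limit theorem quantitatively via Berry--Esseen. The first step uses the elementary inequality $s! \geq 2^{s-1}$ for every integer $s \geq 1$ (trivial induction, since $(s+1) \geq 2$). Applying this to each $s_i!$ and $t_i!$ appearing in \Cref{eq:xi_def}, and noting $\sum_i(s_i{-}1) + \sum_i(t_i{-}1) = k_S + k_T = k$, yields the deterministic bound
\beq
|\xi| \;\geq\; 2^k \prod_{j=1}^k |z_j|,
\eeq
where $z_1,\dots,z_k$ re-indexes the $k$ i.i.d.\ standard complex Gaussians $x^{(i)}_{i,j}, y^{(i)}_{i,j}$. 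So it suffices to prove $\Pr\bigl[\sum_j \log|z_j| \geq k(\log 1.498 - \log 2)\bigr] \geq \tfrac12 - \tfrac{4}{\sqrt k}$.

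Next I would compute the relevant moments of $\log|z|$. Since $|z|^2 \sim \mathrm{Exp}(1)$, the cumulant generating function of $Y := \log|z|^2$ is $\log\Gamma(1+t)$, giving $\kappa_1(Y) = -\gamma_E$ (the Euler--Mascheroni constant) and $\kappa_n(Y) = (-1)^n(n{-}1)!\,\zeta(n)$ for $n \geq 2$. Hence $\mu := \mathbb E[\log|z|] = -\gamma_E/2$ and $\sigma^2 := \Var[\log|z|] = \pi^2/24$. The critical numerical observation is that the target threshold $\log 1.498 - \log 2 \approx -0.2892$ sits just below $\mu \approx -0.2886$, so the probability sought is at least the probability that the sample mean $\tfrac{1}{k}\sum_j \log|z_j|$ lies above $\mu$.

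Applying the Berry--Esseen theorem to $W_j := \log|z_j| - \mu$ gives
\beq
\Bigl|\Pr\bigl[\tfrac{1}{\sigma\sqrt k}\textstyle\sum_j W_j \leq x\bigr] - \Phi(x)\Bigr| \;\leq\; \frac{C\rho}{\sigma^3\sqrt k},
\eeq
for an absolute constant $C \leq 0.4785$ and $\rho := \mathbb E[|W|^3]$. Taking $x_\star := \sqrt k(\log 1.498 - \log 2 - \mu)/\sigma \leq 0$ and using $\Phi(x_\star) \leq \Phi(0) = 1/2$ produces
\beq
\Pr\Bigl[\textstyle\sum_j \log|z_j| \geq k(\log 1.498 - \log 2)\Bigr] \;\geq\; 1 - \Phi(x_\star) - \tfrac{C\rho}{\sigma^3\sqrt k} \;\geq\; \tfrac{1}{2} - \tfrac{C\rho}{\sigma^3\sqrt k}.
\eeq

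What remains is the numerical check that $C\rho/\sigma^3 \leq 4$. From the cumulant formula, $\mathbb E[W^4] = \tfrac{1}{16}(3\kappa_2^2 + \kappa_4) = \tfrac{3\pi^4}{320}$, so Cauchy--Schwarz on $|W|\cdot W^2$ yields $\rho \leq \sqrt{\sigma^2 \mathbb E[W^4]} = \sqrt{\pi^6/2560} \approx 0.62$. With $\sigma^3 = \pi^3/(24\sqrt{24}) \approx 0.264$ we obtain $C\rho/\sigma^3 \leq (0.48)(0.62)/(0.264) \lesssim 1.2$, comfortably below $4$. The only real ``obstacle'' is aesthetic: the constant $1.498$ has evidently been chosen as a round-number underestimate of $e^{\log 2 - \gamma_E/2} \approx 1.4989$, so that the threshold lies (just) below the mean and the Berry--Esseen tail bound applies with $\Phi$-value at exactly $1/2$; verifying the sign of this tiny $\sim\!5\cdot 10^{-4}$ gap is the one step that cannot be postponed.
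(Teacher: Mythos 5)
Your proposal is correct and follows essentially the same route as the paper: both arguments reduce $|\xi|$ to $2^k$ times a product of $k$ i.i.d.\ complex-Gaussian magnitudes, pass to the sum of logarithms with mean $-\gamma_E/2$ and variance $\pi^2/24$, and apply Berry--Esseen at a threshold sitting at (or just below) the mean to get the $1/2 - O(1/\sqrt{k})$ bound, with $1.498 < 2e^{-\gamma_E/2} \approx 1.4986$ supplying the slack. The only cosmetic differences are that you bound the third absolute central moment via Cauchy--Schwarz and the fourth cumulant rather than quoting its numerical value directly, and you fold the paper's intermediate lemma ($\Pr[\prod_j |z_j| \ge 0.7493^k] \ge 1/2 - 4/\sqrt{k}$) into a single step.
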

To prove this theorem, let us first focus on bounding the product of the $k$ Gaussian magnitudes that appear in Eq.~\eqref{eq:xi_abs}. To simplify the notation, let us simply rewrite this product as $X = \prod_{i = 1}^k X_i$ where each $X_i$ is the random variable for the magnitude of a standard complex Gaussian.  One can check that $X_i \sim \textrm{Rayleigh}(1/2)$ such that $\Pr[X_i = x] =2x e^{-x^2}$ for $x \in [0,\infty)$. To bound $X$, we will need the following standard central limit theorem:
\begin{theorem}[Berry-Esseen Theorem] \label{thm:berry_esseen}
Let $Z_1,\ldots,Z_k$ be real i.i.d. random variables satisfying
\begin{align*}
   \mathbb{E}[Z_i] &= \mu,\\
   \mathbb{E}[(Z_i-\mu)^2] &= \sigma^2>0,\\
   \mathbb{E}[|Z_i-\mu|^3] &= \rho<\infty
\end{align*}
Let $Z = \sum_{i=1}^k Z_i$ and let $W \sim \mathcal{N}(\mu k,\sigma^2k)_\mathbb{R}$ be a real Gaussian. Then, for all $x \in \mathbb{R}$,
\beq
|\Pr[Z > x] - \Pr[W > x]| \leq \frac{C\rho}{\sigma^3\sqrt{k}},
\eeq
where $C$ is some universal constant.
\end{theorem}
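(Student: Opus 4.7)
My plan is to follow the classical Fourier-analytic proof of Berry--Esseen, due essentially to Esseen. The first step is a reduction to the standardized case: by considering $Y_i := (Z_i - \mu)/\sigma$, the random variables $Y_i$ are i.i.d.\ with $\mathbb{E}[Y_i]=0$, $\mathbb{E}[Y_i^2]=1$, and $\mathbb{E}[|Y_i|^3]=\rho/\sigma^3=:\rho_0$. Writing $S_k := \frac{1}{\sqrt{k}}\sum_{i=1}^k Y_i$, the event $\{Z > x\}$ is identified with $\{S_k > (x-\mu k)/(\sigma\sqrt{k})\}$, and similarly for the Gaussian $W$. Thus it suffices to show that $\sup_{u \in \mathbb{R}} |F_k(u) - \Phi(u)| \le C'\rho_0/\sqrt{k}$, where $F_k$ is the CDF of $S_k$ and $\Phi$ is the standard normal CDF.

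The second step is the \emph{Esseen smoothing inequality}: for any $T>0$, and any distribution $G$ with bounded density (here $|\Phi'| \le 1/\sqrt{2\pi}$),
\begin{equation*}
\sup_u |F_k(u) - \Phi(u)| \;\le\; \frac{1}{\pi}\int_{-T}^{T} \frac{|\varphi_k(t) - e^{-t^2/2}|}{|t|}\, dt \;+\; \frac{24}{\pi T\sqrt{2\pi}},
\end{equation*}
where $\varphi_k(t) := \mathbb{E}[e^{itS_k}]$. This reduces the problem to controlling characteristic functions. I would prove this lemma by convolving $F_k - \Phi$ with a smooth kernel whose Fourier transform has compact support, and then inverting, a standard calculation.

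The third step is to Taylor-expand $\varphi(t) := \mathbb{E}[e^{itY_1}]$ near the origin. Using $|e^{iy} - 1 - iy + y^2/2| \le |y|^3/6$, one obtains $\varphi(t) = 1 - t^2/2 + r(t)$ with $|r(t)| \le \rho_0|t|^3/6$. Then $\varphi_k(t) = \varphi(t/\sqrt{k})^k$, and by taking logarithms (after checking $|\varphi(t/\sqrt{k})-1|<1/2$ on the relevant range $|t| \le T$) one gets
\begin{equation*}
\log \varphi_k(t) = -\frac{t^2}{2} + O\!\left(\frac{\rho_0 |t|^3}{\sqrt{k}}\right).
\end{equation*}
Exponentiating, $|\varphi_k(t) - e^{-t^2/2}| \le e^{-t^2/4}\cdot O(\rho_0|t|^3/\sqrt{k})$ on the region $|t| \le c\sqrt{k}/\rho_0$.

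Finally, I would choose $T = c\sqrt{k}/\rho_0$ for an appropriate constant $c$, substitute into the smoothing inequality, and compute: the integral contributes $O(\rho_0/\sqrt{k})\int_\mathbb{R} t^2 e^{-t^2/4}\,dt = O(\rho_0/\sqrt{k})$, while the boundary term contributes $O(\rho_0/\sqrt{k})$ as well. Combining yields the theorem with an absolute constant $C$. The main technical obstacle is the third step: ensuring that the logarithm expansion is valid on a window large enough that the Gaussian tail $e^{-t^2/4}$ has already made the tail of the smoothing integral negligible, which forces the precise choice of $T$ in terms of $\rho_0$. A secondary subtlety is that one must use $\rho_0 \ge 1$ (a consequence of Jensen's inequality applied to $\mathbb{E}|Y_1|^3 \ge (\mathbb{E}Y_1^2)^{3/2}=1$) so that $T \le \sqrt{k}$ and the expansion region is nontrivial.
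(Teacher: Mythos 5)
The paper does not prove this statement at all: it is the classical Berry--Esseen theorem, quoted as a known result (the paper only cites \cite{shevtsova2011absolute} later for the numerical value of the constant $C$), so there is no in-paper proof to compare against. Your outline is a faithful sketch of the standard Fourier-analytic proof via the Esseen smoothing inequality, and each step is sound: the standardization correctly converts the statement about $Z$ and $W$ into a bound on $\sup_u|F_k(u)-\Phi(u)|$; the smoothing inequality with boundary term $24\sup|\Phi'|/(\pi T)$ is stated in its standard form; the third-order Taylor bound $|e^{iy}-1-iy+y^2/2|\le|y|^3/6$ gives the correct control on the characteristic function; and the choice $T=c\sqrt{k}/\rho_0$ together with the observation $\rho_0\ge 1$ (Jensen) closes the argument with an absolute constant. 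The only caveat is that this is an outline rather than a complete proof --- the smoothing inequality and the exponentiation step $|\varphi_k(t)-e^{-t^2/2}|\le e^{-t^2/4}\cdot O(\rho_0|t|^3/\sqrt{k})$ are asserted rather than derived, and the latter requires some care with the branch of the logarithm and the comparison $\mathrm{Re}\log\varphi(t/\sqrt{k})\le -t^2/(4k)$ on the stated window --- but these are exactly the standard technical lemmas of Esseen's argument, and filling them in is routine. For the purposes of this paper, simply citing the theorem (as the authors do) is the appropriate treatment.
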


\begin{lemma}
\label{lem:X_lower_bound}
$\Pr[X \ge .7493^k] \ge 1/2 - 4/\sqrt{k}$.
\end{lemma}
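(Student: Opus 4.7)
The plan is to pass to logarithms and apply the Berry--Esseen theorem stated just above the lemma. Set $Z_i = \log X_i$, so that $\log X = \sum_{i=1}^{k} Z_i$, and observe that the constant $0.7493$ is essentially $e^{-\gamma/2}$, where $\gamma$ is the Euler--Mascheroni constant. The whole proof is then the central limit theorem, with the threshold chosen to land exactly at the mean of the Gaussian limit.

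The first block of computations establishes the three moments that feed into Berry--Esseen. Substituting $u = x^2$ in the Rayleigh density $2xe^{-x^2}$ reduces everything to integrals against the exponential distribution. Concretely, I would show
\begin{align*}
    \mu &:= \mathbb{E}[Z_i] \;=\; \tfrac{1}{2}\!\int_0^\infty (\log u)\, e^{-u}\, du \;=\; -\tfrac{\gamma}{2},\\
    \mathbb{E}[Z_i^2] &= \tfrac{1}{4}\!\int_0^\infty (\log u)^2 e^{-u}\, du \;=\; \tfrac{\gamma^2 + \pi^2/6}{4},
\end{align*}
using the classical identities $\Gamma'(1) = -\gamma$ and $\Gamma''(1) = \gamma^2 + \pi^2/6$. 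Thus $\sigma^2 := \operatorname{Var}(Z_i) = \pi^2/24$. Finiteness of $\rho := \mathbb{E}[|Z_i - \mu|^3]$ follows from the same substitution, which makes $|Z_i-\mu|^3$ integrable against $e^{-u}$ on $(0,\infty)$ (the logarithmic singularity at $0$ is harmless). Numerical evaluation then gives an explicit value for $\rho$; it will turn out small enough that the constant $4$ in the lemma is comfortably satisfied.

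Next I would apply Berry--Esseen to $Z := \sum_{i=1}^k Z_i$ at the threshold $x = k\mu$, so that the Gaussian comparison $W \sim \mathcal{N}(k\mu, k\sigma^2)_{\mathbb{R}}$ has $\Pr[W > k\mu] = 1/2$ exactly. This yields
\begin{equation*}
    \Pr[Z > k\mu] \;\ge\; \tfrac{1}{2} \;-\; \frac{C\rho}{\sigma^3\sqrt{k}}.
\end{equation*}
Since $0.7493 < e^{-\gamma/2}$ (as $\gamma/2 \approx 0.28861$ and $-\log 0.7493 \approx 0.28867$), we have $0.7493^{k} \le e^{k\mu}$, and therefore
\begin{equation*}
    \Pr[X \ge 0.7493^{k}] \;\ge\; \Pr[X \ge e^{k\mu}] \;=\; \Pr[Z \ge k\mu] \;\ge\; \tfrac{1}{2} - \frac{C\rho}{\sigma^3\sqrt{k}}.
\end{equation*}

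The only non-mechanical step is confirming that $C\rho/\sigma^3 \le 4$. Using a modern explicit Berry--Esseen constant (e.g.\ $C \le 0.4785$ from Tyurin), together with $\sigma^3 = (\pi^2/24)^{3/2}$ and a direct numerical bound on $\rho = \tfrac{1}{8}\int_0^\infty |\log u + \gamma|^3 e^{-u}\,du$, this is a straightforward (if tedious) check. I expect this constant verification to be the main obstacle in the sense of being the only thing that is not purely structural; everything else follows from elementary substitutions and the statement of Berry--Esseen already cited in the excerpt.
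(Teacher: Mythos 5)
Your proposal is correct and follows essentially the same route as the paper's proof: pass to $Z_i=\log X_i$, compute $\mathbb{E}[Z_i]=-\gamma/2$, $\operatorname{Var}(Z_i)=\pi^2/24$, and a finite third absolute central moment, then apply Berry--Esseen at the threshold $k\mu$ and use $0.7493<e^{-\gamma/2}$. The paper carries out the one step you defer, bounding $\rho<0.5136$ and using $C<0.4748$, which gives $C\rho/\sigma^3<1\le 4$, so your claim that the constant check is comfortable is indeed verified.
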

\begin{proof}
Letting $Z_i = \log(X_i)$ and $Z = \sum_{i=1}^k Z_i$, we begin by writing
\beq
X = e^{\sum_{i=1}^k \log(X_i)} = e^Z.
\eeq
One can check that $\Pr[Z_i = z] = 2e^{2z - e^{2z}}$ leading to the following moments:
\begin{align}
   \mathbb{E}[Z_i] &= -\gamma/2, \label{eq:Z_expectation} \\
   \mathbb{E}[(Z_i+\gamma/2)^2] &= \pi^2/24, \label{eq:Z_variance} \\
   \mathbb{E}[|Z_i+\gamma/2|^3] &< 0.5136 \label{eq:Z_third_moment}
\end{align}
where $\gamma \approx 0.5772$ is the Euler-Macheroni constant.
Plugging Eqs.~\eqref{eq:Z_expectation}, \eqref{eq:Z_variance}, and \eqref{eq:Z_third_moment} into the Berry-Esseen Theorem (using $C < .4748$ \cite{shevtsova2011absolute}), we get that for any $x \in \mathbb{R}$
\beq \label{eq:berry_essen_plugged_in}
|\Pr[Z > x] - \Pr[W > x]| < 4/\sqrt{k}
\eeq
where $W$ is real Gaussian with mean $-k\gamma/2$ and variance $k\pi/24$.  This immediately gives the lower bound
\begin{align*}
\Pr[Z > -k\gamma/2] &= \Pr[Z > -k\gamma/2] - \Pr[W > -k\gamma/2] + \Pr[W > -k\gamma/2] \\
&\ge \Pr[W > -k\gamma/2] - |\Pr[Z > -k\gamma/2] - \Pr[W > -k\gamma/2]| \\
&\ge 1/2 - 4/\sqrt{k}
\end{align*}
where the last line uses Eq.~\eqref{eq:berry_essen_plugged_in} and the simple fact that a Gaussian is greater than its mean with probability $1/2$.  Of course, if $Z > -k\gamma/2$, then $X > (e^{-\gamma/2})^k$, which yields the lemma.
\end{proof}

Using that $\prod_{i=1}^c s_i!t_i! \ge 2^k$ immediately gives \Cref{thm:xi_bound}.


\section{Moments of the click distribution}\label{app:clicks}
In this appendix we derive the results provided in \Cref{sec:collisions} on click moments. In \Cref{app:covmats}, we present a brief review of the Gaussian formalism in terms of covariance matrices an apply it to derive the Husimi covariance matrix of a Gaussian state in which a transition matrix has been encoded.
In \Cref{Sec:collisions}, we describe how to reduce the calculation of the mean number of clicks and its variance for an arbitrary multimode state to the calculation of one- and two-mode vacuum probabilities.  In \Cref{sec:fixing_scaling}, we show how to fix the scaling factor $\alpha$ in terms of the photon number density using the quarter circle law. Finally, in \Cref{sec:coup}, we put together all the results derived previously and derive Eqs.~\eqref{app:means} and~\eqref{equations} giving the first and second order moments of the click distribution for the two halves of the modes respectively.

\subsection{Gaussian states and covariance matrices}\label{app:covmats}
In this appendix we provide a self-contained review of the Gaussian formalism as relevant to the discussion in the main text and the subsequent appendices. We consider a Gaussian state $\rho$ with $M$ modes and zero mean. The latter condition means that $\braket{a_i} = \text{Tr}(a_i \rho)= 0$ for all $i \in \{1,\ldots, M\}$.
The state $\rho$ is uniquely characterized by its (complex-)Husimi covariance matrix ${Q}$ with entries
\begin{align}
Q_{i,j} = \tfrac{1}{2} \braket{ z_i z_j^\dagger + z_j^\dagger z_i} +\tfrac{1}{2} \delta_{i,j} = \tfrac{1}{2} \text{Tr}\left( \rho \left[ z_i z_j^\dagger + z_j^\dagger z_i\right]\right)+\tfrac{1}{2} \delta_{i,j},
\end{align}
where
\begin{align}
{z} = \left(a_1,\ldots,a_M, a_1^\dagger,\ldots,a_M^\dagger \right).
\end{align}
For $Q$ to represent a valid quantum mechanical state it needs to satisfy the uncertainty relation~\cite{FR-serafini2017}
\begin{align}
Q + \tfrac12 Z -\tfrac12 \mathbb{I}_{2M} \succeq 0,
\end{align}
where $Z = \left(\begin{smallmatrix}
\mathbb{I}_M & 0 \\
0 & -\mathbb{I}_M
\end{smallmatrix} \right)$, with $\mathbb{I}_M$ the $M \times M$ identity matrix.

The marginal state of a subset of the modes of a Gaussian state is another Gaussian state. In particular the covariance matrix of a subset of the modes $i_1,\ldots,i_K$ is obtained by keeping rows and columns $i_1,\ldots,i_K$ and $i_1+M,\ldots,i_K+M$ from the original covariance matrix $Q$.
Finally, note that the diagonal elements of the Husimi covariance matrix are related to the mean photon number in each of the modes as
\begin{align}
Q_{i,i} = Q_{i+M, i+M} = \braket{a^\dagger_i a_i} +1.
\end{align}
For a given zero-mean Gaussian state the probability of obtaining vacuum in all its modes when measuring its photon number statistics is given by
\begin{align}
\Pr(\text{vac}) = \frac{1}{\sqrt{\text{det}(Q)}}.
\end{align}
A simple expression exists linking a pure Gaussian state parametrized by an adjacency matrix $A=A^T \in \mathbb{C}^{M \times M}$ and its Husimi covariance matrix~\cite{hamilton2017,jahangiri2020point}
\begin{align}
Q = \left( \mathbb{I}_{2M} - \mathcal{X} (A \oplus A^*) \right)^{-1}, \text{ where  }\mathcal{X} = \left(\begin{smallmatrix}
0 & \mathbb{I}_M   \\
\mathbb{I}_M & 0
\end{smallmatrix} \right).
\end{align}

We now consider the case where $M = 2m$ and we use $A$ to encode a transition matrix $C \in \mathbb{C}^{m \times m}$ as follows
\begin{align}
A = \begin{pmatrix}
0 & C \\
C^T & 0
\end{pmatrix}.
\end{align}
We write the following decomposition $C = U \sigma V^T$ with $U$ and $V$ $m\times m$ unitary matrices and $\sigma$ a diagonal matrix with $0 \leq \sigma_{i,i} <1$.
We can then write
\begin{align}
\mathcal{X} (A \oplus A^*) &= \begin{pmatrix}
0 & 0 & 0 & C \\
0 & 0 & C^T & 0 \\
0 & C^* & 0 & 0\\
C^\dagger & 0 & 0 & 0
\end{pmatrix} =
F \left(
\begin{array}{cccc}
\sigma & 0 & 0 & 0 \\
0 & \sigma & 0 & 0 \\
0 & 0 & -\sigma & 0 \\
0 & 0 & 0 & -\sigma \\
\end{array}
\right)
{F}^\dagger ,
\end{align}
where
\begin{align}
{F} = \frac{1}{\sqrt{2}}\left(
\begin{array}{cccc}
U & 0 & -U & 0 \\
0 & V & 0 & -V \\
0 & U^* & 0 & U^* \\
V^* & 0 & V^* & 0 \\
\end{array}
\right),
\end{align}
is a unitary matrix of size $4m \times 4m$.
With this diagonalization it is direct to write
\begin{align}
Q = \begin{pmatrix}
X & 0 & 0 & W \\
0 & Y & W^T & 0 \\
0 & W^* & X^* & 0 \\
W^\dagger & 0 & 0 & Y^*
\end{pmatrix},
\end{align}
with
\begin{align}\label{eq:Qblocks}
X = U \frac{\mathbb{I}_m}{\mathbb{I}_m -\sigma^2}  U^\dagger = X^\dagger, \quad Y = V \frac{\mathbb{I}_m}{\mathbb{I}_m -\sigma^2}  V^\dagger = Y^\dagger, \quad W = U \frac{\sigma}{\mathbb{I}_m -\sigma^2}  V^T.
\end{align}
This form of the covariance matrix will be useful below.

\subsection{Detector click statistics}\label{Sec:collisions}
In this section we derive simple expressions for the mean and variance of the number of clicks when a quantum state is probed using threshold detectors.
We will obtain expressions for $\braket{c}$, $\braket{c^2}$ and $\Delta^2 c$ that hold for an arbitrary $M$-mode quantum state. In the next subsections we will specialize these expressions to zero-mean Gaussian states encoding a random complex-Gaussian transition matrix.

Let $\nu_c$ be a subset of $\{1,2,\dots,M\}$ where $\nu_c$ contains exactly $c$ elements. Furthermore, let $p(\nu_c)$ denote the probability of observing a click in all detectors with labels in $\nu_c$. This probability is given by~\cite{quesada2018gaussian}
\begin{align}
    p(\nu_c) = \text{Tr}\left[\rho\left(\prod_{j\in \nu_c} P_1^{(j)} \prod_{j\notin \nu_c} P_0^{(j)} \right)\right],
\end{align}
where $P_0^{(j)} = |0_j\rangle\langle 0_j|$ is a projector onto the vacuum in mode $j$ and $P_1^{(j)}=\id^{(j)}-P_0^{(j)}$.
The probability of observing $c$ clicks is then given by the sum $p(c) = \sum_{\nu_c} p(\nu_c)$.
Now consider its characteristic function:

\begin{align}
    \chi_c(x) &= \braket{e^{i c x}}=\sum_{c=0}^{M} e^{i x c} p(c) \cr &=
    \sum_{c=0}^{M} e^{ixc}\sum_{\nu_c} \text{Tr}\left[\rho\left(\prod_{k\in \nu_c}  P_1^{(k)}
        \prod_{k\notin \nu_c} P_0^{(k)}\right) \right]
        \cr &= \text{Tr}\left[\rho\sum_{c=0}^{M}\sum_{\nu_c}\left(\prod_{k\in \nu_c}  e^{ix}P_1^{(k)}
        \prod_{k\notin \nu_c} P_0^{(k)}\right) \right]
        \cr &=
        \text{Tr}\left[\rho\prod_{\ell=1}^{M} \left(e^{ix} P_1^{(\ell)} + P_0^{(\ell)}\right)\right]~,
\end{align}
where in the last line we used the binomial expansion. The expected number of detectors that click $\braket{c}$ can then be computed from the characteristic function \cite{banchi2020molecular}
\begin{align}
\braket{c} &= -i \frac{d \chi_c(x)}{dx}\Bigr|_{x=0}
\cr &=-i\text{Tr}\left.\left[\rho\frac{d}{dx}\prod_{\ell=1}^{M} \left(e^{ix} P_1^{(\ell)} + P_0^{(\ell)}\right)\right]\right|_{x=0}
\cr &= \text{Tr}\left[\rho \sum_{j=1}^{M}P_1^{(j)}\prod_{\ell\neq j}( P_1^{(\ell)} + P_0^{(\ell)})\right]
\cr &= \sum_{j=1}^{M} \text{Tr}\left[P_1^{(j)} \rho\right],
\end{align}
where we have used the product rule for derivatives. Using the fact that $P_1^{(j)} + P_0^{(j)}=\id^{(j)}$, we conclude that
\beq
\braket{c} = M - \sum_{j=1}^{M}\langle P_0^{(j)}\rangle,
\eeq
where we have defined $\langle P_0^{(j)}\rangle:= \text{Tr}\left[P_0^{(j)} \rho\right]$. Similarly, we can obtain the second moment of the number of clicks as
\beq
\braket{c^2} = \left. (-i)^2 \frac{d^2 \chi_c(x)}{dx^2} \right|_{x=0}.
\eeq
We now write
\begin{align}
\frac{d^2\chi_c(x)}{dx^2} &= \sum_{j=1}^{M} \text{Tr} \left( \rho
\frac{d}{dx}
\prod_{\ell\neq j } \left[e^{i x}P_1^{(\ell)}+P_0^{(\ell)} \right] i P_1^{(j)} e^{i x} \right)\cr
&= \sum_{j=1}^{M} \text{Tr} \left( \rho
 \Bigr\{i^2  P_1^{(j)} e^{i x} \prod_{\ell \neq j} \left[e^{i x} P_1^{(j)} + P_0^{(\ell)} \right] +  i^2 P_1^{(j)} e^{i x} \sum_{k\neq j }^{M} P_1^{(k)} e^{i x} \prod_{\ell \neq j, \ell\neq k} \left[e^{i x} P_1^{(\ell)}+ P_0^{(\ell)} \right]\Bigr\}\right)\nonumber.
\end{align}
By setting $x=0$ in the last equation we obtain
\beq
\braket{c^2} = \sum_{j=1}^{M} \text{Tr}[P_1^{(j)}\rho_j] + \sum_{j=1}^{M}\sum_{j\neq k}^{M} \text{Tr}[P_1^{(j)} P_1^{(k)}\rho].
\eeq
Equivalently, we can use $P_1^{(j)} = \id^{(j)} - P_0^{(j)}$, which gives
\beq
\braket{c^2} = M^2 - (2 M -1) \sum_{j=1}^{M} \braket{P_0^{(j)}} + 2 \sum_{j<k}^{M} \braket{P_0^{(j)} P_0^{(k)}},
\eeq
where we have defined $\braket{P_0^{(j)} P_0^{(k)}}:= \text{Tr}\left[P_0^{(j)} P_0^{(k)} \rho\right]$.
Putting these results together, the variance of the number of clicks is
\beq
\label{eq:variance}
\Delta^2 c = \sum_{j=1}^{M} \braket{P_0^{(j)}} (1-\braket{P_0^{(j)}})+ 2 \sum_{j<k} \left( \braket{P_0^{(j)} P_0^{(k)}} - \braket{P_0^{(j)}} \braket{P_0^{(k)}} \right) = \sum_{i,j=1}^{M} \left( \braket{P_0^{(j)} P_0^{(k)}} - \braket{P_0^{(j)}} \braket{P_0^{(k)}} \right).
\eeq

Since reduced states of Gaussian states can be computed efficiently~\cite{quesada2020exact}, it is possible to use these formulas to calculate the average and variance of the number of clicks for these states. Under the assumption of zero-mean Gaussian states, the two types of terms in the equations above can be written as
\begin{align}
\braket{P_0^{(j)} P_0^{(k)}}  = \frac{1}{\sqrt{\det(Q^{i,j})}}, \quad \braket{P_0^{(j)} } = \frac{1}{\sqrt{\det(Q^{j})}},
\end{align}
where $Q^{i,j}$ and $Q^{j}$ are the reduced Husimi covariance matrices of modes $i,j$ and $j$ respectively. We note that these quantities can also be estimated efficiently for Gaussian states using Monte Carlo methods as recently shown in Ref.~\cite{drummond2021simulating}.

In the following, we use these results to estimate the average number of clicks for GBS devices that are configured to encode Gaussian random matrices.

\subsection{Fixing the scaling factor in terms of the photon number density}\label{sec:fixing_scaling}
In Eq.~\eqref{mean_scaling} we used the quarter circle law to find a relation between the scaling factor $\alpha$, the number of modes $m$ and the total mean photon number. From this equation we can solve
\eq{
\alpha(\mu) = \frac{1+\mu}{\sqrt{\mu}}.
}
For future convenience we can also calculate the following matrix norms of the blocks of the Husimi covariance matrix (recall Eq.~\eqref{eq:Qblocks}) when the transition matrix has singular values satisfying the quarter circle law
\eq{
\mathbb{E}\left( ||X||^2 \right) &= \mathbb{E}\left( ||Y||^2 \right)  = \mathbb{E}\left( \sum_{i=1}^m \left(\frac{1}{1-\sigma_i^2} \right)^2 \right) = m \int_{0}^{2/\alpha(\mu)} p_{\alpha(\mu)}(\sigma) \left(\frac{1}{1-\sigma^2} \right)^2 = m \frac{ (1+\mu)}{1-\mu},\\
\mathbb{E}\left(||W||^2\right) &= \mathbb{E}\left(\sum_{i=1}^m \left(\frac{\sigma_i}{1-\sigma_i^2} \right)^2 \right) = m \int_{0}^{2/\alpha(\mu)} p_{\alpha(\mu)}(\sigma) \left(\frac{\sigma}{1-\sigma^2} \right)^2 = m \frac{\mu (1+\mu)}{1-\mu}.
}
In the equation above we used the fact that the
Frobenius norm of a matrix, $||A||^2 = \sum_{i,j=1}^m |A_{i,j}|^2$ can also be expressed as the sum of the squares of the singular values of the same matrix.

\subsection{Means and variances of the click distribution}~\label{sec:coup}
With the preliminary calculations in the last subsections we are ready to tackle the calculation of the means and covariances of the click distribution between the two sets of modes when averaged over the set of Gaussian random matrices.

For this we start with the means, for which we find the covariance of mode $i$ in the first half
\eq{
Q^{i} = \begin{pmatrix}
    X_{i,i} & 0\\
    0 &X_{i,i}
\end{pmatrix}    \longrightarrow  \braket{P_0^{(i)}} = \frac{1}{X_{i,i}} = \frac{1}{1+\braket{a_i^\dagger a_i}},
}
and similarly if $i=j+m$ is in the second half
\eq{
Q^{i} = \begin{pmatrix}
        Y_{j,j} & 0\\
        0 &Y_{j,j}
    \end{pmatrix}    \longrightarrow  \braket{P_0^{(i)}} = \frac{1}{Y_{j,j}} = \frac{1}{1+\braket{a_i^\dagger a_i}}.
}
From the last two equations we find $\braket{d} = m - \sum_{i=1}^m \frac{1}{X_{j,j}}$ and similarly $\braket{e} = m - \sum_{i=1}^m \frac{1}{Y_{j,j}}$. Now we can let $X_{j,j} \to 1+\mu$ and $Y_{j,j} \to 1+\mu$ and rearrange to find precisely Eq.~\eqref{app:means}.

Now we consider the covariances between the clicks in the two halves. For this we need the two-body covariance matrices of the modes. Consider first the case where $i$ and $j$ refer to modes in the first half of the set, i.e., $i,j\leq m$. In that case their reduced covariance matrix is given by
\eq{\label{eq:qij1half}
    Q^{i,j} = \begin{pmatrix}
        X_{i,i} & X_{i,j} & 0 & 0\\
        X_{i,j}^* & X_{j,j} & 0 & 0\\
        0 & 0 & X_{i,i} & X_{i,j}^* \\
        0 & 0 & X_{i,j} & X_{j,j}
    \end{pmatrix} \longrightarrow \braket{P_0^{(j)} P_0^{(i)}} - \braket{P_0^{(j)}} \braket{P_0^{(j)}} &= \frac{1}{X_{i,i} X_{j,j}-|X_{i,j}|^2} - \frac{1}{X_{i,i} X_{j,j}} \\
    &=\frac{1}{X_{i,i} X_{j,j} }\sum_{k=1}^{\infty} \left[ \frac{|X_{i,j}|^2}{X_{i,i}X_{j,j}} \right]^k.
}
Note that the Taylor expansion is guaranteed to converge since the uncertainty relation for the $Q$ covariance matrix (cf. Appendix \ref{app:covmats}) guarantees that $X_{i,i}X_{j,j}>|X_{i,j}|^2$.
If the two modes $i,j$ are in the second half one can obtain a similar expression to the one above by letting $X \to Y$

Finally, if $i$ and $j$ are in different halves one has
\eq{\label{eq:qijdiffhalf}
    Q^{i,j} = \begin{pmatrix}
        X_{i,i} & 0 & 0 & W_{i,j} \\
        0 & Y_{j,j} & W_{i,j} & 0\\
        0 & W_{i,j}^* & X_{i,i} & 0\\
        W_{i,j}^* & 0 & 0 & Y_{j,j}
    \end{pmatrix} \longrightarrow \braket{P_0^{(j)} P_0^{(i)}} - \braket{P_0^{(j)}} \braket{P_0^{(j)}} &= \frac{1}{X_{i,i} Y_{j,j}-|W_{i,j}|^2} - \frac{1}{X_{i,i} Y_{j,j}} \\
    &=\frac{1}{X_{i,i} Y_{j,j} }\sum_{k=1}^{\infty} \left[ \frac{|W_{i,j}|^2}{X_{i,i}Y_{j,j}} \right]^k.
}
As before, the Taylor expansion converges by virtue of the uncertainty relation.

We can now look at the variance of the total number of clicks (recall Eq.~\eqref{eq:variance}) over the two halves to write it as
\eq{
    \Delta^2[d+e] =& \sum_{i=1}^m \left(\frac{1}{X_{i,i}}+\frac{1}{Y_{i,i}}-\frac{1}{X_{i,i}^2} -\frac{1}{Y_{i,i}^2} \right) \\
    &+  \sum_{i \neq j}^m \left(  \frac{1}{X_{i,i} X_{j,j} - |X_{i,j}|^2} - \frac{1}{X_{i,i} X_{j,j}} + \frac{1}{Y_{i,i} Y_{j,j} - |Y_{i,j}|^2} - \frac{1}{Y_{i,i} Y_{j,j}} \right) \nonumber\\
    &+ 2\sum_{i, j=1}^m \left(  \frac{1}{X_{i,i} Y_{j,j} - |W_{i,j}|^2} - \frac{1}{X_{i,i} Y_{j,j}} \right). \nonumber
}
We will now subtract the quantity $\sum_{i=1}^m \frac{1}{X_{i,i}^2}+\frac{1}{Y_{i,i}^2}$ in the first term of the equation above and add it in the second. Then we will also Taylor expand the terms in the second and third line to write
\eq{
    \Delta^2 [d+e] =& \sum_{i=1}^m \left(\frac{1}{X_{i,i}}+\frac{1}{Y_{i,i}}-\frac{2}{X_{i,i}^2} -\frac{2}{Y_{i,i}^2} \right) \\
    &+  \sum_{i ,j}^m \frac{|X_{i,j}|^2}{X_{i,i}^2 X_{j,j}^2} + \frac{|Y_{i,j}|^2}{Y_{i,i}^2 Y_{j,j}^2} +\underbrace{ \sum_{i \neq j}^m \frac{1}{X_{i,i} X_{j,j}}\sum_{k=2}^\infty \left( \frac{|X_{i,j}|^2}{X_{i,i} X_{j,j}} \right)^k + \sum_{i \neq j}^m \frac{1}{Y_{i,i} Y_{j,j}}\sum_{k=2}^\infty \left( \frac{|Y_{i,j}|^2}{Y_{i,i} Y_{j,j}} \right)^k }_{\equiv\epsilon_1} \nonumber \\
    &+ 2\sum_{i, j=1}^m \frac{|W_{i,j}|^2}{X_{i,i}^2 Y_{j,j}^2}  + \underbrace{2\sum_{i, j=1}^m \frac{1}{X_{i,i} Y_{j,j}}\sum_{k=2}^\infty \left( \frac{|W_{i,j}|^2}{X_{i,i} Y_{j,j}} \right)^k}_{\equiv \epsilon_2}. \nonumber
}
In the last equation we introduced the errors $\epsilon_1, \epsilon_2 \geq 0$.
We can then write the variance as
\eq{
    \Delta^2 [d+e] =& \sum_{i=1}^m \left(\frac{1}{X_{i,i}}+\frac{1}{Y_{i,i}}-\frac{2}{X_{i,i}^2} -\frac{2}{Y_{i,i}^2} \right) +  \sum_{i ,j=1}^m \left( \frac{|X_{i,j}|^2}{X_{i,i}^2 X_{j,j}^2} + \frac{|Y_{i,j}|^2}{Y_{i,i}^2 Y_{j,j}^2} + 2 \frac{|W_{i,j}|^2}{X_{i,i}^2 Y_{j,j}^2} \right) +\epsilon,
}
where $\epsilon = \epsilon_1+\epsilon_2$. At this point we replace the diagonal elements $X_{i,i}, Y_{j,j}$ by their expectation $1+\mu$ and use the expressions for the Frobenius norms of $X$, $Y$ and $W$ to find
\eq{
\mathbb{E}\left(\Delta^2 [d+e] \right) \approx 2m \frac{(2-\mu) \mu}{(1-\mu) (1+\mu)^2},
}
where we ignore the small error $\mathbb{E}(\epsilon)$.

For the variance in the two halves we can easily write
\eq{
    \Delta^2d =& \sum_{i=1}^m \left(\frac{1}{X_{i,i}}-\frac{1}{X_{i,i}^2} \right) +  \sum_{i \neq j}^m \left(  \frac{1}{X_{i,i} X_{j,j} - |X_{i,j}|^2} - \frac{1}{X_{i,i} X_{j,j}} \right), \\
    \Delta^2e =& \sum_{i=1}^m  \left(\frac{1}{Y_{i,i}}-\frac{1}{Y_{i,i}^2} \right) +  \sum_{i \neq j}^m \left( \frac{1}{Y_{i,i} Y_{j,j} - |Y_{i,j}|^2} - \frac{1}{Y_{i,i} Y_{j,j}} \right),
}
and use the same approximations used previously to derive the results in Eq.~\eqref{app:single_cov}.
Finally, we can obtain the covariance between $d$ and $e$ and the variance of their difference by recalling that $\Delta^2[d \pm e] = \Delta^2 d + \Delta^2 e \pm \text{Cov}(d,e)$.


\section{Going beyond the dilute limit}
\label{sec:beyond_dilute}

The purpose of this section is to give evidence that \Cref{thm:main} can be extended beyond the dilute limit to show:
$\RGPEpa \in \FBPP^{\NP^\mathcal{O}}$
for some set of collision patterns that have high probability in the BipartiteGBS distribution. To do this, we first claim (without proof)\footnote{This may at first seem counterintuitive since a classical sampler may choose to favor certain collision patterns over others. However, the critical point is that the classical sampler does not know which collision pattern to favor since it will be sampled randomly.} that a generalization of  \Cref{thm:additive_estimation} exists where the subspace $\mathcal H_S$ can be replaced by any subspace $\mathcal H$ for which $S \in \mathcal H$ implies $S_{\pi,\sigma} \in \mathcal H$ for all $\pi, \sigma \in \mathrm S_m$.  Specifically, let us choose $\mathcal H$ to be the space of repetition patterns with $\mathbb E[\braket{n}]$ photon pairs and $\mathbb E[\braket{c}]/2$ clicks in each half of the modes (see Eq.~\eqref{mean_scaling} and Eq.~\eqref{app:means}, respectively). For this section, we will simply write these quantities as $n$ and $c$, respectively.  Therefore, we have
\beq
|\mathcal H| = \binom{m}{c}^2 \binom{n-1}{n-c}^2.
\eeq
Then, the generalization of \Cref{cor:perapprox} allows us to obtain a
\beq
\varepsilon \frac{\norm \alpha^{2n} m^n \prod_{i=1}^m s_i!\prod_{j=1}^m t_j! }{|{\mathcal H}|}
\eeq
approximation to $|\Per(A_{S,T})|^2$ where the pattern $(S,T)$ was chosen uniformly from $\mathcal H$.  On the other hand, to solve $\RGPEpa$, we need an $(\epsilon n! \prod_{i=1}^m s_i!\prod_{j=1}^m t_j!)$-additive approximation.  Following the same logic from \Cref{thm:main}, to obtain an $\FBPP^{\NP^\mathcal{O}}$ algorithm, we want
\beq
\label{eq:I_collision}
\mathcal I := \epsilon/\varepsilon
= \frac{\norm \alpha^{2n} m^n \prod_{i=1}^m s_i!\prod_{j=1}^m t_j!}{|{\mathcal H}|n!\prod_{i=1}^m s_i!\prod_{j=1}^m t_j!}
= \frac{\norm \alpha^{2n} m^n}{|{\mathcal H}|n!}.
\eeq
to be polynomially bounded.  Unfortunately, we can no longer use \Cref{lem:boundZ} to bound this quantity.  Because $\alpha = o(m^{1/4})$, the subdominant term of that expression ($e^{272m/\alpha^4}$) would \emph{still} contribute exponentially to Eq.~\eqref{eq:I_collision}.  Since that expression was obtained through several approximations, it is no longer tight enough.  Nevertheless, we claim that the empirical scaling of Eq.~\eqref{eq:I_collision} still appears to be polynomial.

Let us now describe how those numerics were obtained.  Given that we do not have an explicit bound on $\norm$, we must choose to estimate it somehow.  One approach would be to simply draw a random Gaussian matrix and compute $\norm$ from the eigenvalues of that matrix.  Unfortunately, this approach is quite slow in practice and we would only be able to calculate $\norm$ for relatively modest numbers of modes.  Instead, we will choose to estimate $\norm$ by an explicit formula.  First, consider the quantity
\beq
\norm^{-1} = \det(\mathbb{I}_m - A) = \prod_{i=1}^{m} (1- \lambda_i)
\eeq
for complex Wishart matrix $A \sim \mathcal W$ (see \Cref{app:bounds} for explanation of this ensemble).  In other words, the \emph{inverse} of $\norm$ is the characteristic polynomial of $A$.  The expectation of the characteristic polynomial for a Wishart matrix is well-understood:

\begin{lemma}[Edelman \cite{edelman1988eigenvalues}]
\label{lem:wishart_characteristic}
$\mathbb E[\norm^{-1}] = m! \sum_{i = 0}^m \binom{m}{i} \frac{(-1)^i}{(m-i)! (\alpha^2 m)^i}$.
\end{lemma}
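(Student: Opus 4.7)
The plan is to compute $\mathbb{E}[\det(\mathbb{I}_m - A)]$ directly by expanding the determinant in elementary symmetric polynomials of the eigenvalues and interchanging the expectation with a sum of principal minors. Since the entries of $A = CC^\dagger$ are not independent, one cannot simply expand the determinant in the matrix entries, but the symmetric-polynomial expansion sidesteps this.

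First, I would write
\beqs
\norm^{-1} = \det(\mathbb{I}_m - A) = \prod_{j=1}^m (1-\lambda_j) = \sum_{i=0}^{m} (-1)^i\, e_i(\lambda_1,\ldots,\lambda_m),
\eeqs
where $e_i$ is the $i$th elementary symmetric polynomial and $\lambda_1,\ldots,\lambda_m$ are the eigenvalues of $A$. Next I would invoke the standard identity that $e_i(\lambda_1,\ldots,\lambda_m)$ equals the sum of all $i\times i$ principal minors of $A$, i.e.
\beqs
e_i(\lambda_1,\ldots,\lambda_m) = \sum_{S \subseteq [m],\,|S|=i} \det(A_{S,S}).
\eeqs
Taking expectations and using linearity yields $\mathbb{E}[\norm^{-1}] = \sum_{i=0}^m (-1)^i \sum_{|S|=i} \mathbb{E}[\det(A_{S,S})]$.

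The main step is to evaluate $\mathbb{E}[\det(A_{S,S})]$ for a fixed $S$ with $|S|=i$. Writing $A_{S,S} = C_S C_S^\dagger$, where $C_S$ is the $i\times m$ submatrix of $C$ obtained by keeping the rows indexed by $S$, we see that $A_{S,S}$ is itself a complex Wishart matrix: it is distributed as $\mathcal{W}(0,\tfrac{1}{\alpha^2 m}\mathbb{I}_i)$ with $m$ degrees of freedom. The classical formula for the expected determinant of a complex Wishart matrix $W\sim\mathcal{W}(0,\Sigma)$ of dimension $p$ with $n\geq p$ degrees of freedom is
\beqs
\mathbb{E}[\det(W)] = \det(\Sigma)\cdot \frac{n!}{(n-p)!},
\eeqs
which can be derived from the fact that $\det(W)/\det(\Sigma)$ factors as a product of independent chi-squared variables with means $n,n-1,\ldots,n-p+1$. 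Applied with $\Sigma = \tfrac{1}{\alpha^2 m}\mathbb{I}_i$, $p=i$, and $n=m$, this gives $\mathbb{E}[\det(A_{S,S})] = \frac{1}{(\alpha^2 m)^i}\cdot\frac{m!}{(m-i)!}$, which depends only on $|S|$.

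Combining everything and using $|\{S : |S|=i\}|=\binom{m}{i}$ gives
\beqs
\mathbb{E}[\norm^{-1}] = \sum_{i=0}^m (-1)^i \binom{m}{i}\frac{m!}{(m-i)!(\alpha^2 m)^i} = m!\sum_{i=0}^m \binom{m}{i}\frac{(-1)^i}{(m-i)!(\alpha^2 m)^i},
\eeqs
as claimed. The only nontrivial ingredient is the complex-Wishart expected-determinant formula; everything else is algebra. The conceptual obstacle, if any, is resisting the temptation to expand $\det(\mathbb{I}_m - A)$ directly in the entries of $C$ (which would require tracking higher joint moments of the Gaussian entries of $C$), and instead exploiting that each principal submatrix of $A$ is itself a well-understood Wishart matrix.
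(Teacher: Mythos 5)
Your proof is correct. The paper does not prove this lemma at all---it simply cites Edelman's result---so your derivation is a self-contained replacement for the citation rather than an alternative to an in-paper argument. Every step checks out: the expansion $\det(\mathbb{I}_m - A) = \sum_{i=0}^m (-1)^i e_i(\lambda)$, the identity expressing $e_i$ as the sum of $i\times i$ principal minors, the observation that $A_{S,S} = C_S C_S^\dagger$ is itself a complex Wishart matrix $\mathcal W(0,\tfrac{1}{\alpha^2 m}\mathbb{I}_i)$ with $m$ degrees of freedom, and the expected-determinant formula $\mathbb E[\det W] = \det(\Sigma)\, n!/(n-p)!$ (which for the complex Wishart follows from the Bartlett-type factorization $\det(W)/\det(\Sigma) = \prod_{j=1}^p \tfrac12\chi^2_{2(n-j+1)}$, each factor having mean $n-j+1$). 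One can spot-check the $p=2$ case by hand: $\mathbb E[W_{11}W_{22}] - \mathbb E[|W_{12}|^2] = m^2\sigma^4 - m\sigma^4 = m(m-1)\sigma^4$, matching the formula. Your point about avoiding a direct entrywise expansion of the determinant is well taken---the principal-minor route reduces everything to first moments of smaller Wishart determinants, which is exactly the right level of granularity here.
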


In \Cref{fig:Zcalibration}, we show that the quantity $1/\mathbb E[\norm^{-1}]$ serves as a good estimate for $\norm$ by comparing it against its true value (computed directly with the eigenvalues).  In \Cref{fig:ratio}, we compute $\mathcal I$ from Eq.~\eqref{eq:I_collision} using the estimate $1/\mathbb E[\norm^{-1}]$ for $\norm$ and find a relatively benign scaling.  We conclude by reiterating that if we could \emph{prove} the accuracy of our estimator and that this scaling is only polynomial, then we arrive at our desired conclusion: $\RGPEpa \in \FBPP^{\NP^\mathcal{O}}$.  Therefore, if one were to also assume that $\RGPEpa$ is $\#\P$-hard, we obtain approximate average-case hardness for GBS in the high-collision $m = \Theta(\mathbb E[\braket{n}])$ regime.

\begin{figure}
\centering
    \input{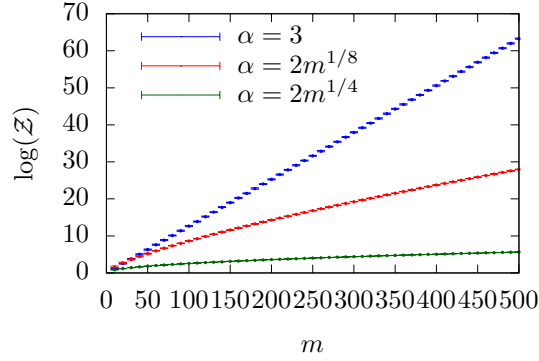}
    \caption{Comparison of $1/\mathbb E[\norm^{-1}]$ and $\norm$ as a function of $m$ and various $\alpha$ (shown on a log scale).  Each point was calculated from \Cref{lem:wishart_characteristic}, and the error bars show the sample standard deviation of $\log(\norm)$ calculated from 50 complex Wishart matrices drawn from $\mathcal W(0, \frac{1}{\alpha^2 m} \mathbb{I}_m)$.}
    \label{fig:Zcalibration}
\end{figure}

\begin{figure}
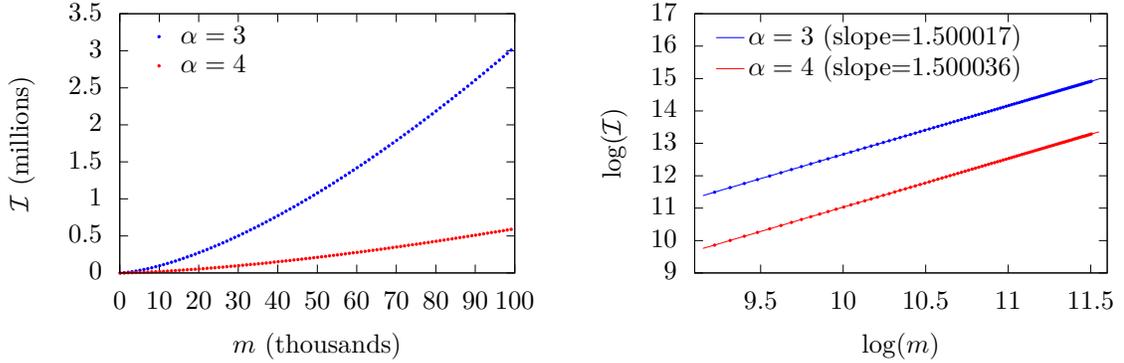
\centering
\begin{minipage}{.4\textwidth}
\input{figures/ratio.tex}
\end{minipage} \hspace{40pt}
\begin{minipage}{.4\textwidth}
\input{figures/ratio_loglog.tex}
\end{minipage}
\caption{\textit{Left}:  $\mathcal I := \frac{\norm \alpha^{2n} m^n}{|{\mathcal H}|n!}$ as a function of $m$ and various $\alpha$ (domain and range have been scaled by $10^3$ and $10^6$, respectively).  Large values of $\alpha$, such as those corresponding to the dilute limit are dwarfed by the scaling of $\mathcal I$ at these small values of $\alpha$, and so are not shown. \textit{Right}:  The same data as the left graph, except shown on a log-log plot and only values of $m$ greater than $10^4$.  Linear regression lines suggest that $\mathcal I$ scales like $m^{3/2}$ for small constant $\alpha$.}
\label{fig:ratio}
\end{figure}

\subsection{Justification for collision subspace}
\label{sec:H_justification}
The purpose of this section is to justify that the subspace $\mathcal H$ only contains collision patterns that suffice for the reduction given by \Cref{lemma:PerAsPoly}.  While it is known that \Cref{lemma:PerAsPoly} is insufficient to prove an efficient reduction between $\GPEa$ and $\RGPEa$, it is still weak evidence that the permanents that appear in the probabilities are hard to estimate.  To recap that procedure, we embed the permanent of matrix $A$ into the permanent of a matrix $B_{S,T}$ with repeated rows/columns provided the repetition patterns $S$ and $T$ have a least $k_T$ and $k_S$ modes (respectively) with a single photon.\footnote{Recall that while $k_T$ and $k_S$ are the number of collisions, they are also the number of modes that we need to have without collisions. This is due to the fact that every repeated row/column requires an \emph{unrepeated} column/row in the reduction.}  Notice that $\mathcal H$ contains repetition patterns with $k_T = k_S = n - c$.  So, to match the reduction, $n - c$ of the $c$ modes that click must have exactly one photon.  We will show that under a relatively modest condition on $\alpha$, all repetition patterns in $\mathcal H$ will satisfy this property.

While this may seem like a rather strong condition, the fact that the number of modes that click is always a significant fraction of the total number of photons makes it possible.  First notice that there are at most $n-c$ modes (of our total budget of $c$ clicks) that can have more than two photons.  For the reduction, we then need a different set of $n-c$ modes to click.  That is, we want
\beq \label{eq:c_is_big}
2(n-c) \le c.
\eeq
Using Eq.~\eqref{mean_scaling} and Eq.~\eqref{app:means}, this will hold whenever $\alpha \ge 3/\sqrt{2}$.

Finally, we claim that $A$ will be a $(2c -n) \times (2c - n)$ matrix because the total number of clicks must be the dimension of $A$ plus the $n-c$ extra clicks required for the reduction (i.e., $c = (2c - n) + (n-c)$).  In order to obtain a hardness proof, we would like $A$ to be relatively large so that computing its permanent is $\#\P$-hard.   We have that the size of $A$ is at least half the total number of clicks in the distribution whenever $2c - n \ge c/2$.  Notice that this is the same condition as that in Eq.~\eqref{eq:c_is_big}, and so we only require $\alpha \ge 3/\sqrt{2}$.  This is more than sufficient to guarantee hardness for all regimes from the dilute limit all the way to when the number of photons is linear in the number of modes.  We stress again, however, that our reduction is only efficient given an oracle for $\Per(B_{S,T})$, and the numerics of the previous section only suggest an oracle for $|\Per(B_{S,T})|^2$.
\end{document}